\documentclass[12pt]{article}

\usepackage{url}
\usepackage{bbm}
\usepackage{mathtools}
\usepackage{natbib}
\usepackage{amssymb}
\usepackage{amsthm}
\usepackage{empheq}
\usepackage{latexsym}
\usepackage{enumitem}
\usepackage{eurosym}
\usepackage{dsfont}
\usepackage{appendix}
\usepackage{algorithm}
\usepackage{algorithmicx}
\usepackage{algcompatible}
\usepackage{color}
\usepackage{subcaption}
\usepackage[unicode]{hyperref}
\usepackage{frcursive}
\usepackage[utf8]{inputenc}
\usepackage[T1]{fontenc}
\usepackage{geometry} 
\usepackage{multirow}
\usepackage{todonotes}
\usepackage{lmodern}
\usepackage{anyfontsize}
\usepackage{pgfplots}
\usepackage{stmaryrd}
\usepackage{float}
\usepackage{bookmark}

\graphicspath{ {images/} }

\pgfplotsset{compat=newest}

\definecolor{red}{rgb}{0.7,0.15,0.15}
\definecolor{green}{rgb}{0,0.5,0}
\definecolor{blue}{rgb}{0,0,0.7}
\hypersetup{colorlinks, linkcolor={red},citecolor={green}, urlcolor={blue}}
            
\makeatletter \@addtoreset{equation}{section}

\newtheorem{theorem}{Theorem}
\newtheorem{theorem2}{Theorem}[section]
\newtheorem{assumption}[theorem2]{Assumption}
\newtheorem{corollary}[theorem2]{Corollary}

\newtheorem{lemma}[theorem2]{Lemma}
\newtheorem{proposition}[theorem2]{Proposition}

\newtheorem{definition}[theorem2]{Definition}
\newtheorem{remark}[theorem2]{Remark}

\setlength\parindent{0pt}
\geometry{hmargin=1.72cm,vmargin=2.4cm}
\DeclareUnicodeCharacter{014D}{\=o}

\setcounter{secnumdepth}{4}
\DeclareMathOperator*{\esssup}{ess\,sup}


\def \E{\mathbb{E}}

\def \H{\mathbb{H}}

\def \K{\mathbb{K}}
\def \L{\mathbb{L}}

\def \P{\mathbb{P}}

\def \R{\mathbb{R}}
\def \S{\mathbb{S}}



\title{Regulation or Competition: Major-Minor Optimal Liquidation across Dark and Lit Pools}
\author{Thibaut \textsc{MASTROLIA} and Hao \textsc{WANG}\thanks{\texttt{mastrolia@berkeley.edu}, \texttt{haowang013@berkeley.edu}.}\\ UC Berkeley\\
Department of Industrial Engineering and Operations Research}

\date{}

\begin{document}
\maketitle
\begin{abstract}
  We study the optimal liquidation problem in both dark and lit pools for investors facing execution uncertainty in a continuous-time setting with market impact. First, we design an optimal make-take fee policy for a large investor liquidating her position across both pools, interacting with small investors who pay trading fees. We explicitly characterize the large investor’s optimal liquidation strategies in both dark and lit pools with the BSDE theory under a compensation scheme proposed by an exchange to mitigate market impact on the lit pool. Second, we consider a purely competitive model with major and minor traders in the absence of regulation. We provide explicit solutions to the associated HJB–Fokker–Planck equations. Finally, we illustrate our results through numerical experiments, comparing the market impact under a regulated market with a strategic large investor to that in a purely competitive market with both small and large investors.
\end{abstract}

\textbf{Key words: } Optimal liquidation, execution uncertainty, dark and lit pools, market impact, contract theory, major-minor mean field games, BSDE. 
\section{Introduction}

\subsection{Optimal execution and price impact}
Optimal execution, the process of executing large orders in financial markets to minimize trading costs and market impact induced by large trades, is a cornerstone of modern financial research. The seminal work of \cite{almgren2001optimal} introduced a mean-variance framework that optimizes trading trajectories by considering both execution speed and price impact, assuming a linear impact function. This framework has been extended to incorporate more complex market dynamics. For instance, \cite{bertsimas1998optimal} explored optimal control methods for execution costs, introducing models that account for nonlinear price impacts. \cite{huberman2004optimal} focused on liquidity trading, emphasizing the interplay between liquidity provision and price impact. \cite{bouchaud2009how} analyzed the slow digestion of supply and demand shocks in markets, revealing temporal patterns in price adjustments that persist beyond immediate trades. \cite{obizhaevawang2013optimal} introduced a supply-demand dynamic model, illustrating how order book shapes influence optimal trading strategies. More recently, \cite{forde2022optimal} analyzed execution strategies in markets with Gaussian signals and power-law recovery, offering insights into how market resilience affects trading decisions. \cite{cont2017optimal} investigated optimal order placement in limit order books, demonstrating how strategic placement across price levels can enhance execution efficiency. Continuous time models has been developped in for example \cite{schied2009risk,gueant2012optimal}.  The series of articles \cite{fu2021mean,fu2024mean1,fu2024mean2}has extended optimal liquidation problems to a large number of traders in mean-field games or mean-field control problems, while \cite{baldacci2023mean, bergault2025mean,cartea2024nash,barucci2025market} focuses on trading strategies on large scales, information asymmetry and games between market participants. These studies collectively highlight the evolving complexity of optimal execution models, incorporating factors such as stochastic volatility, trade-off between fast liquidation but high market impact, and liquidity constraints.

\subsection{Lit and Dark pool trading}
Optimal liquidation has been well investigated in the literature and the context of a classical venue, named lit pool, as emphasized above.
The complexity of this task has grown with the advent of alternative trading venues and the fragmentation of financial markets, such as dark pools, which allow anonymous trading but introduce challenges to price discovery and market transparency. Empirical studies, such as \cite{degryse2015impact}, found that dark pools can enhance liquidity for large trades but may fragment markets, potentially widening spreads. \cite{zhu2014dark} investigated whether dark pools impair price discovery, finding mixed effects depending on market conditions. Similarly, \cite{comertonforde2015dark} noted that while dark pools can improve liquidity for institutional traders, they may increase volatility and spreads in certain scenarios, highlighting the nuanced impact of dark pools on market quality. Although in general dark pools have become integral to institutional trading by reducing market impact for large orders, the liquidities are uncertain for investors. \cite{ganchev2010censored} modeled dark pool allocation as a multiarmed bandit problem, using adaptive algorithms to distribute orders across multiple dark pools. \cite{kratz2014optimal, kratz2018optimal} examined optimal liquidation in dark pools, analyzing the trade-offs between execution speed and adverse selection risks inherent in dark pool trading. \cite{perlman2021dark} approached dark pool trading from a stochastic control perspective, proposing adaptive allocation strategies to optimize execution across lit and dark venues. Recently, \cite{baldacci2019market} investigates the design of incentives schemes to improve market allocation between lit and dark pool with latency and market impact adapting a deep learning actor-critic method to solve the bilevel optimization considered.

\subsection{Problem investigated and main contributions}
This study investigates optimal execution strategies in a market with one lit exchange and multiple dark pools, where the exchange implements dynamic fees to regulate trading and reduce market impact. We compare this setting with a competitive market without fees, modeled using a major-minor mean-field game framework, to assess which structure enhances market quality. We thus develop three major axes to contribute to the optimal liquidation problem on dark and lit pools with market impact and uncertainty of execution. First, we optimally design an incentive compensation scheme and trading fees for a large trader, optimally distributing her sell orders on dark and lit pools and small traders, aggregated into one representative, sending orders on dark and lit pools. Then, we focus on a purely competitive model in which the large trader optimally liquidate on dark and lit pool faced with a crowd of strategic small trader, leading to a major-minor players optimal liquidation problem. Finally, we develop deep learning methods to solve PDEs in high dimension and bilevel optimization to illustrate our results with scenario of executions on dark pools. One of the main outcomes of this study is that a well-designed compensation/trading fees policy outperforms a purely competitive model for market quality and to monitor the market impact more efficiently.
\subsubsection{Contract theory for optimal liquidation, market impact and execution uncertainty}
Contract theory is a branch of economics that studies the design of agreements between parties with potentially conflicting interests, particularly under conditions of asymmetric information. A central concern is the asymmetry of decision or information, where the principal cannot directly index the contract in the agent's effort to manage a risky project. The theory provides frameworks for designing contracts that align incentives. Formulated as a principal-agent problem, contract theory uses tools from game theory and mechanism design to determine optimal contractual arrangements. The fundamental work of H\"olmstrom and Milgrom \cite{holmstrom1987aggregation} introduced a first continuous-time framework with Brownian motion. This theory has regained interest, especially in the mathematic community in 2000-2010 \cite{sannikov2008continuous,cvitanic2012contract,cvitanic2018dynamic,hubert2023continuous,chiusolo2024new,hernandez2024closed} and applications in optimal asset allocations and delegated portfolio management \cite{cvitanic2017moral,gennaro2024delegated,xing2025optimal}, delegated order execution \cite{larsson2025optimal}, market design \cite{baldacci2024design}, market making \cite{euch2021optimal,baldacci2021optimal} or aucitons trading \cite{mastrolia2024clearing,mastrolia2025optimal}.\\

In terms of market quality, price efficiency (or price impact) is a key metric to evaluate trading venues. The rise of dark pools and market fragmentation has led to extensive research on their effects. \cite{ohara2011market} found that fragmentation can improve liquidity and reduce trading costs, challenging the notion that it is inherently detrimental. Conversely, \cite{chen2021market} highlighted potential downsides, such as reduced transparency and increased routing complexity. Regulatory frameworks, such as the Markets in Financial Instruments Directive (MiFID), aim to enhance market quality through transparency and competition. \cite{cimon_2017} analyzed MiFID's impact, finding improvements in liquidity but mixed effects on volatility. The interplay between market structure and quality remains a critical area of study, particularly as alternative venues proliferate. Dynamic fees, which adjust based on market conditions or trading activity, are an emerging tool for exchanges to regulate behavior and mitigate market impact. Research in this area is sparse but growing. \cite{euch2021optimal} explored optimal fee structures for market making, demonstrating that fee design can influence trading behavior and market outcomes. \cite{mastrolia2025optimal} investigated optimal discount designs in auction markets, showing how incentives can enhance efficiency. In practice, some exchanges implement dynamic fee structures to adapt to market dynamics. For example, Liquidnet, a leading institutional trading network, employs dynamic fees that are adjusted daily to regulate market behavior \cite{liquidnetFee}. This approach allows Liquidnet to respond to real-time market conditions, potentially stabilizing trading activity. Recent advancements in machine learning have also been applied to optimize trading strategies in make-take fee environments.\\ 

This study contributes to the financial engineering and decision-making literature by proposing the first principal-agent problem in continuous time with optimal liquidation under uncertainty of execution in dark pools competiting with a lit pool. We investigate an exchange proposing a compensation scheme to a large trader for optimally liquidating a position on both dark and lit pools facing with small traders paying trading fees on this market venues. 
\subsubsection{Major-minor traders and optimal liquidation}

Financial markets is competitive and consists of many types of traders, especially high-frequency traders who aims to shoot large trades. Mean-field games (MFGs) provide a robust framework for modeling strategic interactions among numerous agents, making them suitable for analyzing competitive trading environments. 
The theory of mean field games has been developed in \cite{huang2006large,lasry2007mean} with an analytical approach, see for example\cite{cardaliaguet2010notes} for a review of the method used with PDE, and in \cite{carmona2018probabilistic,carmona2015probabilistic} for a probabilistic approach of the equilibirum characterization. Regarding applications in finance, 
\cite{fujii_takahashi_2020} applied MFGs to equilibrium pricing under market clearing conditions, offering a framework for analyzing market stability. \cite{huang2019mean} and \cite{cardaliaguet2018mean} explored MFGs in trading congestion scenarios, showing how collective trader behavior amplifies market impact and costs. To investigate the interaction between a large trader and small traders, \cite{carmona_lacker_2015} provided a probabilistic MFG approach for major-minor players. \cite{firoozi2017execution} formulated execution problems with major and minor traders using MFGs, aligning closely with our comparison of competitive markets. \cite{fu2021mean} analyzed optimal portfolio liquidation in an MFG framework, deriving equilibrium strategies and their impact on prices. \cite{chen2024periodic} studied periodic trading patterns in financial markets using MFGs with major and minor players, providing insights into how  competitive dynamics shape trading behavior. These works highlight the applicability of MFGs to modeling complex, multi-agent trading environments.\\

We contribute to the major-minor players litterature by proposing a new model for optimal liquidation under uncertainty between a strategic large trader faced with a crowd of strategic small trader, leading to a new system of Fokker-Planck equation that we solve explicitly with close solutions. We compare this purely competitive model with a regulation policy explored with the principal-agent method proposed above.

\subsubsection{Deep learning methods for PDEs}

All the problems presented above are reduced to systems of high dimensional PDE, with embedded problems for the principal-agent method. To solve the involved high-dimensional PDEs, \cite{raissi2024forward} and \cite{ji2022deep} developed powerful deep learning methods. To make such a framework applicable to financial market, \cite{baldacci2019market} used deep reinforcement learning to design incentives and optimize trading in dark pools. \cite{lu2025multiagent} introduced a computational
framework based on the actor-critic method in deep reinforcement learning to solve the stochastic control problem with jumps. We adapt this methods to our problem, by considering the uncertainty of execution in the dark pool with estimation of the tail of distributions for volumes sent by small traders to solve the problem.\\

The paper is structured as follows, Section \ref{sec:market} outlines the mathematical regulated market model and define the contracting problem between the exchange and a large trader facing with a crowd of small trader with fixed trading strategy, Section \ref{sec:trader_prob} describes and solved the contracting problem, focusing first on the best-reaction of the large trader given a fixed compensation scheme, then calibrating the fees and this compensation to monitor the market impact. Section \ref{sec:comp_market} discusses the purely competitive market in a major-minor mean-field game setting. Section \ref{sec:numeric} presents the algorithms we use to solve the optimal control problems and numerical results. Section \ref{sec:conclusion} concludes with implications for traders and regulators.

\section{Market modeling}\label{sec:market}

\subsection{Financial market without large trading activity}
The financial market considered is modeled through a probability space $(\Omega,\mathcal F,\mathbb P)$ where $\Omega$ is the canonical space of state variables; $\mathcal F$ is the total information available, as a fixed $\sigma-$algebra on $\Omega$ and $\mathbb P$ is the market probability measure. In this model $\Omega=\Omega^c\times \Omega^d$ where $\Omega^c=\mathcal C([0,T];\mathbb R)$ is the set of continuous maps from $[0,T]$ into $\mathbb R$,  $\Omega^d$ denotes the set of non-decreasing point processes from $[0,T]$ with values in $\mathbb R^{M}$ where $M$ is a positive integer. We denote by $W^0$ a Brownian motion on this probability space, seen as the canonical process on $\Omega^c$ and $N^0=(N^{i,0})_{1\leq i\leq M}$ the canonical process on $\Omega^d$. We denote by $\mathbb L^p$ the set of $p$-integrable and $\mathcal F_T-$measurable real random variables for any $p\geq 1$. The small traders' trading rate is given by a predictable process $\lambda$ such that $\mathbb E[\int_0^T|\lambda_s|^4 ds]<\infty$. We assume that the impact of the small traders' activity on the asset $S$ is linear with parameter $\epsilon>0$ so that the traded asset dynamic with permanent impact is given by 
\[dS_t=\epsilon \lambda_t dt+\sigma dW_t^0.\]

\subsection{Large trader impact and change of probability measure}

We now assume that a big trader sent orders on the lit and dark pools and aims at liquidating his current position due to inventory risk. Orders executed in lit market face price impact and transaction fee while sending limit orders in dark pools would lead to uncertain execution with transaction fees for executed transactions. 

\paragraph*{Price process and big trader permanent and transient impact.}
We assume that the big trader liquidate at a rate $\nu$. We define the space of admissible lit pool liquidation strategy by 
\[\mathcal V:=\{\nu \in \mathcal P^{4p}(\mathbb R_-),\; \mathcal E^\nu\text{ is a $(\mathbb P,\mathbb F)-$ martingale} \},\] where $\mathcal P^{4p}(\mathbb R_{-})$ denotes the space of predictable $\mathbb R_{-}$valued process on $(\Omega,\mathcal F,\mathbb P)$ such that 
\[\mathbb E\Big[\int_0^T |\nu_s|^{4p}ds\Big]<\infty,\text{ for some $p>1$ and } \mathcal E_t^\nu:= e^{\int_0^t \frac{\gamma \nu_s}\sigma dW_s^0 -\frac12\int_0^t \big|\frac{\gamma \nu_s}\sigma\big|^2ds  }.\]
Each strategy $\nu\in \mathcal V$ modifies the distribution of the executed $S$ as follows
\[
    dS_t = (\gamma \nu_t + \epsilon \lambda_t) dt + \sigma dW_t^{\nu} \label{eq:permanent_p}
\]
where $\gamma$ is permanent price impact of the market maker, $W^{\nu}$ is a Brownian motion under a probability measure $\mathbb P^\nu$ defined by 
$\frac{\text{d}\mathbb P^\nu}{\text{d}\mathbb P}= \mathcal E_T^\nu,$
such that $W^\nu_t:=W^0_t-\int_0^t \frac{\gamma \nu_s}\sigma ds$ is a $\mathbb P^\nu-$Brownian motion by Girsanov theorem. We moreover assume that the activity of the large trader leads to a temporary market impact such that the instantaneous executed price on the lit pool is given by
   \[ \hat{S}_t := S_t + \eta \nu_t,\]
 where  $\eta>0$ is the temporary price impact.

\paragraph*{Dark pool trading uncertainty.}

 In our framework, the large trader quotes sell limit order in a dark pool, waiting for execution. As in \cite{cartea2015algorithmic}, the matching of buy and sell orders in each dark pool $i\in\{1,\dots,M\}$ is modeled with independent Poisson rates $\{ e^{k^\theta\nu_t}\theta^{i} \}_{i=1}^M$, with $k^\theta\geq 0$ which implies that when the large trader sends a huge amount of shares to the lit market, other small traders are less likely to buy this stock in all markets. For simplicity, we suppose the parameters $\{\theta^{i} \}_{i=1}^M$ are known to the large trader. The liquidity provided by the dark pools with buy and sell orders is assumed to be a sequence of random variables $\{r^{i} \}_{i=1}^M$, which is the market order flow of other traders and is driven by current transaction fee $c^{d,i}$ for market order executed in the dark pool, assume to be an $\mathbb F-$measurable process. Following the mathematical modeling in \cite{avellaneda2008high,euch2021optimal}, we assume that
$$
r^{i}_t = A^{i} e^{-k^cc_t^{d,i}}
$$
where $A^{i}$ is a random variable with support in $[\varepsilon^A,\infty]$  with $\varepsilon^A>0$ and $k^c\in \R^+$ is a constant. Under an uncontrolled activity, or benchmark strategy, the large trader would send a random volume $G^{i}$ in each dark pool $i$ with support on $[\varepsilon^{i},\infty]$, where $\varepsilon^{i}>0$. We define the compound Poisson process $dJ^i_t= \min (G^i_t,r^i_t)dN_t^{i;0}$ under $\mathbb P^\nu$ and we set $J:=(J^i)_{i\leq M}$. We denote by $g^{\circ}=\otimes_i g^{i,\mathbb \circ}$ its jump size distribution under $\mathbb P^\nu$. The large trader's inventory under $\mathbb P^\nu$ is given by
\begin{align*}
    dQ_t := \nu_t dt - \sum_{i=1}^M\min(G_t^{i}, r^{i}_t)dN_t^{i;0}=\nu_t dt - \sum_{i=1}^M dJ_t^{i}.
\end{align*}

We assume that the large trader controls the volumes sent into each dark pool given by a sequence of processes $(\ell^i)_{i\leq M}$ with the distribution of the jump size is given by $g^\ell$. We define the set of admissible strategies on the dark pools as follows:

\[\mathcal L:=\{\ell\in \mathcal P^{4p}((\mathbb R^+)^M),\; g^{\ell}\sim g^{\circ},\; \mathbb E^{\mathbb P^\nu}[D_T]=1,\; \mathbb E^{\mathbb P^\nu}[|D_T|^q]<\infty\; \sum_{i=1}^M \ell_t^i \leq Q_t,\; \forall i\leq M,\,t\leq T \},\]
where $\mathcal P^{4p}((\mathbb R^+)^M)$ is the set of $m$-dimensional predictable process $\ell$ with positive values such that $\mathbb E\Big[\int_0^T \|\ell_s\|^{4p}ds\Big]<\infty$ for the same $p$ defined in $\mathcal V$ and 
\[D_T:=e^{\sum_{t\leq T}\psi(J_t)},\; \psi=\ln\big(\frac{dg^{\circ}}{d g^{\ell}}\big),\; q=\frac p{p-1}\] 
\begin{remark}\label{rem:integrability}
  Note that intuitively, the definition of $Q$, $\mathcal V$ and $\mathcal L$ avoid $Q$ to become negative. Suppose at time $t$ the order in at least one dark pool is executed, the instantaneous trading volume in the lit market is zero, then the trader will continuously update the trading rate regarding the updated inventory. Moreover, $q$ has to be the conjugate of $p$ appearing in the definition of $\mathcal V$ and $\mathcal L$. We will observe later that it ensures that $\mathbb E^{\mathbb P^{\nu,\ell}}[|X_T|^2]<\infty$, as a fundamental result for the integrability and well-posedness of the BSDE considered to solve the trader's optimization below.  
\end{remark}

As a consequence of \cite[Theorem 20.19]{privault2016stochastic}, \cite[Proposition 9.6]{cont2003financial} we have the following results

\begin{proposition}\label{prop:girsanov}
  There exists a probability $\mathbb P^{\nu,\ell}$ with Radon-Nikodyn derivaive given by $\frac{d\mathbb P^{\nu,\ell}}{d\mathbb 
  P^\nu}=D_T$ such that the process $J^{\ell}$ is a compound Poisson process with intensity $\{e^{k^\theta \nu_t}\theta^i\}_{i\in\{1,\dots,M\}}$ and modified jump distribution $g^\ell$.
\end{proposition}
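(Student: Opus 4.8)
The plan is to realise $\mathbb P^{\nu,\ell}$ as a pure jump-size tilt of $\mathbb P^\nu$ and then read off the characteristics of $J^{\ell}$ from the Girsanov theorem for marked point processes. First I would record that the change of measure $\tfrac{d\mathbb P^\nu}{d\mathbb P}=\mathcal E_T^\nu$ acts only on the Brownian motion $W^0$, so it leaves the law of the point processes $N^{i,0}$, hence of the compound process $J=(J^i)_{i\le M}$, unchanged: under $\mathbb P^\nu$, $J$ is a compound Poisson process whose $i$-th component has intensity $e^{k^\theta\nu_t}\theta^i$ and jump-size law $g^{i,\circ}$, with overall jump law $g^\circ=\otimes_i g^{i,\circ}$. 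Next I would check that $D_T=e^{\sum_{t\le T}\psi(J_t)}$ is a legitimate density on $\mathcal F_T$: since $g^\ell\sim g^\circ$, the ratio $e^{\psi}$ and its reciprocal are finite and strictly positive $g^\circ$-a.e., so $D_T>0$ $\mathbb P^\nu$-a.s.; $D_T$ is $\mathcal F_T$-measurable; and the normalisation $\mathbb E^{\mathbb P^\nu}[D_T]=1$ is precisely one of the defining constraints of $\ell\in\mathcal L$. Hence $\mathbb P^{\nu,\ell}$ with $\tfrac{d\mathbb P^{\nu,\ell}}{d\mathbb P^\nu}=D_T$ is a probability measure equivalent to $\mathbb P^\nu$ on $\mathcal F_T$, with density process $D_t:=\mathbb E^{\mathbb P^\nu}[D_T\mid\mathcal F_t]=e^{\sum_{s\le t}\psi(J_s)}$; the martingale property of the right-hand side follows from the independent-increments structure of the compound Poisson functional together with $\int e^{\psi}\,dg^\circ=1$ (equivalently, a non-negative local martingale with terminal mean $1=D_0$ is a true martingale), while the bound $\mathbb E^{\mathbb P^\nu}[|D_T|^q]<\infty$ built into $\mathcal L$ supplies the uniform integrability used below.

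Then I would invoke the Girsanov theorem for jump processes as in \cite[Theorem 20.19]{privault2016stochastic} and \cite[Proposition 9.6]{cont2003financial}. Writing $\mu(dt,dx)$ for the jump measure of $J$ and $\nu^{\mathbb P^\nu}(dt,dx)=\sum_i e^{k^\theta\nu_t}\theta^i\,dt\otimes g^{i,\circ}(dx^i)$ for its $\mathbb P^\nu$-compensator, the density process satisfies $dD_t=D_{t-}\int(e^{\psi(x)}-1)\,(\mu-\nu^{\mathbb P^\nu})(dt,dx)$, the drift term vanishing precisely because $\int e^{\psi}\,dg^\circ=1$; the theorem then identifies the $\mathbb P^{\nu,\ell}$-compensator of $\mu$ as $e^{\psi(x)}\nu^{\mathbb P^\nu}(dt,dx)$. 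Since, by construction, $e^{\psi(x)}$ is the likelihood ratio carrying $g^\circ$ to $g^\ell$ — so that $e^{\psi(x)}g^{i,\circ}(dx^i)=g^{i,\ell}(dx^i)$ — this compensator equals $\sum_i e^{k^\theta\nu_t}\theta^i\,dt\otimes g^{i,\ell}(dx^i)$: the time-intensity $e^{k^\theta\nu_t}\theta^i$ is preserved and only the jump-size kernel is tilted from $g^\circ$ to $g^\ell$. This is exactly the assertion that, under $\mathbb P^{\nu,\ell}$, $J^{\ell}$ is a compound Poisson process with intensities $\{e^{k^\theta\nu_t}\theta^i\}_{i\le M}$ and jump distribution $g^\ell$.

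A self-contained alternative I would keep in reserve is to condition on the jump times of $N^0$ (whose law is untouched by $D_T$, as $\mathbb E^{\mathbb P^\nu}[D_T\mid N^0]=1$): conditionally on these times and the predictable controls, the successive jump sizes are independent with laws $g^\circ$ under $\mathbb P^\nu$, $D_T$ factorises as a product over jumps of the ratios $e^{\psi(\cdot)}$, and on this conditioned finite-dimensional space the tilt simply replaces each $g^\circ$ by $g^\ell$; integrating out the unchanged jump times then gives the claim. The one genuinely delicate point — and the step I would be most careful with — is that the intensity $e^{k^\theta\nu_t}\theta^i$ is a predictable, $\nu$-dependent process rather than a constant, so the cited Girsanov statements must be read for the compensated jump measure carrying this stochastic intensity, not for a Lévy process with fixed characteristics; checking that $\nu\in\mathcal P^{4p}$ and $\mathbb E^{\mathbb P^\nu}[|D_T|^q]<\infty$ justify the integrability invoked in the density-process computation is then routine, and the rest is bookkeeping.
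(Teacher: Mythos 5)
Your proposal is correct and follows essentially the same route as the paper, which establishes the proposition simply by citing \cite[Theorem 20.19]{privault2016stochastic} and \cite[Proposition 9.6]{cont2003financial}; you merely supply the verification details (that $D_T$ is a valid density, the identification of the tilted compensator, and the caveat about the predictable, $\nu$-dependent intensity) that the paper leaves implicit. One small remark: your computation tacitly reads $e^{\psi}$ as $dg^{\ell}/dg^{\circ}$, whereas the paper literally defines $\psi=\ln\big(\frac{dg^{\circ}}{dg^{\ell}}\big)$; your reading is the one under which $\int e^{\psi}\,dg^{\circ}=1$ and the stated conclusion hold, so the discrepancy is a sign typo in the paper rather than a gap in your argument.
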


 Consequently, for any $\ell\in \mathcal L$ we can defined a probability measure $\mathbb P^{\nu,\ell}$, equivalent to $\mathbb P^\nu$ such that

\[
    dQ_t^{\nu,\ell} = \nu_tdt - \sum_{i=1}^M\min(\ell_t^{i}, r^{i}_t)dN_t^{i;\nu,\ell},
\]
where $N_t^{i;\nu,\ell}$ is a Poisson process under $\mathbb P^{\nu,\ell}$ with rate $e^{k^\theta \nu_t}\theta^{i}$ and the initial inventory is $Q_0$.

\paragraph*{Cash process and fees.}
Each market order sent to dark and lit pools is subjected to fees $c^{d,i},c^l$ for both small and large investors. The corresponding cash process is given by

\[
    dX_t^{\nu,\ell} = -\hat{S}_t\nu_t dt + c_t^l \nu_t  dt + \sum_{i=1}^M  S_t\min(\ell_t^{i}, r^{i})dN_t^{i;\nu,\ell}
\]
Note that the first term in $dt$ represents the results of trader activity with transient impact on the lit pool recalling that $\nu\leq 0$ if the large trader takes a seller position. The second term results from the fee $c_t^l$ paid per order on the lit pool to the exchange while the third term represent the gain and losses of large trader activity on the dark pools with volumes uncertainty. 
Finally, the cash dynamic can be rewritten

\begin{equation*}
    dX_t=  -\Big((S_t+\eta\nu_t)\nu_t - c_t^l \nu_t \Big)dt + \sum_{i=1}^M  S_t\min(\ell_t^{i}, r^{i})dN_t^{i;\nu,\ell}.
\end{equation*}

\paragraph*{Rebate and large trader optimization}
We assume that the large trader receives a $\mathcal F_T-$measurable compensation $\xi$ at time $T$ offered by the exchange. The goal of the trader is then to maximize the compensation received, the cash flow process $X_T$, his inventory value $Q_TS_T$ at time $T$ with terminal and running inventory risk quadratic penalty $-\alpha Q_T^2$ and $-\phi \int_0^TQ_s^s ds$. The objective is then defined by 
\begin{equation}
\label{optim:agent}
    V_0(\xi,c^l,c^d) = \sup_{(\nu, \ell)\in \mathcal A} \mathbb{E}^{\P^{\nu,\ell}}\bigg[ \Psi\Big(\xi + X_{T} + Q_{T}(S_{T} - \alpha Q_{T})- \phi\int_0^{T}  Q_t^2 dt \Big) \bigg],
\end{equation}
where $\mathcal A$ is a subset of $\mathcal V\times \mathcal L$ seen as the set of admissible strategies defined below, $\Psi:\mathbb R\longrightarrow \mathbb R$ denotes the utility of the large trader, assume to be increasing and nonconvex. In this study, we will consider two types of utility for the large trader.
    \begin{itemize}
        \item \textbf{Linear utility}. We assume in this case that $\Psi$ is a linear function, so that the optimization \eqref{optim:agent} is
  \[
    V_0(\xi,c^l,c^d) = \sup_{(\nu, \ell)\in \mathcal A^\circ} \mathbb{E}^{\P^{\nu,\ell}}\bigg[  \xi + X_{T} + Q_{T}(S_{T} - \alpha Q_{T})- \phi\int_0^{T} Q_t^2 dt \bigg],
\]
where $\mathcal A^\circ$ denotes the subset of $\mathcal V\times \mathcal L$ such that
\[-2 \eta + \alpha (k^\theta)^2 e^{k^\theta \nu_t} \sum_{i=1}^M \theta^i \E \left[ \min(\ell^{i}_t, r^i_t) \left( 2q - \min(\ell^{i}_t, r^i_t) \right) \right]<0,\; \text{for any $t\leq T$ and $q\in [0,Q_0]$}. \]
\item \textbf{Exponential utility.} We assume that $\Psi(x)=-e^{\rho x}$ where $\rho>0$ is the risk-aversion of the large trader so that \eqref{optim:agent} becomes
  \[
    V_0(\xi,c^l,c^d) = \sup_{(\nu, \ell)\in \mathcal A^\rho} \mathbb{E}^{\P^{\nu,\ell}}\bigg[ -e^{-\rho\Big( \xi + X_{T} + Q_{T}(S_{T} - \alpha Q_{T})- \phi\int_0^{T} \mathcal K_t^\nu Q_t^2 dt\Big)} \bigg],
\]
where $\mathcal A^\rho$ denotes the subset of $\mathcal V\times \mathcal L$ such that

\[-2 \eta - \frac{1}{\rho} (k^\theta)^2 e^{k^\theta \nu_t} \sum_{i=1}^M \theta^i \left( e^{-\rho u^i_t} \E\left[ e^{-\rho \alpha \min(\ell^i_t, r^i_t) \left( 2q - \min(\ell^i_t, r^i_t) \right)} \right] - 1 \right)<0\] for any $t\leq T,$ and $q\in [0,Q_0]$,
where $\rho>0$ is the risk aversion parameter of the large trader. 
    \end{itemize}

\begin{remark}
    The definition of $\mathcal A^\circ$ and $\mathcal A^\rho$ enforces the existence of an optimal strategy for the large trader when there is no incentive policy. Note that this set is not empty, since $\nu=\ell^i=0$ satisfy this condition. Alternatively, if $k^\theta=0$, that is if small investors in the dark pool are insensitive to the large trader lit pool quotes, then $\mathcal A^\circ= \mathcal A^\rho=\mathcal V\times \mathcal L$. 
\end{remark}
In the following, we will use the abuse of notation $\mathcal A$ to denote a general set of admissible control $\nu,\ell$ representing either $\mathcal A^\circ$ or $\mathcal A^\rho$ depending on the problem studied. We denote by $\mathcal A^\star(\xi,c^l,c^d)$ the set of optimal quotes solving \eqref{optim:agent} for a fix incentive-fees scheme $\xi,c^l,c^d$. 
\subsection{The contracting problem}\label{subsec:contracting}
The contracting problem is to find a contract $(\xi,c^l,c^d)$ proposed by the exchange to the large and small traders, incentive compatible with the large trader own objective function, see for example \cite{euch2021optimal,baldacci2019market,mastrolia2025optimal}. The contracting problem is to find a leader-major follower equilibrium between the exchange (leader), the large investor (major follower) faced with passive and minor investors. It corresponds to solve a bilevel optimization given by
\begin{align}
   \label{ex:pb}  J_0^E &= \sup_{(\xi,c^l,c^d)\in \mathcal C;\, (\hat\nu, \hat\ell)\in \mathcal{A}^\star(\xi,c^l,c^d)} \E^{\P^{\nu,\ell}} \Bigg[ \Psi^E(PnL^E_T-\xi) \Bigg] \\
  \nonumber  &\text{subject to} \\
     \nonumber  &(IC) \quad V_0(\xi,c^l,c^d) = \mathbb{E}^{\P^{\hat \nu,\hat \ell}}\bigg[ \Psi\Big(\xi + X_{T} + Q_{T}(S_{T} - \alpha Q_{T})- \phi\int_0^{T}  Q_t^2 dt \Big) \bigg],\\
   \nonumber  & (R) \quad V_0(\xi,c^l,c^d) \geq R_0 
\end{align}
where the profit and loss of the exchange is given by
\[PnL^E_T:=\int_0^T -c_s^l(\hat \nu_s+\lambda_s) ds + \int_0^T \sum_{i=1}^M c_s^{i,d}\min(r^{i}_s, \hat \ell_s^{i})dN_s^{i;\nu,\ell} + \kappa\int_0^T (\gamma \hat \nu_s + \epsilon\lambda_s)ds,\] $R_0\in \mathbb R$ is a fixed reservation utility for the large trader seen as a constraint to accept the contract with the exchange, $\Psi^E$ is the utility function of the exchange assume to be increasing and nonconvex and where the set of admissible compensation $\xi$ and fees $c^l,c^{d,i}$ is given by
$$
\mathcal{C} = \Big\{ (\xi,c^l,(c^{d,i})_{1\leq i\leq M})\; \text{ s.t. } c^{d,i},c^l\geq 0 \text{ are bounded },\; \sup_{(\nu,\ell) \in \mathcal{A}} 
        \E^{\mathbb P^{\nu,\ell}}\bigg[ e^{\varrho | \xi| }+|\Psi^E(PnL_T^E-\xi)| \bigg] < \infty,\; \forall\, \varrho>0 \Big\},
$$
where $\mathbb H_{[0,T]}^4$ denotes the space of adapted processes $c^l$ such that $\mathbb E[\int_0^T|c_t^l|^4 dt]<\infty$.

\begin{remark}
    Note that it is implicitly assumed that any contract $(\xi,c^l,c^d)$ must be such that $\mathcal A^\star(\xi,c^l,c^d)\neq \emptyset$, otherwise it cannot be optimal since the supremum of the exchange is necessarily $-\infty$. This is however avoided by the definition of the admissible set of strategies since in that case we have $\mathcal A^\star(0,0,0)\neq \emptyset$ as we will see below and in view of Assumption $\mathcal A^\circ$ or $\mathcal A^\rho$.  
\end{remark}

\section{Incentives for optimal liquidation with random execution}\label{sec:trader_prob} The bilevel optimization \eqref{ex:pb} is solved in two steps, as usual in Principal-Agent theory with leader-follower equilibrium: first, we characterize the set of optimal quotes of the large trader $\hat \nu, \hat \ell \in \mathcal A^\star(\xi,c^l,c^d)$ for a fixed incentive-compensation/fees scheme $\xi,c^l,c^d$. Then, we rewrote the main problem of the exchange as a stochastic control problem. Note that this problem is not classical since we have to deal with the uncertainty of execution given by the uncertain volume $r^i$ available in each dark pool $i$. We distinguish the two main utility maximization: discounted linear and exponential utilities. 

\subsection{Optimal liquidation strategy with fixed incentives policy}
\subsection{Linear utility}
Recall that for linearly separable utility, we have
$$
\Psi\Big( \xi + X_{T} + Q_{T}(S_{T} - \alpha Q_{T}) - \int_0^{T} \phi Q_t^2 dt \Big) =  \xi + X_{T} + Q_{T}(S_{T} - \alpha Q_{T}) - \int_0^T \phi  Q_t^2 dt.
$$
As a direct result of the definition of $\mathcal V,\mathcal L$, the admissible set of fees $c^l,c^d$ and compensation $\xi$ together with Remark \eqref{rem:integrability} we have the following lemma:

\begin{lemma}\label{lemma:l2}
 The terminal value satisfies,
    $$
    \zeta^{\nu,\ell} := \xi + X_T + Q_T(S_T - \alpha Q_T) \in \L^2, \quad \forall (\nu,\ell) \ \text{ in }\mathcal{A}^\circ.
    $$   
\end{lemma}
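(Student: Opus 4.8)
The plan is to show that $\zeta^{\nu,\ell} = \xi + X_T + Q_T(S_T - \alpha Q_T)$ belongs to $\L^2$ under $\P^{\nu,\ell}$ (which is the relevant measure, since that is the measure appearing in the agent's objective) by bounding each of the three summands separately in $\L^2$ and invoking Minkowski's inequality. The integrability of $\xi$ is immediate: membership in $\mathcal C$ imposes $\E^{\P^{\nu,\ell}}[e^{\varrho|\xi|}]<\infty$ for every $\varrho>0$, which in particular gives $\E^{\P^{\nu,\ell}}[|\xi|^2]<\infty$. So the work concentrates on $X_T$ and on the terminal inventory term $Q_T(S_T-\alpha Q_T)$.

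First I would control the inventory $Q$. From $dQ_t^{\nu,\ell} = \nu_t\,dt - \sum_i \min(\ell_t^i, r_t^i)\,dN_t^{i;\nu,\ell}$ with $\nu\le 0$, $\ell^i\ge 0$ and the admissibility constraint $\sum_i \ell_t^i \le Q_t$, one sees $Q$ stays in $[0,Q_0]$ (as noted in Remark~\ref{rem:integrability}), so $Q$ is in fact bounded; hence $\E^{\P^{\nu,\ell}}[|Q_T|^{4p}]<\infty$ trivially. Next, $S_t = S_0 + \int_0^t(\gamma\nu_s + \epsilon\lambda_s)\,ds + \sigma W_t^{\nu}$, and under $\P^{\nu,\ell}$ the process $W^\nu$ is still a Brownian motion (changing from $\P^\nu$ to $\P^{\nu,\ell}$ only affects the jump part $J$). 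Using $\E^{\P^{\nu,\ell}}[\int_0^T |\nu_s|^{4p}\,ds]<\infty$ and $\E[\int_0^T|\lambda_s|^4\,ds]<\infty$ (the latter transferred to $\P^{\nu,\ell}$ via the density $\mathcal E_T^\nu D_T$ and Hölder with the exponent $q=p/(p-1)$ built into $\mathcal L$), together with Burkholder–Davis–Gundy for the martingale part, I would get $\E^{\P^{\nu,\ell}}[\sup_{t\le T}|S_t|^2]<\infty$, indeed with a higher power. Combining with boundedness of $Q$ gives $Q_T S_T - \alpha Q_T^2 \in \L^2$.

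It then remains to handle $X_T$. Writing $X_T = -\int_0^T (S_t+\eta\nu_t)\nu_t\,dt + \int_0^T c_t^l \nu_t\,dt + \sum_i \int_0^T S_t \min(\ell_t^i, r_t^i)\,dN_t^{i;\nu,\ell}$, the first two drift terms are controlled by Cauchy–Schwarz using the $\L^{4p}$-bounds on $\nu$, the $\sup$-bound on $S$, and boundedness of $c^l$. For the jump integral I would compensate: $dN^{i;\nu,\ell}$ has $\P^{\nu,\ell}$-intensity $e^{k^\theta\nu_t}\theta^i$, so the compensated integral is a martingale whose bracket is $\int_0^T S_t^2 \min(\ell_t^i,r_t^i)^2 e^{k^\theta\nu_t}\theta^i\,dt$; since $\min(\ell_t^i,r_t^i)\le \ell_t^i$ and $\sum_i\ell_t^i\le Q_t\le Q_0$, and $\nu\le 0$ makes $e^{k^\theta\nu_t}\le 1$, BDG plus the $\sup$-bound on $S$ control this in $\L^2$, while the compensator part $\int_0^T S_t\min(\ell_t^i,r_t^i)e^{k^\theta\nu_t}\theta^i\,dt$ is bounded similarly. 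Minkowski then assembles the pieces.

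The main obstacle is bookkeeping the measure changes cleanly: one must verify that the finiteness conditions $\E^{\P^\nu}[|D_T|^q]<\infty$ and the integrability of $\lambda$ and $\nu$ (originally under $\P$ or $\P^\nu$) survive transport to $\P^{\nu,\ell}$, which is exactly where the conjugate exponent $q=p/(p-1)$ in the definition of $\mathcal L$ is used (Hölder balancing the $\L^{4p}$-control of the strategies against the $\L^q$-control of the Radon–Nikodym density), precisely as flagged in Remark~\ref{rem:integrability}. Once the exponents are seen to match, every estimate is a routine application of BDG, Cauchy–Schwarz/Hölder, and Minkowski.
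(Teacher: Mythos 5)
Your proposal is correct and follows exactly the route the paper intends: the paper states this lemma without a written proof, asserting it as a "direct result" of the definitions of $\mathcal V$, $\mathcal L$, $\mathcal C$ and Remark \ref{rem:integrability}, and your decomposition into $\xi$, the bounded inventory term, and $X_T$ — with the conjugate exponent $q=p/(p-1)$ used to transport the $\L^{4p}$ bounds on $\nu,\ell$ across the density $\mathcal E_T^\nu D_T$ via H\"older, plus BDG for the martingale parts — is precisely the argument being invoked. You correctly identify the only delicate point (the exponent bookkeeping across the two successive measure changes), which is also the point the paper itself leaves implicit.
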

We recall that the large trader aims at solving the following problem:
$$
V_0 = \sup_{(\nu, \ell)\in \mathcal A^\circ} J(\nu,\ell):= \mathbb{E}^{\P^{\nu,\ell}}\bigg[ \xi + X_{T} + Q_{T}(S_{T} - \alpha Q_{T}) - \int_0^T \phi Q_t^2 dt \bigg].
$$
Recalling that $Q_t$ is bounded in view of Lemma \ref{lemma:l2} above, we apply martingale representation theorem and deduce that there exist a tuple $(Y^{\nu,\ell},Z^{\nu,\ell},U^{\nu,\ell})\in \mathbb S_T^2(\mathbb R)\times \H^2_T(\R)\times \K^2_T(\R^M)$ satisfying following BSDE,
\begin{align*}
Y_t^{\nu,\ell} = \zeta^{\nu,\ell} -\int_t^T \phi Q_s^2 ds- \int_t^T  Z_s^{\nu,\ell} dW_s^{\nu,\ell} - \sum_{i=1}^M \int_t^T U_s^{i,\nu,\ell} d\widetilde{N}_s^{i;\nu,\ell}, \quad \P^{\nu,\ell}-a.s.
\end{align*}   
By Ito's formula, we note that \begin{align*}
    X_T + Q_T(S_T - \alpha Q_T) &= X_0 + Q_0 S_0 - \alpha Q_0^2 + \int_0^T \left( -\eta \nu_t^2 + c_t^l \nu_t + Q_t \left( (\gamma - 2\alpha)\nu_t + \epsilon \lambda_t \right) \right) dt \\
    &+ \int_0^T Q_t \sigma \, dW_t^{\nu,\ell} + \sum_{i=1}^M \int_0^T \alpha \min(\ell_t^i, r_t^i) \left( 2Q_{t-} - \min(\ell_t^i, r_t^i) \right) dN_t^{i;\nu,\ell}.
\end{align*} 
Applying Girsanov Theorem, we get
\begin{equation}\label{bsde:discounted}
    \begin{aligned}
    Y_t^{\nu,\ell} &= \xi + X_0 + Q_0 S_0 - \alpha Q_0^2 + \int_t^T f^{\nu,\ell}(t,Z_s^{\nu,\ell})  ds - \int_t^T \left(Z_s^{\nu,\ell} - Q_s \sigma \right) dW^{0}_s \\
& - \sum_{i=1}^M \int_t^T \Big[U^{i,\nu,\ell}_s - \alpha \min(\ell^i_s, r^i_s) \left( 2Q_{s-} - \min(\ell^i_s, r^i_s) \right)\Big] d\widetilde{N}^{i;0}_s, \quad \P^{0}-a.s.
\end{aligned}    
\end{equation}
with driver function \begin{align*}
f^{\nu,\ell}(t,q,z) = - \phi q^2 - \eta \nu^2 + c^l \nu + q \left( \epsilon \lambda - 2\alpha\nu \right) + \frac{\gamma z}{\sigma}\nu + \alpha e^{k^\theta \nu} \sum_{i=1}^M \theta^i \E \left[ \min(\ell^i, r^i) \left( 2q - \min(\ell^i, r^i) \right) \right].
\end{align*}
We have the following Theorem

\begin{theorem}
\label{thm:discounted_factor}
For any $(t,q,z)\in [0,T]\times \mathbb N\times \mathbb R$ we define  $$
\hat f(t,q,z) = \esssup_{\nu,\ell} f^{\nu,\ell}(t,q,z).
$$ The value function is given by $V_0 = Y_0^{\hat{\nu}, \hat{\ell}}$
where $(Y^{\hat{\nu}, \hat{\ell}},Z^{\hat{\nu}, \hat{\ell}},U^{\hat{\nu}, \hat{\ell}})$ is the unique solution to the BSDE with driver $f$ and terminal condition $\xi+X_0+Q_0S_0-\alpha Q_0^2$
and corresponding optimal trading strategy ${\hat{\nu}, \hat{\ell}}$ satisfies
\begin{enumerate}
    \item[(a)]
    \begin{enumerate}
        \item[1.] If $M=1$, $\hat{\ell}_t = Q_t$ 
        \item[2.] If $M>1$, $\theta^i(Q_t - \hat\ell^{i}_t)(1 - F_i(\hat\ell^{i}_te^{kc_t^{d,i}})) = \theta^j(Q_t - \hat\ell^{j}_t)(1 - F_i(\hat\ell^{j}_te^{kc_t^{d,j}})), \, \forall i,j$,
    \end{enumerate}
    \item[(b)] $
      \hat{\nu_t} = \arg\min_{(\nu, \hat\ell) \in \mathcal{A}^\circ} f^{\nu,\hat\ell} (t,q,z)
        $
        where $\partial_\nu f^{\hat\nu,\hat\ell} (t,q,z) = 0$. 
\end{enumerate}

\end{theorem}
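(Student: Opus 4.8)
The plan is to read the large trader's problem \eqref{optim:agent} in the linear case as a control problem over the family of BSDEs \eqref{bsde:discounted} and to run the classical ``Hamiltonian BSDE plus measurable selection'' argument. For each admissible $(\nu,\ell)\in\mathcal A^\circ$, Lemma~\ref{lemma:l2} together with the boundedness of $Q$ makes $\zeta^{\nu,\ell}-\int_0^T\phi Q_s^2\,ds$ square integrable, so the martingale representation theorem under $\mathbb P^{\nu,\ell}$ produces $(Y^{\nu,\ell},Z^{\nu,\ell},U^{\nu,\ell})$ with $Y_0^{\nu,\ell}=J(\nu,\ell)$; the It\^o/Girsanov computation already displayed rewrites all of these, under the single reference measure $\mathbb P^{0}$, as \eqref{bsde:discounted}, i.e.\ BSDEs sharing the terminal value $\Xi:=\xi+X_0+Q_0S_0-\alpha Q_0^2$ and differing only through their driver $f^{\nu,\ell}$. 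Hence $V_0=\sup_{(\nu,\ell)\in\mathcal A^\circ}Y_0^{\nu,\ell}$, and everything reduces to comparing this supremum with $Y_0^{\hat\nu,\hat\ell}$ for a well-chosen pair.

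I would first establish well-posedness of the Hamiltonian BSDE, namely the one with terminal value $\Xi$ and driver $\hat f(t,q,z)=\esssup_{\nu,\ell}f^{\nu,\ell}(t,q,z)$. After the shift $\bar Z:=Z-Q\sigma$, $\bar U^i:=U^i-\alpha\min(\ell^i,r^i)\big(2Q_{-}-\min(\ell^i,r^i)\big)$, this BSDE is in standard form, its driver does not depend on the jump integrand, its $q$-dependence is harmless because $Q$ stays in $[0,Q_0]$, and in $z$ it has at most (one-sided) quadratic growth: maximizing the $\nu$-part $\tfrac{\gamma z}{\sigma}\nu-\eta\nu^2$ of $f^{\nu,\ell}$ produces a term of order $\tfrac{\gamma^2}{4\eta\sigma^2}z^2$. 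This is exactly the delicate point: Lipschitz BSDE theory does not apply, the equation carries the jump martingales $\widetilde N^{i;0}$, and $\Xi$ is unbounded with only exponential moments. I would therefore appeal to the theory of quadratic BSDEs with jumps and exponentially integrable terminal data (an exponential change of variable together with a priori BMO-type bounds, using $Q$ bounded to absorb the $q$-terms), which is exactly what membership in $\mathcal C$ guarantees for $\xi$, hence for $\Xi$, and which also yields $\mathbb E[\int_0^T|\mathcal Z_s|^{4p}\,ds]<\infty$ for the solution $(\mathcal Y,\mathcal Z,\mathcal U)$. Since $\hat f(t,Q_t,z)\ge f^{\nu_t,\ell_t}(t,Q_t,z)$ for every $z$ and every admissible $(\nu,\ell)$, the comparison theorem for BSDEs gives $\mathcal Y_0\ge Y_0^{\nu,\ell}=J(\nu,\ell)$ for all of them, hence $\mathcal Y_0\ge V_0$.

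For the reverse inequality I would exhibit an admissible maximizer. The dark-pool part decouples from $\nu$ (the factor $\alpha e^{k^\theta\nu}>0$ does not move the $\arg\max$ in $\ell$), so $\hat\ell$ is a function of $(t,Q_t)$ only, while by the strict-concavity inequality built into $\mathcal A^\circ$ the map $\nu\mapsto f^{\nu,\hat\ell}(t,q,z)$ has a unique maximizer on $\mathbb R_{-}$; a measurable selection then produces predictable $(\hat\nu,\hat\ell)$ with $f^{\hat\nu_t,\hat\ell_t}(t,Q_t,\mathcal Z_t)=\hat f(t,Q_t,\mathcal Z_t)$ for a.e.\ $(t,\omega)$, and $(\hat\nu,\hat\ell)\in\mathcal A^\circ$ because $0\le\hat\ell\le Q\le Q_0$ and $|\hat\nu_t|\lesssim 1+|\mathcal Z_t|$ with $\mathcal Z$ as above. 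With this choice $(\mathcal Y,\mathcal Z,\mathcal U)$ also solves \eqref{bsde:discounted} for the fixed control $(\hat\nu,\hat\ell)$, so by uniqueness of that equation (equivalently, of the martingale representation under $\mathbb P^{\hat\nu,\hat\ell}$) it equals $(Y^{\hat\nu,\hat\ell},Z^{\hat\nu,\hat\ell},U^{\hat\nu,\hat\ell})$; thus $\mathcal Y_0=J(\hat\nu,\hat\ell)\le V_0$. Combining the two bounds yields $V_0=Y_0^{\hat\nu,\hat\ell}$ and optimality of $(\hat\nu,\hat\ell)$.

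It then remains to compute the maximizers pointwise. In $f^{\nu,\ell}$ only $\alpha e^{k^\theta\nu}\sum_i\theta^i\,\mathbb E[\min(\ell^i,r^i)(2q-\min(\ell^i,r^i))]$ depends on $\ell$, and $\sum_i\ell^i\le q$ forces $\min(\ell^i,r^i)\le\ell^i\le q$, so each summand is, pathwise, nondecreasing in $\ell^i$ (since $m\mapsto m(2q-m)$ increases on $[0,q]$); for $M=1$ this directly gives $\hat\ell_t=Q_t$. For $M>1$, writing $F_i$ for the distribution function of $A^i$ and using $r^i=A^i e^{-k^c c^{d,i}}$, one finds $\partial_{\ell^i}\mathbb E[\min(\ell^i,r^i)(2q-\min(\ell^i,r^i))]=2(q-\ell^i)\big(1-F_i(\ell^i e^{k^c c^{d,i}})\big)\ge0$, which is decreasing in $\ell^i$; hence $\ell^i\mapsto\mathbb E[\cdots]$ is concave on $[0,q]$, the KKT conditions for the simplex constraint are necessary and sufficient, and they give the balance relation $\theta^i(Q_t-\hat\ell^i_t)\big(1-F_i(\hat\ell^i_te^{k^c c^{d,i}_t})\big)=\theta^j(Q_t-\hat\ell^j_t)\big(1-F_j(\hat\ell^j_te^{k^c c^{d,j}_t})\big)$ for all active $i,j$. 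Finally, with $\hat\ell$ fixed, $\nu\mapsto f^{\nu,\hat\ell}(t,q,z)$ is strictly concave --- this concavity is precisely the inequality defining $\mathcal A^\circ$ --- so its maximum over $\mathbb R_{-}$ is, in the interior regime $\hat\nu_t<0$, the unique solution of $\partial_\nu f^{\hat\nu,\hat\ell}(t,q,z)=0$, which is~(b). The main obstacle throughout is the well-posedness of the Hamiltonian BSDE in the second step: a genuinely quadratic, jump-carrying equation with merely exponentially integrable terminal data, for which the boundedness of $Q$ and the integrability structure imposed by $\mathcal C$ are the essential levers.
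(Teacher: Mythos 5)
Your proposal follows the same overall strategy as the paper: represent each controlled payoff $J(\nu,\ell)$ as the initial value of the BSDE \eqref{bsde:discounted} under the reference measure $\mathbb P^0$, identify the optimizer by pointwise maximization of the driver (Lagrangian/KKT for the dark-pool allocation under $\sum_i\ell^i\le q$, strict concavity in $\nu$ built into $\mathcal A^\circ$ for the lit rate), and conclude via a comparison theorem. The computations in your last paragraph reproduce the paper's first-order conditions exactly (your $F_j$ on the right-hand side of the balance relation is in fact the correct index; the paper's $F_i$ there is a typo).

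Where you genuinely diverge is in how the verification is closed. The paper never works with the Hamiltonian BSDE directly: it invokes a Lipschitz-type existence result (\cite[Theorem 4.5]{oksendal2007applied}) for each \emph{fixed} control $(\nu,\ell)$ — for which the driver is affine in $z$ — and then compares $Y_0^{\nu,\ell}$ with $Y_0^{\hat\nu,\hat\ell}$ using \cite{antonelli2016solutions}. You instead solve the BSDE with driver $\hat f=\esssup_{\nu,\ell}f^{\nu,\ell}$, correctly observing that optimizing $\tfrac{\gamma z}{\sigma}\nu-\eta\nu^2$ makes $\hat f$ quadratic in $z$, so that quadratic-BSDE-with-jumps theory and the exponential integrability of $\xi$ (membership in $\mathcal C$) are needed, and you then recover $(\hat\nu,\hat\ell)$ by measurable selection and check admissibility. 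Your route is more demanding technically but it resolves a circularity the paper leaves implicit: $\hat\nu_t$ is a function of $Z_t$, so one must first produce the $Z$-process of the \emph{optimal} equation before the feedback control is even defined; the Hamiltonian BSDE does exactly that, whereas the paper's per-control argument presupposes $(\hat\nu,\hat\ell)$ is already an admissible predictable process. The one point you should still make explicit is that $\hat\nu$, being affine in $\mathcal Z$, yields a true martingale $\mathcal E^{\hat\nu}$ (e.g.\ via a BMO/Kazamaki argument), which is required for $(\hat\nu,\hat\ell)\in\mathcal V\times\mathcal L$; otherwise the selection step does not land in $\mathcal A^\circ$.
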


\begin{remark}
    As an example, $k^\theta = 0$, the optimal trading rate in lit market is given by
        $$
        \hat{\nu}_t = \left(-\frac{2\alpha Q_t - c_t^l - \frac{\gamma Z_t}{\sigma}}{2\eta}\right)^-
        $$
\end{remark}
\begin{proof}
    For any $(\nu,\ell) \in \mathcal{A}^\circ$, we recall from Lemma \ref{lemma:l2} that the terminal value $\zeta \in \L^2_T$. Since $Q_t$ is bounded, we have the driver function $f^{\nu,\ell}(t,q,0) = - \phi q^2 - \eta \nu^2 - c^l |\nu| + q \left( (\gamma - 2\alpha)\nu + \epsilon \lambda \right) + \alpha e^{k^\theta \nu}\sum_{i=1}^M \theta^i \min(\ell^i, r^i) (2q - \min(\ell^i, r^i))$ satisfy
     $\E[\int_0^T f^{\nu,\ell}(t,q,0) dt] < \infty.
    $
    Furthermore, from the construction of BSDE \eqref{bsde:discounted}, we have $(Z^{\nu,\ell},U^{\nu,\ell})\in \H^2_T(\R)\times \K^2_T(\R^M)$. Therefore we deduce form Theorem 4.5 in \cite{oksendal2007applied}, that there exists a unique $(Y^{\nu,\ell},Z^{\nu,\ell},U^{\nu,\ell})\in \S^2_T(\R) \times \H^2_T(\R)\times \K^2_T(\R^M)$ solution to BSDE \eqref{bsde:discounted}.
    
 From comparison theorem for BSDE \eqref{bsde:discounted}, see for example \cite{antonelli2016solutions}, we deduce that we have $Y_0^{\nu,\ell} \leq Y_0^{\hat\nu,\hat\ell}$ for all $(\nu,\ell) \in \mathcal{A}^\circ$. Hence, we deduce that $V_0 = \sup_{\nu,\ell} Y_0^{\nu,\ell}=Y_0^{\hat\nu,\hat\ell}$, then we have 
    $V_0 = Y_0^{\hat{\nu},\hat{\ell}}
    $. 
    To solve the optimal strategy, we note that the optimization can be decomposed into two parts,
    \begin{align*}
        \sup_{\nu,\ell}  f^{\nu,\ell}(t,q,z) &= \sup_{\nu} -\delta(t,\nu)y - \phi q^2 - \eta \nu^2 - c^l |\nu| + q \left( \epsilon \lambda - 2\alpha\nu \right) + \frac{\gamma z}{\sigma}\nu \\
        &+ \alpha e^{k^\theta \nu} \sup_{\ell} \sum_{i=1}^M \E \left[ \min(\ell^i, r^i) \left( 2q - \min(\ell^i, r^i) \right) \right] \\
        &\text{s.t.} \quad \sum_{i=1}^M \ell^{i} \leq q.
    \end{align*}
    For the dark pool trading strategy, the corresponding Lagrangian function is given by \begin{align*}
        L(\ell, \vartheta) &= \sum_{i=1}^M \theta^i \E \left[ \min(\ell^i, r^i) \left( 2q - \min(\ell^i, r^i) \right) \right] - \vartheta \left(\sum_{i=1}^M \ell_t^i - q\right) \\
    &= \sum_{i=1}^M\theta^i \ell^i(2q - \ell^i)(1-F(\ell^i e^{kc^{d,i}})) + \theta^ie^{-kc^{d,i}}\int_0^{\ell^ie^{kc^{d,i}}} a(2q - e^{-kc^{d,i}}a) dF(a) - \vartheta \left(\sum_{i=1}^M \ell_t^i - q\right).
    \end{align*}
    By optimality condition, we have
    \begin{align*}
    \nabla_\ell L(\ell, \vartheta) &= \left[
        \begin{array}{c}
       2\theta^i(q-\ell^{i})(1-F(\ell^ie^{kc^{d,i}})) - \vartheta \\
        \vdots \\
        2\theta^j(q-\ell^{j})(1-F(\ell^ie^{kc^{d,j}})) - \vartheta
        \end{array}\right] \\
        &= 0.
    \end{align*}
The optimal condition could be derived directly. To solve the optimal trading rate in lit market, by the definition of $\mathcal A^\circ$, we have \begin{align*}
        \partial_{\nu\nu} f^{\nu,\hat\ell}(t,q,z) &= -2 \eta + \alpha (k^\theta)^2 e^{k^\theta \nu} \sum_{i=1}^M \theta^i \E \left[ \min(\hat\ell^{i}, r^i) \left( 2q - \min(\hat\ell^{i}, r^i) \right) \right]  < 0,
    \end{align*}
    which implies $f^{\nu,\hat\ell}(t,q,z)$ is strictly concave in $\nu$. That is, there exists a unique solution $\hat\nu$ satisfying $
    \partial_{\nu} f^{\hat \nu,\hat\ell}(t,q,z) = 0.
    $
\end{proof}

\subsection{Exponential utility}\label{subsec:exp_utility}

We consider a major risk averse trader with a utility function $\Psi(x) = -e^{-\rho x}$, where $\rho>0$ represents the risk aversion parameter. The value function is then given by
\[
V_0 = \sup_{\nu, \ell} J(\nu,\ell):= \mathbb{E}^{\P^{\nu,\ell}}\bigg[ -e^{ -\rho \text{PnL}_T  }\bigg],
\]
where $\text{PnL}_T$ is given by recalling Proposition \ref{prop:girsanov} 
\begin{align*}
    \text{PnL}_T &= \xi + X_0 + Q_0 S_0 - \alpha Q_0^2 + \int_0^T \left[ -\phi Q_t^2 -\eta \nu_t^2 + c_t^l \nu_t + Q_t \left(\epsilon \lambda_t - 2\alpha\nu_t \right) \right] dt \\
    &+ \int_0^T Q_t \sigma \, dW_t^{0} + \sum_{i=1}^M \int_0^T \alpha \min(\ell_t^i, r_t^i) \left( 2Q_{t-} - \min(\ell_t^i, r_t^i) \right) dN_t^{i;0}.
\end{align*}

We consider the following BSDE

\begin{equation}\label{bsde:y}dY_t = Z_t dW_t^{0} + \sum_{i=1}^M U_t^i dN_t^{i;0} - H(t, Z_t, U_t) dt,\; Y_T=\xi,
\end{equation}
where 
$H(t,z,u) = \sup_{\nu,\ell} h^{\nu,\ell}(t,z,u)$ with \begin{align*}
    h^{\nu,\ell}(t,z,u) &:= -\phi q^2 -\eta \nu^2 + c^l \nu + q ( \epsilon \lambda - 2\alpha\nu ) - \frac{1}{2} \rho (z + q \sigma)^2 \\
    &- \sum_{i=1}^M \left( e^{-\rho u^i}\E \left[ e^{-\rho \alpha \min(\ell^i, r^i) \left( 2q - \min(\ell^i, r^i) \right)} \right] + \rho u^i - 1 \right) \frac{e^{k^\theta\nu}\theta^i}{\rho}.
\end{align*}
\begin{assumption}\label{assump:bmo_martingale}
    We consider an additional integrability assumption on the compesnation $\xi\in\mathcal C$ and assume furthermore than  
    \begin{align*}
        \sup_{\nu,\ell}\E^{\mathbb P^{\nu,\ell}}\Big[ e^{d|\xi|} \big| \mathcal{F}_t \Big] \text{ is a BMO martingale for  some $d>2\rho$.}
    \end{align*}
\end{assumption}

\begin{lemma}\label{lemma:bsde_y_existence}
    Under Assumption \ref{assump:bmo_martingale}, there exists a quadruple $(Y, Z,U)\in \S^2_T(\R) \times \H^2_T(\R)\times \K^2_T(\R^M)$ to BSDE \eqref{bsde:y}.
\end{lemma}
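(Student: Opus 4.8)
The plan is to establish existence of a solution to the quadratic-exponential BSDE \eqref{bsde:y} by a combination of the standard structural analysis of the generator $H$ and existing wellposedness results for quadratic BSDEs with jumps under a BMO condition on the terminal data. First I would analyze the generator $H(t,z,u)=\sup_{\nu,\ell}h^{\nu,\ell}(t,z,u)$ to extract its growth in $(z,u)$. The key observation is that $h^{\nu,\ell}$ is concave in $\nu$ on the admissible set $\mathcal A^\rho$ (this is precisely the defining inequality of $\mathcal A^\rho$, which makes $\partial_{\nu\nu}h^{\nu,\ell}<0$), so the inner supremum over $\nu$ is attained at a unique point $\hat\nu(t,z,u)$ solving $\partial_\nu h=0$; this first-order condition expresses $\hat\nu$ as a Lipschitz-type function of the quantity $\tfrac{\gamma z}{\sigma}$ coming from the $\frac{\gamma z}{\sigma}\nu$ term, together with bounded terms. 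Plugging back, one sees that $H$ inherits a quadratic growth in $z$ (from the $-\tfrac12\rho(z+q\sigma)^2$ term and the optimized linear-in-$z$ term, with $q=Q_t$ bounded by Lemma \ref{lemma:l2}) and, for the jump component, the $u^i$-dependence enters through terms of the form $(e^{-\rho u^i}C^i_t+\rho u^i-1)$ with $C^i_t$ bounded above and below away from zero (since $\min(\ell^i,r^i)(2q-\min(\ell^i,r^i))$ is bounded, using $r^i\ge\varepsilon^A e^{-k^cc^{d,i}}$ bounded below and $\ell^i\le Q_t$ bounded above), so $H$ has the standard "exponential-type" growth in $u$ compatible with the theory of quadratic BSDEs with jumps.

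Next I would verify the remaining structural hypotheses: measurability/continuity of $H$ in its arguments (the sup is over a measurable family and the maximizers are explicit), and the crucial a priori bound. Here Assumption \ref{assump:bmo_martingale} does the work: since $\xi\in\mathcal C$ has exponential moments of all orders and $\sup_{\nu,\ell}\E^{\mathbb P^{\nu,\ell}}[e^{d|\xi|}\mid\mathcal F_t]$ is a BMO martingale for some $d>2\rho$, one gets the uniform control on $e^{-\rho\xi}$ needed to bound $Y$ and to run the exponential-transform/Doléans argument. Concretely, I would apply the existence result for quadratic BSDEs with jumps and bounded-in-the-appropriate-sense terminal condition — in the spirit of the fixed-point/truncation arguments used for such equations (e.g. Tevzadze-type a priori estimates localized via the BMO property, or the results invoked elsewhere in the paper such as \cite{oksendal2007applied} combined with exponential change of variable) — to produce $(Y,Z,U)\in\S^2_T(\R)\times\H^2_T(\R)\times\K^2_T(\R^M)$. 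The exponential change of variable $P_t:=e^{-\rho Y_t}$ is the natural device: it turns the quadratic term $-\tfrac12\rho z^2$ and the exponential jump term into a generator for $P$ that is Lipschitz/monotone up to a linear-in-$P$ factor, so existence for $P$ follows from classical linear-growth BSDE-with-jumps theory, and $Y=-\tfrac1\rho\ln P$ then requires showing $P$ stays strictly positive (again from the BMO bound on $e^{-\rho\xi}$) and that $Y$ has the claimed integrability.

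The main obstacle I anticipate is the jump component: controlling the term $\sum_i(e^{-\rho u^i}\E[e^{-\rho\alpha\min(\ell^i,r^i)(2q-\min(\ell^i,r^i))}]+\rho u^i-1)\tfrac{e^{k^\theta\nu}\theta^i}{\rho}$ simultaneously in $\nu$ and $u$, since after optimizing over $\ell$ and $\nu$ the jump generator is neither globally Lipschitz nor obviously monotone in $u$ without a sign/boundedness constraint on $U$. The resolution is to exploit that the relevant $U^i$ are a priori bounded: from the representation of $Y$ as a conditional expectation of $\zeta$-type quantities and the fact that jumps of $Q$ are bounded (inventory lies in $[0,Q_0]$), one expects $|U^i|$ to be controlled, placing the problem in the "bounded solutions of quadratic BSDEs with jumps" regime where existence and uniqueness are known. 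I would therefore spend most of the proof verifying this a priori $L^\infty$ (or BMO) bound on $U$, after which invoking the cited BSDE wellposedness theorem is routine; the $\S^2\times\H^2\times\K^2$ integrability is then immediate from the BMO/$L^2$ estimates.
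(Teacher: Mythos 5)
Your plan is correct in outline, and its first half coincides with what the paper actually does: the paper's proof consists of (i) the two-sided quadratic--exponential growth bound $-\varpi-\rho z^2-[-u]_{-\rho}\le h^{\nu,\ell}(t,z,u)\le \varpi+\rho z^2+[u]_{-\rho}$, obtained exactly as you describe from the nonnegativity of $\alpha\min(\ell^i,r^i)\left(2q-\min(\ell^i,r^i)\right)$ and Young's inequality on the cross term $\rho q\sigma z$; (ii) local Lipschitz continuity in $z$; and (iii) a structural condition in $u$, after which Theorem 3 of \cite{antonelli2016solutions} is invoked directly. Where you diverge is in the treatment of the jump variable, which you correctly identify as the main obstacle but resolve differently. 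You propose to establish an a priori $L^\infty$ (or BMO) bound on $U$ and to run the exponential change of variable $P_t=e^{-\rho Y_t}$, reducing to a linear-growth BSDE and a strict-positivity argument for $P$. The paper avoids both: it obtains the required condition in $u$ as a one-sided increment estimate via the mean value theorem, namely $h^{\nu,\ell}(t,z,u')-h^{\nu,\ell}(t,z,u)\le\sum_i e^{k^\theta\nu}\theta^i\bigl(\zeta^{y,z,u;i}_t-1\bigr)\bigl((u')^i-u^i\bigr)$ with $\zeta^{y,z,u;i}_t-1\in[D_1,D_2]$, $D_1\in(-1,0]$, $D_2>0$, which is exactly the $A_\gamma$-type hypothesis of the cited existence theorem and requires no boundedness of $U$ whatsoever. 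The exponential transformation you suggest does appear in the paper, but only afterwards, as the device for the comparison/uniqueness result (Lemma \ref{lemma:comparison}), not for existence. Your route would presumably also succeed, but at the cost of the extra a priori estimate on $U$ and the positivity argument for $P$; the paper's route buys existence more cheaply by matching the hypotheses of the Antonelli--Mancini theorem verbatim, at the price of deferring uniqueness to a separate argument.
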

\begin{proof}
    Given the definition of the driven function of BSDE \eqref{bsde:y}, let $$
    [u]_{-\rho} := \sum_{i=1}^M \bigg( \frac{e^{-\rho u^{i}} - (-\rho u^{i}) - 1}{-\rho}e^{k^\theta \nu}\theta^{i}\bigg).
    $$
    Since $\alpha \min(\ell^i, r^i) \left( 2q - \min(\ell^i, r^i) \right) \geq 0$ a.s., then $$
    - \sum_{i=1}^M \left( e^{-\rho u^i}\E \left[ e^{-\rho \alpha \min(\ell^i, r^i) \left( 2q - \min(\ell^i, r^i) \right)} \right] + \rho u^i - 1 \right) \frac{e^{k^\theta\nu}\theta^i}{\rho} \leq [u]_{-\rho}.
    $$
Note that $
        - \frac{1}{2}\rho (z+q\sigma)^2 = -\frac{\rho}{2} z^2 - \rho q\sigma z - \frac{1}{2} \rho q^2\sigma^2$. 
   Apply Young's inequality, we have
  \begin{align*}
       \left| \rho q\sigma z \right| \leq \frac{\rho}{2}z^2 + \frac{1}{2\rho}\left( \rho q\sigma \right)^2
   \end{align*}
    Then the driver function satisfies
    \begin{align*}
        \left|h^{\nu,\ell}(t, z,u)\right| & \leq \frac{\rho}{2}z^2 + [u]_{-\rho} + \frac{\rho}{2}z^2 + \frac{1}{2\rho}\left( \rho q\sigma \right)^2 + \bigg| -\phi q^2 -\eta \nu^2 + c^l\nu + q(\epsilon\lambda  - 2\alpha\nu) - \frac{1}{2} \rho q^2\sigma^2 \bigg| \\
        &= \varpi + \rho z^2 + [u]_{-\rho},
    \end{align*}
    where $$
    \varpi = \frac{1}{2\rho}\left( \rho q\sigma \right)^2 + \bigg| -\phi q^2 -\eta \nu^2 + c^l\nu + q(\epsilon\lambda - 2\alpha \nu) - \frac{1}{2} \rho q^2\sigma^2 \bigg|>0.
    $$ Therefore
    $$
    -\varpi -\rho z^2 - [-u]_{-\rho} \leq h^{\nu,\ell}(t,z,u) \leq \varpi + \rho z^2 + [u]_{-\rho}.
    $$
    
    By mean value theorem, there exists $ \zeta_t^{y,z,u} \in \R^{+,M}$, such that  \begin{align*}
        h^{\nu,\ell}(t, z,u') &- h^{\nu,\ell}(t, z,u) = -\sum_{i=1}^M e^{k^\theta \nu}\theta^{i}((u')^{i} - u^{i})
        - \sum_{i=1}^M \frac{e^{k^\theta \nu}\theta^{i}}{\rho} \big( e^{-\rho (u')^{i}} - e^{-\rho u^{i}}\big) \E \left[e^{-\rho \alpha \min(\ell^i, r^i) \left( 2q - \min(\ell^i, r^i) \right) } \right] \\
        &= -\sum_{i=1}^M e^{k^\theta \nu}\theta^{i}((u')^{i} - u^{i}) + \sum_{i=1}^M \zeta_t^{y,z,u;i}e^{k^\theta \nu}\theta^{i} \E \left[e^{-\rho \alpha \min(\ell^i, r^i) \left( 2q - \min(\ell^i, r^i) \right) } \right] ((u')^{i} - u^{i}) \\
        &\leq \sum_{i=1}^M e^{k^\theta \nu}\theta^{i} \Big(\zeta_t^{y,z,u;i} -1 \Big)((u')^{i} - u^{i}).
    \end{align*}
    Since $\zeta_t^{y,z,u;i} > 0$, we deduce that there exists $D_1 \in (-1,0]$ and $ D_2 > 0$ such that
    $
    \zeta_t^{y,z,u;i} - 1 \in [D_1, D_2],$ for any index $i=1,\cdots,M$. Moreover
    \begin{align*}
        |h^{\nu,\ell}(t, z,u) - h^{\nu,\ell}(t, z',u)| &= |- \frac{1}{2}\rho (z+q\sigma)^2 + \frac{1}{2}\rho (z' + q\sigma)^2| \\
        &\leq \frac{\rho}{2}|z^2 - (z')^2| + |\rho q\sigma | |z - z'| \\
        &= \frac{\rho}{2}|z - z'||z + z'| + |\rho q\sigma | |z - z'| \\
        &= \Big( \frac{\rho}{2}\big(|z| + |z'|\big) + |\rho q\sigma | \Big)|z - z'|.
    \end{align*}

    From  in \cite[Theorem 3]{antonelli2016solutions}, we deduce that there exists a solution $(Y, Z,U)\in \S^2_T(\R) \times \H^2_T(\R)\times \K^2_T(\R^M)$ to BSDE \eqref{bsde:y}.
\end{proof}

While the BSDE involved in the problem is quadratic with jumps, the article \cite{antonelli2016solutions} do not provide a uniqueness and comparison results. For the sake of consistency, we recall and prove a comparison result for our quadratic BSDE with jumps in our context.

\begin{lemma}(Comparison Theorem)\label{lemma:comparison}
    Under Assumption \ref{assump:bmo_martingale} is satisfied. We consider two solutions $(Y^1,Z^1,U^{1})$ and $(Y^2,Z^2,U^{2})$ satisfying BSDE \eqref{bsde:y} with terminal conditions $\xi^1, \xi^2\in \mathcal C$ and respective generator $h^{\nu^1,\ell^1},h^{\nu^2,\ell^2}$. Assume moreover that
\[
        h^{\nu^1,\ell^1}(t,  Z^1_t, U^{1}_t) \leq h^{\nu^2,\ell^2}(t,  Z^1_t, U^{1}_t)\quad \forall t\in [0,T] 
\text{ and } 
        \xi^1\leq \xi^2,\, a.s.\]
    Then we have
    $$
    Y^1_t \leq Y^2_t,\, \mathbb P^0-a.s.\quad \forall t\in [0,T].
    $$
\end{lemma}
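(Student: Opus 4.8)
The plan is to run the classical BSDE linearisation argument, keeping careful track of the quadratic‑in‑$z$ and jump structure of the generator, and to conclude by a change of probability measure. Set $\delta Y:=Y^1-Y^2$, $\delta Z:=Z^1-Z^2$, $\delta U:=U^1-U^2$ and $\delta\xi:=\xi^1-\xi^2\le 0$. Subtracting the two copies of \eqref{bsde:y},
\begin{align*}
d(\delta Y_t)=\delta Z_t\,dW^0_t+\sum_{i=1}^M\delta U^i_t\,dN^{i;0}_t-\Big(h^{\nu^1,\ell^1}(t,Z^1_t,U^1_t)-h^{\nu^2,\ell^2}(t,Z^2_t,U^2_t)\Big)dt,\qquad \delta Y_T=\delta\xi.
\end{align*}
I would decompose the driver increment by first switching the generator at the point $(Z^1_t,U^1_t)$, then moving $U^1_t\to U^2_t$, then moving $Z^1_t\to Z^2_t$:
\begin{align*}
h^{\nu^1,\ell^1}(t,Z^1_t,U^1_t)-h^{\nu^2,\ell^2}(t,Z^2_t,U^2_t)
&=\Big(h^{\nu^1,\ell^1}-h^{\nu^2,\ell^2}\Big)(t,Z^1_t,U^1_t)\\
&\quad+\Big(h^{\nu^2,\ell^2}(t,Z^1_t,U^1_t)-h^{\nu^2,\ell^2}(t,Z^1_t,U^2_t)\Big)\\
&\quad+\Big(h^{\nu^2,\ell^2}(t,Z^1_t,U^2_t)-h^{\nu^2,\ell^2}(t,Z^2_t,U^2_t)\Big).
\end{align*}
The first term is $\le 0$ by hypothesis; call it $A_t$. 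The $U$‑difference is handled exactly as in the proof of Lemma \ref{lemma:bsde_y_existence}: the mean value theorem there produces predictable processes $\zeta^i_t>0$ with $\zeta^i_t-1\in[D_1,D_2]$, $D_1\in(-1,0]$, so that the second term equals $\sum_{i=1}^M\beta^i_t\,\delta U^i_t$ with $\beta^i_t:=e^{k^\theta\nu^2_t}\theta^i(\zeta^i_t-1)$; crucially $\beta^i_t>-e^{k^\theta\nu^2_t}\theta^i$. The $Z$‑difference is handled by the local Lipschitz estimate in that same proof: there is a predictable $\gamma_t$ with $|\gamma_t|\le\tfrac{\rho}{2}(|Z^1_t|+|Z^2_t|)+\rho\sigma|Q_t|$ so that the third term equals $\gamma_t\,\delta Z_t$.

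Collecting terms, $\delta Y$ solves the \emph{linear} BSDE with driver $A_t+\gamma_t\,\delta Z_t+\sum_i\beta^i_t\,\delta U^i_t$ and terminal value $\delta\xi$. Since $Q$ is bounded and, under Assumption \ref{assump:bmo_martingale}, the martingale parts $\int_0^\cdot Z^1_s\,dW^0_s$ and $\int_0^\cdot Z^2_s\,dW^0_s$ of the two solutions are BMO (a fact I would record first, from the standard a priori bounds for quadratic BSDEs together with the exponential integrability of $\xi^1,\xi^2\in\mathcal C$), the stochastic integral $\int_0^\cdot\gamma_s\,dW^0_s$ is a BMO martingale. Let $\Gamma$ be the Doléans–Dade exponential with Brownian coefficient $\gamma$ and jump coefficients $\beta^i_t$ (relative to the $\mathbb P^0$‑intensity of $N^{i;0}$); the bound $\beta^i_t>-e^{k^\theta\nu^2_t}\theta^i$ keeps the modified intensities strictly positive, so $\Gamma$ is a strictly positive local martingale, and the BMO property of $\int\gamma\,dW^0$ together with the boundedness of the jump coefficients upgrades it to a uniformly integrable martingale satisfying a reverse Hölder inequality. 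Then $\Gamma_t\,\delta Y_t+\int_0^t\Gamma_s A_s\,ds$ is a $\mathbb P^0$‑local martingale; localising and passing to the limit using $\delta Y\in\S^2_T$, the integrability of $\Gamma$, and $e^{\varrho|\delta\xi|}\in\mathbb L^1(\mathbb P^0)$, it is a true martingale, whence
\begin{align*}
\delta Y_t=\frac{1}{\Gamma_t}\,\E^{\mathbb P^0}\Big[\Gamma_T\,\delta\xi+\int_t^T\Gamma_s A_s\,ds\,\Big|\,\mathcal F_t\Big]\le 0,
\end{align*}
because $\Gamma>0$, $\delta\xi\le 0$ and $A_s\le 0$. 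This gives $Y^1_t\le Y^2_t$ for all $t$, $\mathbb P^0$‑a.s.

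I expect the main obstacle to be the analytic legitimacy of this change of measure rather than the algebra. Two points need care: (i) showing that the solutions of \eqref{bsde:y} have BMO martingale parts — this is the step where Assumption \ref{assump:bmo_martingale} is essential, via the usual Itô/energy estimate exploiting the quadratic‑in‑$z$ and lower‑bounded‑in‑$u$ structure of $H$ established in the proof of Lemma \ref{lemma:bsde_y_existence}; and (ii) obtaining enough integrability of $\Gamma_T$ (Kazamaki/John–Nirenberg for the BMO exponential, combined with the bounded jump coefficients) to justify taking the conditional expectation against the quadratic integrand $A$ and against $\delta\xi$. The jump side only requires $\beta^i_t$ to stay above the negative of the intensity, which is precisely $\zeta^i_t>0$ from Lemma \ref{lemma:bsde_y_existence}; compensating the jump terms under the new measure is then a routine marked‑point‑process Girsanov computation in the spirit of Proposition \ref{prop:girsanov}.
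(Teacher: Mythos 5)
Your argument is correct in outline, but it takes a genuinely different route from the paper. You run the classical linearisation of the difference $\delta Y=Y^1-Y^2$ in the \emph{original} variables, absorbing the quadratic-in-$z$ increment into a stochastic coefficient $\gamma_t$ and the jump increment into coefficients $\beta^i_t$ bounded below by minus the intensity, and then conclude via a Dol\'eans--Dade change of measure whose legitimacy rests on BMO estimates for $\int Z^k\,dW^0$ and a reverse H\"older inequality (this is where Assumption \ref{assump:bmo_martingale} enters for you). The paper instead first applies the exponential (Cole--Hopf) transformation $\widetilde Y=-e^{-\rho Y}$, which exploits the exponential-utility structure to turn \eqref{bsde:y} into a BSDE whose driver is \emph{affine} in $(\widetilde y,\widetilde z,\widetilde u)$ with explicit coefficients $\phi^y,\phi^z,\phi^{u,i}$; the comparison then follows from the standard linear theory (Corollary 4.11 in \cite{oksendal2007applied}) after checking that the generator ordering survives the transformation, with no BMO or Kazamaki machinery needed. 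Your approach is more robust — it would work for any quadratic-in-$z$, exponentially structured-in-$u$ driver — but it is analytically heavier, and the two steps you flag yourself are real work: deducing the BMO property of the martingale parts from the exponential integrability of $\xi^1,\xi^2$ is not automatic for unbounded terminal conditions, and the upper bound on the jump coefficients $\beta^i_t$ (needed for the reverse H\"older inequality) requires controlling $e^{-\rho \tilde u^i}$, the same point the paper glosses over when asserting $\zeta^{y,z,u;i}-1\in[D_1,D_2]$. One small slip: the exact mean-value coefficient is $\beta^i_t=e^{k^\theta\nu^2_t}\theta^i\bigl(\zeta^i_t\,\E[e^{-\rho\alpha\min(\ell^{2,i},r^i)(2q-\min(\ell^{2,i},r^i))}]-1\bigr)$ rather than $e^{k^\theta\nu^2_t}\theta^i(\zeta^i_t-1)$; the lower bound $\beta^i_t>-e^{k^\theta\nu^2_t}\theta^i$ you need still holds, since both factors in the product are strictly positive.
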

\begin{proof}
    First, we introduce an auxiliary process, $\widetilde{Y}_t = -e^{-\rho Y_t}$. Then Ito's formula gives \begin{align}\label{bsde:tilde_y}
        \widetilde{Y}_t = \widetilde{\xi} + \int_t^T \widetilde{h}^{\nu,\ell}(s,\widetilde{Y}_s, \widetilde{Z}_s, \widetilde{U}_s) ds - \int_t^T \widetilde{Z}_s dW_s^0 - \sum_{i=1}^M \int_t^T \widetilde{U}_s d\widetilde{N}_s^{i;0},\quad \P^0-a.s.,
    \end{align}
    where the driver function $\widetilde{h}(t,\widetilde{y},\widetilde{z},\widetilde{u})$ is defined as,
    \begin{align*}
        \widetilde{h}^{\nu,\ell}(t,\widetilde{y},\widetilde{z},\widetilde{u}) &= \widetilde{y} \left( \phi \rho q^2 + \eta \rho \nu^2 - c^l\rho \nu - \rho q( \epsilon\lambda - 2\alpha\nu ) + \frac{1}{2} \rho^2q^2\sigma^2  \right) \\
        &+ \widetilde{z} \left( - \rho q\sigma \right) +
        \sum_{i=1}^M e^{k^\theta \nu} \theta^i \left( \widetilde{u}^i + \widetilde{y} \right) \left( \E\left[e^{-\rho\alpha \min(\ell^i,r^i) \left(2q - \min(\ell^i,r^i) \right)}\right] - 1 \right),
    \end{align*}
    and $\widetilde{Z}_t, \widetilde{U}_t$ are given by
    $$
    \widetilde{Z}_t = -\rho \widetilde{Y}_{t^-}Z_t,\quad \widetilde{U}^i_t = \widetilde{Y}_{t^-}\left( e^{-\rho U_t^i} - 1 \right).
    $$
    We note that $\widetilde{Y}_t$ is a linear BSDE with $\widetilde{\xi}\in \L_T^2$. Then by \cite[Theorem 4.3.1]{zhang2017backward}, there exists a unique solution to BSDE \eqref{bsde:tilde_y}. By definition, we have \begin{align*}
        \widetilde{Y}_t = -e^{-\rho Y_t}, \quad \widetilde{\xi} = -e^{-\rho \xi},
    \end{align*}
    which implies $\widetilde{\xi}^1\leq \widetilde{\xi}^2$.
    Furthermore,\begin{align*}
       & \widetilde{h}^{\nu^1,\ell^1}(t, \widetilde{Y}^1_t, \widetilde{Z}^1_t, \widetilde{U}^{1}_t) - \widetilde{h}^{\nu^2,\ell^2}(t, \widetilde{Y}^1_t, \widetilde{Z}^1_t, \widetilde{U}^{1}_t)\\
       &= \Big( -\rho \widetilde{Y}_t^1 h^{\nu^1,\ell^1}(t,Z_t^1,U_t^1) + \widetilde{Y}_t^1\sum_{i=1}^M e^{k^\theta\nu^1_t}\theta^i (-\rho U_t^i + 1 - e^{-\rho U_t^i}) - \frac{1}{2}\rho^2Z_t^2\widetilde{Y}_t^1 \Big)\\
       &- \Big( -\rho \widetilde{Y}_t^1 h^{\nu^2,\ell^2}(t,Z_t^1,U_t^1) + \widetilde{Y}_t^1\sum_{i=1}^M e^{k^\theta \nu^2_t}\theta^i(-\rho U_t^i + 1 - e^{-\rho U_t^i}) - \frac{1}{2}\rho^2Z_t^2\widetilde{Y}_t^1 \Big) \\
        &= -\rho \widetilde{Y}_t^1 \left(h^{\nu^1,\ell^1}(t,Z_t^1,U_t^1) -h^{\nu^2,\ell^2}(t,Z_t^1,U_t^1) \right) - \widetilde{Y}_t^1\sum_{i=1}^M \theta^i \left(e^{k^\theta \nu^2} - e^{k^\theta \nu^1}\right)\left( -\rho U_t^i +1 - e^{-\rho U_t^i} \right) \\
        &\leq -\rho \widetilde{Y}_t^1 \left(h^{\nu^1,\ell^1}(t,Z_t^1,U_t^1) -h^{\nu^2,\ell^2}(t,Z_t^1,U_t^1) \right) - \widetilde{Y}_t^1\sum_{i=1}^M \theta^i \left( -\rho U_t^i +1 - e^{-\rho U_t^i} \right) \\
        &\leq -\rho \widetilde{Y}_t^1 \left(h^{\nu^1,\ell^1}(t,Z_t^1,U_t^1) -h^{\nu^2,\ell^2}(t,Z_t^1,U_t^1) \right) \\
        &\leq 0,
    \end{align*}
    recalling that $\nu_t \leq 0$, and so $e^{k^\theta\nu}\in (0,1]$ and $ e^x\geq x+1$. Let $\hat{Y}_t = \widetilde{Y}_t^2 - \widetilde{Y}_t^1$, $\hat{Z}_t = \widetilde{Z}_t^2 - \widetilde{Z}_t^1$, and $\hat{U}_t = \widetilde{U}^{2}_t - \widetilde{U}^{1}_t$. We get 
    $$
    d\hat{Y}_t = - \Big(\widetilde{h}^{\nu^2,\ell^2}(t, \widetilde{Y}^2_t, \widetilde{Z}^2_t, \widetilde{U}^{2}_t) - \widetilde{h}^{\nu^1,\ell^1}(t, \widetilde{Y}^1_t, \widetilde{Z}^1_t, \widetilde{U}^{1}_t)\Big)dt + \hat{Z}_t dt + \sum_{i=1}^M \hat{U}_t^{i} d\widetilde{N}_t^{i;0}.
    $$
    Therefore
    \begin{align*}
        \widetilde{h}^{\nu^2,\ell^2}(t, \widetilde{Y}^2_t,  \widetilde{Z}^2_t, \widetilde{U}^{2}_t) - \widetilde{h}^{\nu^1,\ell^1}(t, \widetilde{Y}^1_t, \widetilde{Z}^1_t, \widetilde{U}^{1}_t) &= 
        \widetilde{h}^{\nu^2,\ell^2}(t, \widetilde{Y}^2_t,  \widetilde{Z}^2_t, \widetilde{U}^{2}_t) - \widetilde{h}^{\nu^2,\ell^2}(t, \widetilde{Y}^1_t, \widetilde{Z}^2_t, \widetilde{U}^{2}_t) \\ &+ \widetilde{h}^{\nu^2,\ell^2}(t, \widetilde{Y}^1_t, \widetilde{Z}^2_t, \widetilde{U}^{2}_t) - \widetilde{h}^{\nu^2,\ell^2}(t, \widetilde{Y}^1_t, \widetilde{Z}^1_t, \widetilde{U}^{2}_t) \\
        &+ \widetilde{h}^{\nu^2,\ell^2}(t, \widetilde{Y}^1_t, \widetilde{Z}^1_t, \widetilde{U}^{2}_t) - \widetilde{h}^{\nu^2,\ell^2}(t, \widetilde{Y}^1_t, \widetilde{Z}^1_t, \widetilde{U}^{1}_t) \\
        &+ \widetilde{h}^{\nu^2,\ell^2}(t, \widetilde{Y}^1_t, \widetilde{Z}^1_t, \widetilde{U}^{1}_t) - \widetilde{h}^{\nu^1,\ell^1}(t, \widetilde{Y}^1_t, \widetilde{Z}^1_t, \widetilde{U}^{1}_t).
    \end{align*}
    Note that the driver function $\widetilde{h}(t,\widetilde{Y},\widetilde{Z},\widetilde{U})$ is linear in $\widetilde{Y}, \widetilde{Z}$ and $\widetilde{U}$, which implies,
    $$
   \widetilde{h}^{\nu^2,\ell^2}(t, \widetilde{Y}^2_t, \widetilde{Z}^2_t, \widetilde{U}^{2}_t) - \widetilde{h}^{\nu^1,\ell^1}(t, \widetilde{Y}^1_t, \widetilde{Z}^1_t, \widetilde{U}^{1}_t) = \xi(t) + \phi^y(t)\hat{Y}_t + \phi^z(t)\hat{Z}_t + \sum_{i=1}^M \phi^{u,i}(t)\hat{U}_t^i,
    $$
    where
    \begin{align*}
        \xi(t) &= \widetilde{h}^{\nu^2,\ell^2}(t, \widetilde{Y}^1_t, \widetilde{Z}^1_t, \widetilde{U}^{1}_t) - \widetilde{h}^{\nu^1,\ell^1}(t, \widetilde{Y}^1_t, \widetilde{Z}^1_t, \widetilde{U}^{1}_t) \geq 0,\\
        \phi^y(t) &= \phi\rho q^2 + \eta\rho\nu^2 - c^l\rho\nu - \rho q( \epsilon\lambda -2\alpha\nu ) + \frac{1}{2} \rho^2q^2\sigma^2 + \sum_{i=1}^M e^{k^\theta \nu}\theta^i\left( \E \left[ e^{-\rho\alpha \min(\ell^i,r^i)\left(2q - \min(\ell^i,r^i) \right)} \right] - 1 \right), \\
        \phi^z(t) &= - \rho q\sigma, \\
        \phi^{u,i}(t) &= e^{k^\theta \nu} \theta^i \left( \E\left[ e^{-\rho\alpha \min(\ell^i,r^i) \left(2q - \min(\ell^i,r^i) \right)} \right] - 1 \right).
    \end{align*}
    Combining the equations above, we get
    $$
    d\hat{Y}_t = - \hat{h}(t)dt + \hat{Z}_t dt + \sum_{i=1}^M \hat{U}_t^{i} d\widetilde{N}_t^{i;0},\quad 
    \hat{Y}_T = \widetilde{\xi}^2 - \widetilde{\xi}^1\geq 0.
    $$
    Note that $\hat{h}(t)\geq \phi^y(t)\hat{Y}_t + \phi^z(t)\hat{Z}_t + \sum_{i=1}^M \phi^{u,i}(t)\hat{U}_t^i$. Then by Corollary 4.11 in \cite{oksendal2007applied}, we deduce that $
    \widetilde{Y}_t^1 \leq \widetilde{Y}_t^2
    $. Moreover, the function $-\frac{\log (-x)}{\rho}$ monotone increases in $x$ $\forall \rho>0$, then we get $
    Y_t^1 \leq Y_t^2.
    $
\end{proof}

As a direct consequence of this comparison theorem, we have the uniqueness. 

\begin{corollary}\label{corol:unique_y}
    Under Assumption \ref{assump:bmo_martingale} with $\xi\in \mathcal C$, the solution to BSDE \eqref{bsde:y} is unique.
\end{corollary}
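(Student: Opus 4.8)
The plan is to deduce uniqueness from the comparison result of Lemma~\ref{lemma:comparison} applied in both directions. Let $(Y^1,Z^1,U^1)$ and $(Y^2,Z^2,U^2)$ be two elements of $\S^2_T(\R)\times\H^2_T(\R)\times\K^2_T(\R^M)$ solving BSDE~\eqref{bsde:y} with the same terminal condition $\xi\in\mathcal C$ (which exists by Lemma~\ref{lemma:bsde_y_existence}), and recall that the generator of \eqref{bsde:y} is $H(t,z,u)=\sup_{\nu,\ell}h^{\nu,\ell}(t,z,u)$. The goal is to show $Y^1\equiv Y^2$, and then to recover $(Z^1,U^1)=(Z^2,U^2)$ from the uniqueness of the semimartingale decomposition.

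The first step is to bridge the gap between the generator $H$ of \eqref{bsde:y} and the parametrised generators $h^{\nu,\ell}$ for which Lemma~\ref{lemma:comparison} is stated. Using the concavity of $\nu\mapsto h^{\nu,\ell}$ on $\mathcal A^\rho$ (established via $\partial_{\nu\nu}h^{\nu,\ell}<0$) together with the constrained concave structure of the dark‑pool allocation, exactly as in the proof of Theorem~\ref{thm:discounted_factor}, a measurable‑selection argument yields progressively measurable admissible controls $(\hat\nu^k,\hat\ell^k)\in\mathcal A$, $k=1,2$, with $h^{\hat\nu^k,\hat\ell^k}(t,Z^k_t,U^k_t)=H(t,Z^k_t,U^k_t)$ for $d\P^0\otimes dt$‑almost every $(t,\omega)$. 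Consequently $(Y^k,Z^k,U^k)$ also solves the BSDE with the \emph{fixed} generator $h^{\hat\nu^k,\hat\ell^k}$ and terminal datum $\xi$. I expect this reduction to be the main obstacle: one must check that the essential supremum defining $H$ is attained by a control that is \emph{admissible} (respecting the constraints in $\mathcal A^\rho$ and $\mathcal L$, in particular $\sum_i\ell^i_t\le Q_t$) and that the selection can be made progressively measurable, so that Lemma~\ref{lemma:comparison} legitimately applies; everything else is routine.

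Next I would run the comparison twice. For the bound $Y^1_t\le Y^2_t$: since $H(t,z,u)\ge h^{\hat\nu^1,\hat\ell^1}(t,z,u)$ pointwise, the process $Y^2$ is a supersolution of the BSDE with driver $h^{\hat\nu^1,\hat\ell^1}$ and terminal value $\xi$, whereas $Y^1$ is its solution; the argument of Lemma~\ref{lemma:comparison} only uses that, along the path of the lower object, its driver is dominated by that of the upper object, and it carries over verbatim when the upper object is a supersolution, giving $Y^1_t\le Y^2_t$ for all $t$, $\P^0$‑a.s. Exchanging the roles of the two solutions and using $(\hat\nu^2,\hat\ell^2)$ instead yields $Y^2_t\le Y^1_t$. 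Hence $Y^1\equiv Y^2$.

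Finally, $Y^1\equiv Y^2$ forces $(Z^1,U^1)=(Z^2,U^2)$: subtracting the two copies of \eqref{bsde:y}, since $Y^1-Y^2\equiv 0$ the finite‑variation (drift) part vanishes and one gets $\int_0^\cdot (Z^1_s-Z^2_s)\,dW^0_s+\sum_{i=1}^M\int_0^\cdot (U^{1,i}_s-U^{2,i}_s)\,dN^{i;0}_s\equiv 0$. Identifying the continuous martingale part forces $Z^1=Z^2$ $d\P\otimes dt$‑a.e., and then identifying the jumps at the (independent) jump times of each $N^{i;0}$ forces $U^{1,i}=U^{2,i}$ a.e. for every $i$, which completes the proof.
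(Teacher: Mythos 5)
Your proof is correct and follows essentially the same route as the paper, which simply invokes Lemma~\ref{lemma:comparison} in both directions with identical terminal condition $\xi$. The extra care you take — selecting measurable maximizers $(\hat\nu^k,\hat\ell^k)$ so as to pass from the sup-generator $H$ to the fixed-control generators $h^{\nu,\ell}$ for which the comparison lemma is actually stated, running the comparison in solution-versus-supersolution form, and recovering $(Z,U)$ from the martingale decomposition — is a genuine and needed tightening of a step the paper leaves implicit, and your observation that attainment of the supremum (or an $\varepsilon$-optimal selection) suffices here, independently of the condition $U\in\mathcal U$, is exactly the right way to avoid circularity with Theorem~\ref{thm:bsde_value}.
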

We now introduce an alternative set of compensation scheme, defined through the BSDE \eqref{bsde:y} above.
\begin{align*}
\widetilde \Xi &:= \Big\{ \xi^{Y_0, Z, U} = Y_0 - \int_0^T H(s, Z_s, U_s) ds + \int_0^T Z_s dW_s^{0} + \sum_{i=0}^M \int_0^T U^{i}_s d\widetilde{N}^{i;0}_s,\\
&\hspace{1.5em} \text{with } Z\in \H^2_T(\R), U \in \K_T^2(\R^M), -e^{-\rho Y_0}\geq R_0 \Big\}.    
\end{align*}
The calculations made in the proof of Theorem \ref{thm:bsde_value} below emphasize that compensation $\xi\in \widetilde{\Xi}$ does not necessarily lead to an optimizer $(\hat\nu,\hat\ell)$ without any additional assumption about the process $U$. We thus introduce the following condition.
\[\mathcal U:= \left\{ u\in \mathbb R^M,\; -2 \eta - \frac{1}{\rho} (k^\theta)^2 e^{k^\theta \nu_t} \sum_{i=1}^M \theta^i \left( e^{-\rho u^i_t} \E\left[ e^{-\rho \alpha \min(\ell^i_t, r^i_t) \left( 2q - \min(\ell^i_t, r^i_t) \right)} \right] + \rho u^i - 1 \right)<0 \right\},\]
and we set 
\[\Xi:=\{\xi^{Y_0,Z,U}\in \widetilde\Xi,\; U_t\in \mathcal U,\text{ for any } t\in [0,T]\}.\]
\begin{theorem} \label{thm:bsde_value}

    \begin{enumerate}
        \item Any contract $\xi \in \mathcal{C}$ has a unique representation as $\xi = \xi^{Y_0,Z,U}$ for some $(Y, Z,U) \in \S^2_T(\R) \times \H^2_T(\R)\times \K^2_T(\R^M)$. In particular, $\mathcal{C} = \Xi$.
       \item For any $\xi = \xi^{Y_0,Z,U}\in \Xi$, the utility of the trader is given by $V_0 = -e^{-\rho \bar{Y}_0^{\hat{\nu},\hat{\ell}}}
       $ and optimal control $(\hat{\nu}, \hat{\ell}) \in\mathcal{A}$, which is given by
       \begin{enumerate}
            \item[(a)]
            \begin{enumerate}
                \item[1.] If $M=1$, $\hat\ell_t = Q_t$ 
                \item[2.] If $M>1$, \begin{align*}
                    & \theta^i (Q_t - \hat\ell^{i}_t) \exp\left( -\rho \alpha \hat\ell^{i}_t (2Q_t - \hat\ell^{i}_t) \right) \left(1 - F\left(\hat\ell^{i}_t e^{k^{c} c^{d,i}_t}\right)\right)\\
                    &= \theta^j (Q_t - \hat\ell^{j}_t) \exp\left( -\rho \alpha \hat\ell^{j}_t (2Q_t - \hat\ell^{j}_t) \right) \left(1 - F\left(\ell^{j,*}_t e^{k^{c} c^{d,j}_t}\right)\right), \; \forall i,j\leq M,\; \forall t\in [0,T].
                \end{align*}
            \end{enumerate}
            \item[(b)] The optimal liquidation strategy on the lit pool is $$
              \hat\nu_t = \arg\min_{(\nu, \hat\ell) \in \mathcal{A}} h^{\nu,\hat\ell} (t,z)
              \text{ with }\partial_\nu h^{\hat\nu,\hat\ell} (t,z) = 0.$$ 
        \end{enumerate}
    \end{enumerate}
    
\end{theorem}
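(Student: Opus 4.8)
The plan is to run the usual two-step Principal--Agent argument, now with the quadratic-with-jumps BSDE \eqref{bsde:y} playing the role that \eqref{bsde:discounted} played in the linear case, on the filtration generated by $W^0$ and $(N^{i;0})_{i\le M}$. \emph{For Part 1 (representation and $\mathcal C=\Xi$):} given $\xi\in\mathcal C$, Assumption \ref{assump:bmo_martingale} places us in the setting of Lemma \ref{lemma:bsde_y_existence}, so \eqref{bsde:y} with terminal datum $\xi$ has a solution $(Y,Z,U)\in\S^2_T(\R)\times\H^2_T(\R)\times\K^2_T(\R^M)$, unique by Corollary \ref{corol:unique_y}; integrating \eqref{bsde:y} on $[0,T]$ and using $Y_T=\xi$ gives $\xi=\xi^{Y_0,Z,U}$, and uniqueness of the BSDE makes $\xi\mapsto(Y_0,Z,U)$ injective, which is the asserted unique representation. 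For $\Xi\subseteq\mathcal C$, I would check, for $\xi=\xi^{Y_0,Z,U}$ with $Z\in\H^2_T(\R)$ and $U\in\K^2_T(\R^M)$ valued in $\mathcal U$, the moment bound $\sup_{\nu,\ell}\E^{\mathbb P^{\nu,\ell}}\big[e^{\varrho|\xi|}+|\Psi^E(\text{PnL}^E_T-\xi)|\big]<\infty$ from the definition of $\mathcal C$: this comes from the energy (John--Nirenberg) inequality for the BMO martingale of Assumption \ref{assump:bmo_martingale} applied to $\int_0^\cdot Z_s\,dW_s^0+\sum_i\int_0^\cdot U_s^i\,d\widetilde N_s^{i;0}$, the boundedness of $c^l,c^{d,i}$, the $L^{4p}$-integrability in $\mathcal V$ and $\mathcal L$, and Remark \ref{rem:integrability} to control $X_T$ and $Q$. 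The reverse inclusion $\mathcal C\subseteq\Xi$ uses the representation just obtained together with the observation recorded before the theorem: a contract for which the agent's inner problem has no maximiser cannot be optimal, so one restricts to contracts for which $\sup_{\nu,\ell}h^{\nu,\ell}$ is attained, and the second-order condition $\partial_{\nu\nu}h^{\nu,\hat\ell}<0$ guaranteeing this is exactly the inequality defining $\mathcal U$; hence $U_t\in\mathcal U$ for all $t$, i.e. $\xi^{Y_0,Z,U}\in\Xi$.

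\emph{For Part 2 (verification and the value):} fix $\xi=\xi^{Y_0,Z,U}\in\Xi$ with associated solution $(Y,Z,U)$ of \eqref{bsde:y}. Substituting this representation into the expression for $\text{PnL}_T$ derived above, I would introduce the certainty-equivalent process
\begin{align*}
\bar Y_t^{\nu,\ell}&:=Y_t+X_0+Q_0S_0-\alpha Q_0^2+\int_0^t\Big(-\phi Q_s^2-\eta\nu_s^2+c_s^l\nu_s+Q_s(\epsilon\lambda_s-2\alpha\nu_s)\Big)ds\\
&\qquad+\int_0^t Q_s\sigma\,dW_s^0+\sum_{i=1}^M\int_0^t\alpha\min(\ell_s^i,r_s^i)\big(2Q_{s-}-\min(\ell_s^i,r_s^i)\big)\,dN_s^{i;0},
\end{align*}
so that $\bar Y_T^{\nu,\ell}=\text{PnL}_T$ while $\bar Y_0^{\nu,\ell}=Y_0+X_0+Q_0S_0-\alpha Q_0^2$ is deterministic and independent of $(\nu,\ell)$. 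Applying It\^o's formula to $-e^{-\rho\bar Y_t^{\nu,\ell}}$, moving to $\mathbb P^{\nu,\ell}$ by Girsanov (under which $W^0$ acquires drift $\gamma\nu_t/\sigma$ and $N^{i;0}$ intensity $e^{k^\theta\nu_t}\theta^i$) and inserting $dY_t=Z_t\,dW_t^0+\sum_iU_t^i\,dN_t^{i;0}-H(t,Z_t,U_t)\,dt$, a direct computation shows the finite-variation part of $-e^{-\rho\bar Y^{\nu,\ell}}$ equals $\rho\,e^{-\rho\bar Y_t^{\nu,\ell}}\big(h^{\nu,\ell}(t,Z_t,U_t)-H(t,Z_t,U_t)\big)\le 0$, by definition of $H=\sup_{\nu,\ell}h^{\nu,\ell}$, and vanishes at the maximiser $(\hat\nu,\hat\ell)$. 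Hence $-e^{-\rho\bar Y^{\nu,\ell}}$ is a $\mathbb P^{\nu,\ell}$-supermartingale (its local-martingale part being a true martingale thanks to the BMO assumption and the $4p$-moments) and a $\mathbb P^{\hat\nu,\hat\ell}$-martingale; taking expectations yields $J(\nu,\ell)=\E^{\mathbb P^{\nu,\ell}}[-e^{-\rho\text{PnL}_T}]\le -e^{-\rho\bar Y_0}$ with equality for $(\hat\nu,\hat\ell)$, i.e. $V_0=-e^{-\rho\bar Y_0^{\hat\nu,\hat\ell}}$. It then remains to identify the maximiser of $h^{\nu,\ell}$: maximising first over $\ell$ under $\sum_i\ell^i\le Q_t$ (the $\ell$-dependence not involving $\nu$), the Lagrangian computation with $r^i=A^ie^{-k^cc^{d,i}}$, $A^i\sim F$, and stationarity $\nabla_\ell L=0$ reproduces — now with the extra factor $e^{-\rho\alpha\ell^i(2Q_t-\ell^i)}$ compared to the proof of Theorem \ref{thm:discounted_factor} — the displayed equalities, with $\hat\ell_t=Q_t$ when $M=1$ because the relevant one-dimensional map is monotone on $[0,Q_t]$; maximising then over $\nu$, the condition $U_t\in\mathcal U$ is precisely strict concavity $\partial_{\nu\nu}h^{\nu,\hat\ell}<0$, so $\partial_\nu h^{\hat\nu,\hat\ell}=0$ has a unique root, which, intersected with the sign constraint $\nu\le0$ of $\mathcal A^\rho$, gives $\hat\nu_t$; admissibility $(\hat\nu,\hat\ell)\in\mathcal A^\rho$ follows from the defining $L^{4p}$-bounds.

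\emph{Main obstacle.} The formal manipulations are routine; the delicate point is the integrability bookkeeping, because the measure $\mathbb P^{\nu,\ell}$ itself moves with the control. One must verify that the stochastic integrals in the It\^o expansion of $-e^{-\rho\bar Y^{\nu,\ell}}$ are genuine $\mathbb P^{\nu,\ell}$-martingales, uniformly enough in $(\nu,\ell)$ to pass to expectations — this is exactly where Assumption \ref{assump:bmo_martingale} (BMO for some $d>2\rho$), the $4p$-moment bounds defining $\mathcal V$ and $\mathcal L$, the conjugacy $q=p/(p-1)$, and Remark \ref{rem:integrability} all enter — and, for Part 1, that the forward functional $\xi^{Y_0,Z,U}$ really satisfies the exponential-moment requirement defining $\mathcal C$, which again rests on BMO estimates for $\int Z\,dW^0$ and $\int U\,d\widetilde N$.
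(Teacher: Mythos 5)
Your proposal is correct and follows essentially the same route as the paper: the certainty-equivalent process $\bar Y$ with $\bar Y_T=\text{PnL}_T$, It\^o's formula applied to $-e^{-\rho\bar Y}$ so that the drift is $\rho e^{-\rho\bar Y}\bigl(h^{\nu,\ell}-H\bigr)\le 0$ with equality at the maximiser, followed by the same Lagrangian computation for $\hat\ell$ and the concavity-in-$\nu$ argument via $U\in\mathcal U$, and the same BSDE existence/uniqueness argument for the representation $\mathcal C=\Xi$. If anything, you are more explicit than the paper on the two points it glosses over — the verification that $\Xi\subseteq\mathcal C$ via BMO/John--Nirenberg estimates, and the uniform-in-$(\nu,\ell)$ martingale property needed to pass to expectations — which is a welcome addition rather than a deviation.
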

\begin{remark}
    Note that if $k^\theta = 0$, the optimal trading rate in lit market is given by
        $$
        \hat\nu_t = \left(-\frac{2\alpha Q_t - c_t^l}{2\eta}\right)^-.
        $$
\end{remark}

\begin{remark}
   When the inventory $Q$ falls to zero or below, the theorem implies that the trader’s optimal strategy is 
\[
\hat{\nu} = \hat{\ell}^i = 0, \quad i \leq M.
\]
This behavior is consistent with an optimal liquidation problem on the horizon $
T \wedge \tau,$
where $\tau$ denotes the first time $t$ such that $Q_t \leq 0$. Note that the exchange must provide a compensation $\xi$, which is an $\mathcal{F}_T$-measurable random variable. Equivalently, if $\tau$ occurs before $T$, the trader ceases all market activities but still receives the compensation at time $T$.
\end{remark}

\begin{proof}[Proof of Theorem \ref{thm:bsde_value}]
We start by defining \begin{equation}\label{eq:bar_y}
\bar{Y}_t = Y_t + X_t + Q_t(S_t - \alpha Q_t) - \int_0^t \phi Q_s^2 ds.
\end{equation}
Applying Ito's formula to $e^{-\rho \bar{Y}_t}$, we have \begin{align*}
    \frac{de^{-\rho \bar{Y}_t}}{\rho e^{-\rho \bar{Y}_{t-}}} &=  \Big[H(t, Z_t, U_t) - \Big( -\phi Q_t^2 -\eta \nu_t^2 + c_t^l \nu_t + Q_t ( \epsilon \lambda_t - 2\alpha\nu_t ) - \frac{1}{2} \rho (Z_t + Q_t \sigma)^2 \\
&- \sum_{i=1}^M \left( e^{-\rho U_t^i}\E\left[ e^{-\rho \alpha \min(\ell_t^i, r_t^i) \left( 2Q_{t-} - \min(\ell_t^i, r_t^i) \right)} \right] - 1 \right) \frac{e^{k^\theta\nu_t}\theta^i}{\rho} \Big)\Big] dt\\
&- (Z_t + Q_t \sigma) dW_t^{0}+ \sum_{i=1}^M \frac{1}{\rho} \left( e^{-\rho (U_t^i + \alpha \min(\ell_t^i, r_t^i) \left( 2Q_{t-} - \min(\ell_t^i, r_t^i) \right))} - 1 \right) d\widetilde{N}_t^{i;0}.
\end{align*}
Since $H(t,z,u) = \sup_{\nu,\ell} h^{\nu,\ell}(t,z,u)$, we deduce that\begin{align*}
    V_0 &= \sup_{\nu,\ell} \E^{\P^{\nu,\ell}} \left[ -e^{-\rho \text{PnL}_T}  \right] \\
    &= \sup_{\nu,\ell} \E^{\P^0} \left[ -e^{-\rho \bar{Y}_T} \mathcal{E}_T^\nu D_T \right] \\
    &= - e^{-\rho \bar{Y}_0} - \E^{\P^{0}} \left[ \mathcal{E}_T^\nu D_T \int_0^T \rho e^{-\rho\bar{Y}_t} (H(t,Z_t,U_t) - h^{\nu,\ell}(t,Z_t,U_t)) dt \right] \\
    &\leq - e^{-\rho \bar{Y}_0},\forall (\nu,\ell)\in \mathcal A^\rho,
\end{align*}
and the equality holds if and only if controls $(\nu,\ell)$ are chosen as in the maximizer of $H$. Note that $(\xi,c^l,c^d)\in \mathcal C$ such that $\mathcal A^\star=\emptyset$ would lead to a suboptimal strategy for the bilevel optimization \eqref{ex:pb}. The maximizers of $H$ sufficiently and necessarily induce that $U\in \mathcal U$ which is equivalent to the condition that $\mathcal A^\star\notin \emptyset$. Considering $(\hat\nu,\hat\ell) = \arg\sup_{(\nu,\ell)} h^{\nu, \ell}(t,z,u)$, we have $$
    V_0 = -e^{-\rho \bar{Y}_0^{\hat{\nu},\hat{\ell}}}.
    $$

   To solve the optimal control, we note that we are able to decompose the original optimization problem into two sub-problems.
    \begin{align*}
        \sup_{(\nu,\ell)\in\mathcal{A}} h^{\nu,\ell}(t,z,u) &= -\phi q^2 -\eta \nu^2 + c^l \nu + q ( \epsilon \lambda - 2\alpha\nu ) - \frac{1}{2} \rho (z + q \sigma)^2 \\
    &- \sum_{i=1}^M \left( e^{-\rho u^i} \E\left[ e^{-\rho \alpha \min(\ell^i, r^i) \left( 2q - \min(\ell^i, r^i) \right)} \right] + \rho u^i - 1 \right) \frac{e^{k^\theta\nu}\theta^i}{\rho}  \\
    \text{s.t.} &\quad \sum_{i=1}^M \ell_i \leq q.
    \end{align*}
    therefore, for the dark pool trading strategy, the corresponding Lagrangian function is given by \begin{align*}
        L(\ell, \vartheta) &= \sum_{i=1}^M -\theta^i \E\left[ e^{-\rho \alpha \min(\ell^i, r^i) \left( 2q - \min(\ell^i, r^i) \right)} \right] - \vartheta \left(\sum_{i=1}^M \ell_t^i - q\right) \\
    &= \sum_{i=1}^M -\theta^i\Bigg( \exp\left( -\rho \alpha \ell^i (2q - \ell^i) \right) \left(1 - F\left(\ell^i e^{k^c c^{d,i}}\right)\right) \\
    &+ \int_0^{\ell^i e^{k^c c^{d,i}}} \exp\left( -\rho \alpha  a e^{-k^c c^{d,i}} \left(2q - a e^{-k^c c^{d,i}} \right) \right)  dF(a) \Bigg)
 - \vartheta \left(\sum_{i=1}^M \ell_t^i - q\right).
    \end{align*}
    By optimality condition, we have
    \begin{align*}
    \nabla_\ell L(\ell, \vartheta) &= \left[
        \begin{array}{c}
       2\theta^i\rho \alpha (q - \ell^{i}) \exp\left( -\rho \alpha \ell^{i} (2q - \ell^{i}) \right) \left(1 - F\left(\ell^{i} e^{k^{c} c^{d,i}}\right)\right) - \vartheta \\
        \vdots \\
        2\theta^j\rho \alpha (q - \ell^{j}) \exp\left( -\rho \alpha \ell^{j} (2q - \ell^{j}) \right) \left(1 - F\left(\ell^{j} e^{k^{c} c^{d,j}}\right)\right) - \vartheta
        \end{array}\right] \\
        &= 0
    \end{align*}
    then the optimal condition could be derived directly. To solve the optimal trading rate in lit market, we note that given the optimal dark pool trading strategy $\hat \ell$, we have \begin{align*}
        \partial_{\nu\nu} h^{\nu,\hat\ell}(t,z,u) &= -2 \eta - \frac{1}{\rho} (k^\theta)^2 e^{k^\theta \nu} \sum_{i=1}^M \theta^i \left( e^{-\rho u^i} \E\left[ e^{-\rho \alpha \min(\ell^i, r^i) \left( 2q - \min(\ell^i, r^i) \right)} \right] + \rho u^i - 1 \right) \\
        & < 0,
    \end{align*}

    which implies $h^{\nu,\hat\ell}(t,z)$ is strictly concave in $\nu$. That is, there exists a unique solution $\hat\nu$ satisfying $$
    \partial_{\nu} h^{\hat\nu,\hat\ell}(t,z,u) = 0.
    $$

    Alternatively, this result could have been proved directly by using the properties of BSDE. For all $\xi\in \mathcal{C}$ satisfying Assumption \ref{assump:bmo_martingale}, suppose that $(\hat{\nu}, \hat{\ell})\in \mathcal{A}$ maximizing $
    h^{\nu, \ell} (t,z,u)
    $. By the uniqueness of BSDE \eqref{bsde:y}, there exists unique $(Y,Z,U)\in \S^2_T(\R)\times \H^2_T(\R)\times \K^2_T(\R^M)$. Then by definition of $\xi$, we have \begin{align*}
            \xi :&= \xi^{Y_0,Z,U} = Y_0 - \int_0^T h^{\hat\nu,\hat\ell}(s, Z_s, U_s) ds + \int_0^T Z_s dW_s^{0} + \sum_{i=0}^M \int_0^T U^{i}_s d\widetilde{N}^{i;0}_s,
        \end{align*}
    which implies $\mathcal{C} = \Xi$. Furthermore, under this representation $\xi^{Y_0,Z,U}$, we get
    \begin{align*}
        V_0 = -e^{-\rho \bar{Y}_0}= -e^{-\rho (Y_0 + X_0 + Q_0(S_0 - \alpha Q_0))},
    \end{align*}
    with optimal control $\hat{\nu}, \hat{\ell}$ from comparison theorem.
\end{proof}
\begin{remark}
As we show in the proof of Theorem \ref{thm:bsde_value}, the optimal trading strategy in one dark pool is $\hat\nu_t = q_t$ because by definition we have $$
2\theta\rho \alpha \exp\left( -\rho \alpha \ell (2q - \ell) \right) \left(1 - F\left(\ell e^{k^{c} c^{d}}\right)\right) > 0,
$$        
which implies if there are more than one dark pools, it is always optimal to allocate all shares to the dark pools, that is,
$
\sum_{i=1}^M \ell_t^i = q_t.
$
The optimal solution should be obtained on the boundary.
\end{remark}

\begin{remark}\label{remark:closedloop}
    Note that the optimizers $\hat\nu,\hat\ell$ are closed-loop and Markovian with respect to $Q,Z,c^l,c^d$.  
\end{remark}

\subsection{Optimal rebates and fees policy}\label{sec:exchange}
We now turn to the bilevel optimization focusing on the exponential risk averse large trader trader formulated in Section \ref{subsec:contracting}. In view of Theorem \ref{thm:bsde_value}, the contracting problem of the exchange becomes
\begin{align*}
\sup_{Y_0, (Z, U, c^l, c^{d})\in \mathcal Z;\, (\hat\nu, \hat\ell)\in \mathcal{A} } &\E^{\P^{\nu,\ell}} \Bigg[ \Psi^E\Bigg(PnL_T^E - Y^{Y_0,Z,U}_T \Bigg) \Bigg] \\
\text{s.t.} \quad &\bar{Y}_0 \geq -\frac{\log(-R_0)}{\rho},
\end{align*}
where the process $\bar{Y}_t$ is defined in Equation \eqref{eq:bar_y}, the set $\mathcal{Z}$ is defined as,
$$
\mathcal{Z} = \bigg\{ (Z, U, c^l, c^{d}): U\in \mathcal U, (\hat\nu, \hat\ell) \in \mathcal{A}, (c^l, c^{d}) \in \R^{+,M+1} \bigg\}.
$$
To solve this contracting problem, we first introduce an auxiliary process
\[
dI_t = \left( -c_t^l(\hat\nu_t + \lambda_t) + \kappa(\gamma \hat\nu_t + \epsilon \lambda_t) \right) dt + \sum_{i=1}^M c_t^{i,d} \min(r^{i}, \hat\ell_t^{i}) dN_t^{i;\nu,\ell},
\]
with initial condition $I_0 = 0$ so that the value function is given by
$$
J^E_0 = \sup_{Y_0, Z, U, c^l, c^{d,i};\, (\hat\nu, \hat\ell)\in \mathcal{A} } \E^{\P^{\nu,\ell}} \left[\Psi^E(I_T - Y_T^{Y_0,Z,U}) \right].
$$
Informally, by dynamic programming, we derive the corresponding HJB PDE as,

\begin{equation}\label{eq:ex_HJB}
\begin{cases}
    \partial_tv(t,q,s,x, y,\iota) &+ \sup_{c^l,c^{d,i},z,u} \mathcal F^{c^l,c^d,z,u}[v](t,q,s,x,y,\iota) = 0,\; t<T\\
    v(T,q,s,x,y,\iota)&=\Psi^E(\iota-y),\; (q,s,x,y,\iota)\in \mathbb R^5.
\end{cases}
\end{equation}
where $\mathcal F^{c^l,c^d,z,u}[\cdot]$ is an infinitesimal operator given by 
\begin{align*}
    &\mathcal F^{c^l,c^d,z,u}[v](t,q,s,x,y,\iota): = \left( -c^l(\hat\nu_t + \lambda) + \kappa(\gamma \hat\nu_t + \epsilon \lambda) \right) \partial_\iota v(t,q,s,x,y,\iota) + \hat\nu_t \partial_q v(t,q,s,x,y,\iota)\\
    &+ (\gamma \hat\nu_t + \epsilon \lambda) \partial_s v(t,q,s,x,y,\iota)
    - \left( (s + \eta \hat\nu_t)\hat\nu_t - c^l \hat\nu_t \right) \partial_x v(t,q,s,x,y,I) + \frac{1}{2} \sigma^2 \partial_{ss} v(t,q,s,x,y,\iota) \\
    &- \Big( -\phi q^2 - \eta\hat\nu^2_t + c^l\hat\nu_t + q(\epsilon\lambda - 2\alpha\hat\nu_t) - \frac{1}{2}\rho(z + q\sigma)^2 - \frac{\gamma\hat\nu_t}{\sigma}z \\
&- \sum_{i=1}^M \left( e^{-\rho u^i}\E \left[ e^{-\rho \alpha \min(\hat\ell^i_t, r^i) \left( 2q - \min(\hat\ell^i_t, r^i) \right)} \right] + \rho u^i - 1 \right) \frac{e^{k^\theta\nu}\theta^i}{\rho} \Big) \partial_y v(t,q,s,x,y,\iota)+ \frac{1}{2} z^2 \partial_{yy} v(t,q,s,x,y,\iota) \\
    &+ \sum_{i=1}^M e^{k^\theta \nu}\theta^{i} \mathbb{E} \left[ v(t, q - \min(r^{i}, \hat\ell^{i}_t), s, x + s \min(r^{i}, \hat\ell^{i}_t), y+u^{i}, \iota + c^{d,i} \min(r^{i}, \hat\ell^{i}_t)) - v(t,q,s,x,y,\iota) \right],
\end{align*}
which is well defined and $\mathbb R-$valued as a consequence of Remark \ref{remark:closedloop}.

\begin{theorem}[Verification theorem]
Assume that there exists a function $v\in \mathcal C^{1,1,2,2,2,1}([0,T]\times \mathbb R) $ solving the HJB equation \eqref{eq:ex_HJB}. Furthermore, assume that $v$ has a quadratic growth in $q$ and polynomial growth in $s, x, y, \iota$, such that 
$$
|v(t,q,s,x,y,\iota)| \leq C(1 + |s|^p+ |x|^p + |y|^p + |\iota|^p + |q|^2), \, p>1\, ,C>0.
$$
Then $\hat{c}_t^{l}, \hat{c}_t^{d}$ and $\hat{Z}_t:=\hat z(t,Q_t,S_t,X_t,Y_t,I_t), \hat{U}_t^{i}:=\hat u^i(t,Q_t,S_t,X_t,Y_t,I_t)$ are optimal controls for the control problem, where $\hat{c}_t^{l}, \hat{c}_t^{d}$, $\hat z$ and $\hat u^i$ optimizes the Hamiltonian $\mathcal F^{c^l,c^d,z,u}[v]$ and the exchange's value function is $
J^E_0=v(0,Q_0,S_0,X_0, Y_0, I_0).
$
\end{theorem}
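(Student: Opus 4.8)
The plan is to run the classical verification argument for a controlled jump--diffusion. Fix arbitrary admissible controls $(Z,U,c^l,c^d)\in\mathcal Z$ together with the induced lit rate and dark quotes $(\hat\nu,\hat\ell)\in\mathcal A$, which by Theorem \ref{thm:bsde_value} are the maximizers of $h^{\nu,\ell}$ and, by Remark \ref{remark:closedloop}, closed-loop Markovian functions of $(Q,Z,c^l,c^d)$; in particular they generate a well-posed state process $(Q,S,X,Y,I)$ under $\P^{\nu,\ell}$. First I would apply It\^o's formula for jump--diffusions to $t\mapsto v(t,Q_t,S_t,X_t,Y_t,I_t)$ on $[0,T]$, using the $\P^{\nu,\ell}$--dynamics of $(Q,S,X,Y,I)$ recalled in Section \ref{sec:market} and in the definition of $I$: the processes $Q$ and $I$ are of finite variation with jumps driven by the Poisson processes $N^{i;\nu,\ell}$, while $S$ and $Y$ carry Brownian components. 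A direct computation shows that the finite-variation (drift) part of the resulting decomposition equals $\big(\partial_t v+\mathcal F^{c^l_t,c^d_t,Z_t,U_t}[v]\big)(t,Q_t,S_t,X_t,Y_t,I_t)$; by the HJB equation \eqref{eq:ex_HJB} this drift is $\le 0$ for every admissible choice of $(c^l,c^d,Z,U)$, and it vanishes when $(c^l,c^d,Z,U)$ are taken to be the maximizers $(\hat c^l,\hat c^d,\hat z,\hat u)$ of $\mathcal F$.

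Next I would introduce a localizing sequence of stopping times $\tau_n\uparrow T$ reducing the local-martingale part of the decomposition (the Brownian stochastic integrals, whose integrands are of the form $Q_s\sigma\,\partial_s v$ and $Z_s\,\partial_y v$, and the compensated jump integrals arising from the Poisson terms of $\mathcal F^{c^l,c^d,z,u}$), to obtain
\[ v(0,Q_0,S_0,X_0,Y_0,I_0)\ge \E^{\P^{\nu,\ell}}\big[v(\tau_n,Q_{\tau_n},S_{\tau_n},X_{\tau_n},Y_{\tau_n},I_{\tau_n})\big], \]
with equality under $(\hat c^l,\hat c^d,\hat z,\hat u)$. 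Letting $n\to\infty$ by dominated convergence, I would control the right-hand side using the growth bound $|v(t,q,s,x,y,\iota)|\le C(1+|s|^p+|x|^p+|y|^p+|\iota|^p+|q|^2)$ together with moment estimates on the state: $Q$ is valued in $[0,Q_0]$ hence bounded (Remark \ref{rem:integrability}); $S$ has moments of high enough order since $\nu\in\mathcal V$ and $\lambda$ obey the $L^{4p}$ and $L^4$ bounds of Section \ref{sec:market}; $X$ and $I$ are integrals against $dt$ of terms controlled by $S,\nu,\lambda$ and the bounded fees, and against the compound Poisson process whose jump sizes $\min(\ell^i,r^i)\le\ell^i$ lie in $L^{4p}$ and whose intensity $e^{k^\theta\nu}\theta^i$ is bounded (as $\nu\le 0$); and $Y\in\S^2_T(\R)$ by Lemma \ref{lemma:bsde_y_existence}. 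This gives $v(0,\cdot)\ge\E^{\P^{\nu,\ell}}[v(T,Q_T,S_T,X_T,Y_T,I_T)]=\E^{\P^{\nu,\ell}}[\Psi^E(I_T-Y^{Y_0,Z,U}_T)]$ by the terminal condition in \eqref{eq:ex_HJB}, for every admissible choice, with equality at the candidate control. Taking the supremum over admissible $(Z,U,c^l,c^d)$ and $(\hat\nu,\hat\ell)$ then yields $J^E_0=v(0,Q_0,S_0,X_0,Y_0,I_0)$ and identifies the controls $(\hat c^l,\hat c^d,\hat Z,\hat U)$, hence $(\hat\nu,\hat\ell)$, as optimal.

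I expect the main obstacle to be the integrability and uniform-integrability bookkeeping in the localization step: one must verify that each stochastic integral above is a genuine $\P^{\nu,\ell}$--martingale (not merely a local one) and that the family $\{v(\tau_n,Q_{\tau_n},\dots,I_{\tau_n})\}_n$ is $\P^{\nu,\ell}$--uniformly integrable. This requires combining the polynomial growth of $v$ and of the derivatives $\partial_s v,\partial_y v$ entering the diffusive integrands with the state moment estimates above and with control of the Radon--Nikodym densities $\mathcal E^\nu_T$ and $D_T$ (recall $\E^{\P^\nu}[|D_T|^q]<\infty$ and Assumption \ref{assump:bmo_martingale}), so that H\"older's inequality with the conjugate exponents $p,q$ of Remark \ref{rem:integrability} closes every estimate. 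A secondary point is to check that the Hamiltonian $\mathcal F^{c^l,c^d,z,u}[v]$ admits a measurable maximizer $(\hat c^l,\hat c^d,\hat z,\hat u)$ regular enough for the associated feedback control to generate a well-defined state process; here the strict concavity in $\nu$ established in Theorem \ref{thm:bsde_value} and the definition of $\mathcal U$, which is precisely what guarantees $(\hat\nu,\hat\ell)\in\mathcal A$, are what one uses.
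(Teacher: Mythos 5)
Your proposal is correct and follows essentially the same route as the paper's proof: It\^o's formula along the state process, the sign of the drift term $\partial_t v+\mathcal F[v]$ from the HJB equation with equality at the maximizers, localization by stopping times, and passage to the limit via dominated convergence using the growth bound on $v$. The only difference is that you spell out the integrability and uniform-integrability bookkeeping (state moment estimates, control of the densities $\mathcal E^\nu_T$ and $D_T$, measurable selection of the maximizer) that the paper's proof leaves implicit.
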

\begin{proof}
   We start by defining a local time
$$
\tau_n =\inf \{ t\in [0,T]: ||(Q_t,S_t,X_t,Y_t,I_t) - (q,s,x,y,I)||_2 > n \}\wedge T.
$$ For the sake of simplicity for the notations we set $Q_0=q,S_0=s,X_0=x,Y_0=y,I_0=\iota$. 
Applying Ito's formula to $v$ solution of the considered HJB equation, taking the expectation and noting that due to the localization time the stochastic integrals are true martingales,  we get
\begin{align*}
    &v(0, q, s, x, y, \iota) = \E [v(\tau_n,Q_{\tau_n},S_{\tau_n}, X_{\tau_n}, Y_{\tau_n}, I_{\tau_n})]\\
    &- \E \Bigg[ \int_0^{\tau_n}  \partial_t v(u,Q_u,S_u,X_u,Y_u,I_u) + \mathcal F^{c^l,c^d,Z_t,U_t}[v](u,Q_u,S_u,X_u,Y_u,I_u) du
    \Bigg].
\end{align*}
Since $v$ solves the HJB PDE \eqref{eq:ex_HJB}, we have
\begin{align*}
v(0, q, s, x, y, \iota)\geq \E [v(\tau_n,Q_{\tau_n},S_{\tau_n}, X_{\tau_n}, Y_{\tau_n}, I_{\tau_n})]
\end{align*}
with equality when $(c^l_t,c_t^{d,i},Z_t,U_t) = (\hat c^{l}_t,\hat c_t^{d,i},\hat Z_t,\hat U_t)$. Furthermore, by the assumption of $v$, we get
\begin{align*}
    \left|v(\tau_n,Q_{\tau_n},S_{\tau_n}, X_{\tau_n}, Y_{\tau_n}, I_{\tau_n})\right| &\leq C\left( 1 + |X_{\tau_n}|^p + |S_{\tau_n}|^p + |Y_{\tau_n}|^p + |I_{\tau_n}|^p + |Q_{\tau_n}|^2 \right) \\
    &\leq C\left( 1 + \sup_{0\leq t\leq T} \left\{|X_{t}|^p + |S_{t}|^p + |Y_t|^p + |I_t|^p + |Q_{t}|^2 \right\} \right).
\end{align*}
For $n\longrightarrow \infty$, by dominated convergence theorem, we have 
$$
v(0, q, s, x, y, I) \geq \E \left[ \Psi^E(I_T - \xi ) \right]
$$
with equality satisfied when $(c^l_t,c_t^{d,i},Z_t,U_t) = (\hat c^{l}_t,\hat c_t^{d,i},\hat Z_t,\hat U_t)$.
\end{proof}

\section{Major-minor players and optimal liquidation}\label{sec:comp_market}
We now turn to another type of market, purely competitive market instead with no regulation (no compensation scheme, no trading fees). Consider a large trader (major player) liquidating a huge number of shares $Q_0>0$ over one lit market and $M$ dark pools. Many (mean field) small identical high frequency traders are trading simultaneously on the lit market, against the big trader. We assume that the crowd of small traders observes the trading trajectory of the large trader while are blind to dark pools. The control processes of these market participants are 
\begin{itemize}
    \item Major trader trading rate on the lit pool: $\nu^0_t \leq 0$ for any time $t\in[0,T]$.
    \item Minor traders aggregated trading rate on the lit pool: $\nu_t \in \R$.
    \item Probability density of the minor trader inventory $Q_t$ at time t $m(t,q)$. 
    \item Empirical distribution of the individual trading rate $\mu_t$ given by $\mu_t = \int_q \nu_t m(t,dq)$.
    \item Selling strategy in the $i-$th dark pool of the major player $\ell_t^{i} \geq 0$.
\end{itemize}

In the lit market, the fair price is driven by the Brownian motion $W_t$, which is given by
$$
dS_t = (\gamma^0 \nu_t^0 + \gamma\mu_t) dt + \sigma dW_t.
$$
Note that we now consider a mean field term through the process $\mu_t$ representing the expectation of $\nu_t$ with respect to $Q_t$.
For simplicity, we do not take trading fees into consideration, that is,  for the major player, the cash process $X^0_t$ and inventory process $Q^0_t$ are given by 
\begin{align*}
    dQ^0_t &= \nu_t^0 dt - \sum_{i=1}^M \min(r_i, \ell_t^{i})dN_t^{i},\quad dX_t^0 = - (S_t + \eta^0 \nu_t^0)\nu_t^0dt + \sum_{i=1}^M S_t \min(r_i, \ell_t^{i}) dN_t^{i}.
\end{align*}

For the minor players, the cash process $X_t$ and inventory process $Q_t$ are given by 
\begin{align*}
    dQ_t &= \nu_t dt,\quad dX_t = - (S_t + \eta \nu_t)\nu_tdt.
\end{align*}

\subsection{Minor traders optimization: a mean-field game problem}\label{subsec:mean_filed_game}
We fix the trading rate of the major player $\nu^0$ as a feedback control with respect to the expected inventory  $\bar Q_t^{0}:=\E[Q_t^{0}]$. Therefore, \begin{equation*}
\begin{aligned}
    dS_t = \big(\gamma^0\,\nu^0(t,\bar Q^{0}_t)+\gamma\,\mu_t\big)\,dt + \sigma\,dW_t.
\end{aligned}
\end{equation*}
 The minor players want to front-tun the major player in order to maximize their cash received $X_T+Q_TS_T$, terminal penalty $-\alpha Q_T^2$, only observing the major player's trading strategy $\nu^0(t,\bar Q^0_t)$ in the lit market and find a fixed point equilibrium $\mu_t$. The problem of the minor player, states as a probabilistic mean field games, see \cite{carmona2015probabilistic,carmona_lacker_2015} is then to solve
 \[
 \begin{cases}
     V_0(\mu;\nu^0):=\sup_{\nu} \mathbb{E} [X_T + Q_T(S_T - \alpha Q_T)]\\
     \mu_t=\int_q \hat\nu_tm(t,dq),\; t<T.
 \end{cases}
 \]
 Note that $\mu$ implicitly appears in the definition of $S$ leading to a fixed-point characterization to find the equilibrium point $\mu$ simultaneously with the optimizer $\hat\nu$. 
 To solve this problem, we introduce the following HJB PDE when the measure $\mu_t$ is fixed.
\begin{equation}\label{eq:minor_hjb}
        \left\{
    \begin{array}{l}
    \begin{aligned}
    &\partial_t h(t,q) + \sup_\nu \Big\{ -\eta \nu^2 + \partial_q h(t,q) \nu + (\gamma^0\nu^0(t,\bar Q^0_t) + \gamma\mu) q \Big\} = 0 \\
    & h(T,q) = -\alpha q^2.
    \end{aligned}
    \end{array}
        \right.
\end{equation}

\begin{lemma}
    Let $\mu$ be a fix real and the strategy of the major player $\nu^0$ fixed and feedback in $\overline Q^0$. Assume that \eqref{eq:minor_hjb} admits a unique solution $h^\mu$ in $\mathcal C^{1,1}([0,T];\mathbb R)$. Then, the optimal trading rate $\hat \nu_t$ of minor players responding to the major trader is given by
    $$
    \hat \nu_t = \frac{1}{2\eta} \partial_qh^\mu(t,Q_t).
    $$
\end{lemma}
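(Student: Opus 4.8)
The plan is to reduce the representative minor player's problem to a control problem whose only state is the inventory $Q$, and then run a verification argument against the assumed solution $h^\mu$ of \eqref{eq:minor_hjb}.

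\textbf{Step 1 (reduction by It\^o's formula).} With $\mu$ and the feedback $\nu^0$ frozen, the price drift $b_t:=\gamma^0\nu^0(t,\bar Q^0_t)+\gamma\mu$ is a deterministic function of time. Applying It\^o's formula to $X_t+Q_tS_t$ with $dQ_t=\nu_t\,dt$, $dX_t=-(S_t+\eta\nu_t)\nu_t\,dt$ and $dS_t=b_t\,dt+\sigma\,dW_t$, the $S_t\nu_t$ terms cancel and
\[
X_T+Q_T(S_T-\alpha Q_T)=X_0+Q_0S_0-\alpha Q_T^2+\int_0^T\big(-\eta\nu_t^2+b_tQ_t\big)\,dt+\sigma\int_0^T Q_t\,dW_t.
\]
Taking $\E$ (the stochastic integral being a true martingale for admissible $\nu$), the objective equals, up to the constant $X_0+Q_0S_0$, the value at $(0,Q_0)$ of the problem $\sup_\nu\E\big[-\alpha Q_T^2+\int_0^T(-\eta\nu_t^2+b_tQ_t)\,dt\big]$ with dynamics $dQ_t=\nu_t\,dt$, whose dynamic programming equation is precisely \eqref{eq:minor_hjb}.

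\textbf{Step 2 (pointwise maximization).} For fixed $(t,q)$ the map $\nu\mapsto-\eta\nu^2+\partial_qh^\mu(t,q)\,\nu$ is strictly concave because $\eta>0$ and $\nu$ ranges over all of $\R$ (minor traders are unconstrained); its unique maximizer is $\nu=\tfrac{1}{2\eta}\partial_qh^\mu(t,q)$, with Hamiltonian value $\tfrac{1}{4\eta}(\partial_qh^\mu(t,q))^2+b_tq$. This identifies the candidate feedback $\hat\nu_t=\tfrac{1}{2\eta}\partial_qh^\mu(t,Q_t)$.

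\textbf{Step 3 (verification) and main obstacle.} For an arbitrary admissible $\nu$, since $dQ_t=\nu_t\,dt$ has no martingale part, the chain rule along paths gives $d h^\mu(t,Q_t)=\big(\partial_th^\mu(t,Q_t)+\partial_qh^\mu(t,Q_t)\,\nu_t\big)\,dt$ (no stochastic integral to localize, unlike the earlier verification theorems); using $h^\mu\in\mathcal C^{1,1}$ and \eqref{eq:minor_hjb}, $\partial_th^\mu(t,Q_t)+\partial_qh^\mu(t,Q_t)\,\nu_t\le \eta\nu_t^2-b_tQ_t$, so integrating and invoking $h^\mu(T,\cdot)=-\alpha(\cdot)^2$ yields
\[
h^\mu(0,Q_0)\ \ge\ \E\Big[-\alpha Q_T^2+\int_0^T\big(-\eta\nu_t^2+b_tQ_t\big)\,dt\Big],
\]
with equality iff $\nu_t=\tfrac{1}{2\eta}\partial_qh^\mu(t,Q_t)$ for a.e.\ $t$, $\P$-a.s.; combined with Step 1 this proves optimality of $\hat\nu$. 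The algebra is routine; the only delicate point is admissibility of the closed-loop control, i.e.\ well-posedness of $\dot Q_t=\tfrac1{2\eta}\partial_qh^\mu(t,Q_t)$ (from the $\mathcal C^{1,1}$ regularity, or a local Lipschitz bound on $\partial_qh^\mu$) and enough integrability on the resulting $Q$ and $\hat\nu$ to make $\sigma\int_0^T Q_t\,dW_t$ a true martingale in Step 1 and to pass to the limit in the verification; in the present linear-quadratic setting these are straightforwardly checked once \eqref{eq:minor_hjb} is solved explicitly.
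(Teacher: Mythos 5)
Your proof is correct and follows essentially the same route as the paper, which only sketches the argument: the affine ansatz $H(t,q,x,s)=x+sq+h(t,q)$ in the paper is exactly your Step 1 reduction to the inventory-only problem, and the first-order condition plus It\^o/verification is what the paper invokes in one line. Your write-up simply makes the verification step and the admissibility/integrability caveats explicit, which the paper leaves implicit.
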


\begin{remark}[Sketch of the proof]
    Formally, the dynamic version of the value function of the small traders is defined as 
$$
H(t,q,x,s) = \sup_{\nu} \mathbb{E}_{t,s,x,q} [X_T + Q_T(S_T - \alpha Q_T)],
$$
with terminal condition $H(T,q,x,s)=x+q(s-\alpha q)$. Motivated by the affine structure in $(x,s)$, making the ansatz $h(t,q,x,s) = sq + x + h(t,q)$, then by dynamic programming, we derive \eqref{eq:minor_hjb}. The proof of the Lemma is then a direct consequence of Ito's formula. 
\end{remark}

Substituting $\hat\nu$ into the HJB PDE \eqref{eq:minor_hjb} and by Ito's formula and the dynamics of $Q_t$, we have the following PDE system,
\begin{equation}\label{FK}
        \left\{
    \begin{array}{l}
    \begin{aligned}
        & - (\gamma^0\nu^0(t,\bar Q^0_t) + \gamma\mu) q = \partial_t h(t,q) + \frac{1}{4\eta} \big( \partial_q h(t,q) \big)^2, \\
        &h(T,q)=-\alpha q^2,\\
        &0 = \partial_t m(t,q) + \partial_q\left(m (t,q)\frac{\partial_q h(t,q)}{2\eta} \right), \\
        &m(0,q) = m_0(q),\\
        & \mu_t = \int \frac{\partial_q h(t,q)}{2\eta} m(t, dq), \\
        & d\bar Q_t^0 = \big(\nu^0(t, \bar Q^0_t) - \sum_{i=1}^M \theta^i\E \left[\min(r_i,\ell_t^i)\right]\big)dt.
    \end{aligned}
    \end{array}
        \right.
\end{equation}

\begin{definition}[Minor-player MFG equilibrium]\label{def:equilibrium}
A triple $(h,m,\mu)$ is a minor-player mean-field equilibrium on $[0,T]$ associated with $\nu^0$ if 
\begin{enumerate}
\item $h$ solves the HJB PDE \eqref{FK}; 
\item $m(t,q)$ solves the Fokker-Planck Equation driven by the feedback $\hat \nu_t$;
\item $\mu$ satisfies the condition $\mu_t = \int \hat \nu_t m(t,dq) $.
\end{enumerate}
\end{definition}

\begin{remark}
    Let $E(t) = \int q m(t, dq)$ the expected inventory of the minor traders. Note that for any solution $(h,m,\mu)$ to the Fokker-Planck PDE systems \eqref{FK} we have $$
\begin{aligned}
    E^\prime (t) &= \int q \partial_t m(t, dq) \\
    &= \int -q \partial_q \left( m(t,q) \frac{\partial_q h(t,q)}{2\eta} \right) dq \\
    &= \int \frac{\partial_q h(t,q)}{2\eta} m(t, dq) \\
    &= \mu_t.
\end{aligned}
$$
\end{remark}

Suppose $h(t,q)$ admits the quadratic form in inventory $q$,
\begin{equation*}
h(t,q) = h_0(t) + h_1(t)q + h_2(t)q^2.
\end{equation*}
By collecting terms w.r.t. the power of $q_t$, HJB PDE can be reduced to following coupled ODEs,
\[\begin{cases}
        & h_0^\prime(t) + \frac{1}{4\eta} h_1^2(t) = 0, \\
        &h_1^\prime(t) + (\gamma^0\nu_t^0 + \gamma\mu_t) + \frac{1}{\eta} h_1(t) h_2(t) = 0, \\
        & h_2^\prime(t) + \frac{1}{\eta} h_2^2(t) = 0,
    \end{cases}\]
with terminal conditions $h_0(T) = h_1(T) = 0$ and $h_2(T)=-\alpha$. Furthermore, the optimal trading rate is now given by$$
\hat\nu_t = \frac{h_1(t) + 2qh_2(t)}{2\eta}
$$ and $\mu_t$ is given by,

\begin{align*}
    \mu_t &= \int_q \frac{h_1(t) + 2qh_2(t)}{2\eta} m(t,dq) \\
    &= \frac{h_1(t)}{2\eta} + \frac{h_2(t)}{\eta} E(t) \\
    &= E^\prime (t).
\end{align*}

Therefore, the full FP-HJB system \eqref{FK} is reduced to the following coupled ODEs,
\begin{equation*}
        \left\{
    \begin{array}{l}
    \begin{aligned}\label{eq:coupled_ode}
        & h_0^\prime(t) + \frac{1}{4\eta} h_1^2(t) = 0, \\
        &h_1^\prime(t) + \frac{1}{\eta} (\frac{\gamma}{2} + h_2(t)) + \frac{1}{\eta}\gamma h_2(t) E(t) = -\gamma^0\nu_t^0(t,\bar Q^0_t) \\
        & h_2^\prime(t) + \frac{1}{\eta} h_2^2(t) = 0,\\
        & 2\eta E^\prime(t) = h_1(t) + 2h_2(t)E(t), \\
        & (\bar Q^{0})_t' = \nu^0(t, \bar Q^0_t) - \sum_{i=1}^M \theta^i\E \left[\min(r_i,\ell_t^i)\right].
    \end{aligned}
    \end{array}
        \right.
\end{equation*}

We turn to the fixed point problem on $\mu$. For a given progressively measurable $\widetilde\mu$ with $\int_0^T \widetilde\mu_t^2 dt<\infty$, solving the HJB for $h(t,q)$, we define the feedback $\hat \nu_t$ and the associated solution $m(t,q)$ to the Fokker-Planck system, then set
\[
\varPhi(\tilde\mu)_t:=\int \hat \nu_t(q)\,m(t,dq).
\]
An equilibrium is a fixed point of $\varPhi$. As a consequence of \cite[Theorem 5.1]{cardaliaguet2018mean}, we have

\begin{theorem}[Existence and uniqueness of minor-player MFG]\label{thm:existence}
The map $\varPhi$ is a contraction on $\L^2([0,T])$. Consequently, there is a unique mean-field equilibrium $(h,m,\mu)$ on $[0,T]$ in the sense of Definition~\ref{def:equilibrium}. Moreover, the coupled ODEs \eqref{eq:coupled_ode} admit a unique solution for all $T>0$, and the triple constructed from it yields a global equilibrium.
\end{theorem}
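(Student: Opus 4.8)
The plan is to exploit the linear–quadratic structure already extracted in the excerpt: under the quadratic ansatz $h(t,q)=h_0(t)+h_1(t)q+h_2(t)q^2$, a candidate mean field $\widetilde\mu$ determines $(h_2,h_1,E)$ through \emph{scalar linear} ODEs, the feedback $\hat\nu_t=(h_1(t)+2qh_2(t))/(2\eta)$, and hence $\varPhi(\widetilde\mu)_t=\int\hat\nu_t(q)\,m(t,dq)=\tfrac{h_1(t)}{2\eta}+\tfrac{h_2(t)}{\eta}E(t)$ with $E(t)=\int q\,m(t,dq)$. Thus $\varPhi$ is an \emph{affine} map on $\L^2([0,T])$, and establishing the contraction reduces to bounding the Lipschitz constant of its linear part via elementary ODE estimates. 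I would first dispose of the only nonlinear component: $h_2$ solves $h_2'+\tfrac1\eta h_2^2=0$, $h_2(T)=-\alpha$, independently of $\widetilde\mu$; since $(1/h_2)'=1/\eta$ we get $h_2(t)=-\alpha\eta/(\eta+\alpha(T-t))$, which is negative and satisfies $|h_2(t)|\le\alpha$ on $[0,T]$ for every $T>0$, so there is no finite-time blow-up. Similarly $\bar Q^0$ is the exogenously determined solution of the last line of \eqref{FK} (well posed whenever $\nu^0$ is a Lipschitz feedback), so $t\mapsto\gamma^0\nu^0(t,\bar Q^0_t)$ is a fixed $\L^2$ forcing term.

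With $h_2$ bounded, $h_1$ solves the backward linear ODE $h_1'+\tfrac{h_2}{\eta}h_1=-(\gamma^0\nu^0(t,\bar Q^0_t)+\gamma\widetilde\mu_t)$, $h_1(T)=0$, and $E$ solves the forward linear ODE $E'-\tfrac{h_2}{\eta}E=\tfrac{h_1}{2\eta}$, $E(0)=\int q\,m_0(dq)$; both are globally solvable and depend affinely on $\widetilde\mu$. For two candidates $\widetilde\mu^1,\widetilde\mu^2$ with $\delta\mu:=\widetilde\mu^1-\widetilde\mu^2$, variation of constants together with $h_2<0$ (so the exponential factors are $\le 1$) gives the pointwise bounds $|\delta h_1(t)|\le\gamma\int_t^T|\delta\mu_s|\,ds$ and $|\delta E(t)|\le\tfrac{1}{2\eta}\int_0^t|\delta h_1(s)|\,ds$; inserting these in $\varPhi(\widetilde\mu^1)_t-\varPhi(\widetilde\mu^2)_t=\tfrac{\delta h_1(t)}{2\eta}+\tfrac{h_2(t)}{\eta}\delta E(t)$, using $|h_2|\le\alpha$ and Cauchy–Schwarz, yields $\|\varPhi(\widetilde\mu^1)-\varPhi(\widetilde\mu^2)\|_{\L^2}\le C(T)\,\|\delta\mu\|_{\L^2}$ with $C(T)=\tfrac{T}{2\eta}\bigl(\gamma+\tfrac{\alpha T}{\eta}\bigr)\to 0$ as $T\to 0$. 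Banach's fixed point theorem then produces a unique fixed point $\mu$ on short horizons, and the triple $(h,m,\mu)$ built from it — solving the HJB ODEs for $h_2,h_1$, recovering $h_0$ by the quadrature $h_0'=-\tfrac1{4\eta}h_1^2$, and transporting $m_0$ along the Lipschitz feedback to obtain $m$ — is an equilibrium in the sense of Definition~\ref{def:equilibrium}.

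The main obstacle is upgrading the contraction from short to arbitrary horizons: $C(T)$ need not be below $1$ for large $T$, and the forward–backward coupling between the backward equation for $h_1$ and the forward equation for $E$ prevents a single exponentially weighted $\L^2$ norm from making $\varPhi$ a global contraction. I see two complementary routes, both consistent with the cited \cite[Theorem~5.1]{cardaliaguet2018mean}. The first is a time-stepping argument: partition $[0,T]$ into sub-intervals of length $\delta$ with $C(\delta)<1$, solve the backward components right-to-left and the forward components left-to-right, and patch, using continuity of $h_1,E$ at the breakpoints and uniqueness on each piece to assemble a unique equilibrium on $[0,T]$. The second, cleaner for the ``global in $T$'' claim, is to observe that the fixed-point identity $\mu=\varPhi(\mu)$, i.e. $\mu=E'$, is equivalent to the linear two-point boundary value problem for $(h_1,E)$ obtained by substituting $\mu=E'$ into the $h_1$-equation, with $h_1(T)=0$ and $E(0)=\int q\,m_0(dq)$; this BVP is uniquely solvable for every $T>0$ precisely because the associated shooting matrix is nonsingular, which is the Lasry–Lions-type monotonicity underlying \cite[Theorem~5.1]{cardaliaguet2018mean} for the trade-crowding structure. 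Either way, the system \eqref{eq:coupled_ode} has a unique solution for all $T>0$ — the sole nonlinearity, the $h_2$-Riccati, having been shown not to blow up — and the triple constructed from it is the global equilibrium, uniqueness of which is inherited from uniqueness of the fixed point.
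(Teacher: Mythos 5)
Your proposal is correct, and your second route is essentially the paper's own proof: the paper also solves the $h_2$ Riccati explicitly (obtaining $h_2(t)=-\alpha\eta/(\eta+\alpha(T-t))$, so no blow-up), then closes the system by substituting $\mu=E'$ and reducing the forward--backward pair $(h_1,E)$ to a single linear two-point boundary value problem — concretely $2\eta E''+\gamma E'=-\gamma^0\nu^0(t,\bar Q^0_t)$ with $E(0)=E_0$ and the Robin condition $E'(T)+\tfrac{\alpha}{\eta}E(T)=0$ — whose unique solvability gives existence and uniqueness of $(\mu,h_1,h_0)$ for every $T>0$, after which $m$ is the unique pushforward along the affine flow. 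What you add that the paper omits is the explicit short-horizon contraction estimate $C(T)=\tfrac{T}{2\eta}\bigl(\gamma+\tfrac{\alpha T}{\eta}\bigr)$ for $\varPhi$; your variation-of-constants bounds (using $h_2<0$ so the exponential weights are $\le 1$) are correct, and your observation that this alone cannot give a global contraction is an honest diagnosis of a claim the paper asserts in the theorem statement but never actually proves — it simply invokes \cite[Theorem~5.1]{cardaliaguet2018mean} and then argues via the BVP. One small gap you share with the paper: both of you assert unique solvability of the linear two-point BVP without checking nondegeneracy (linear BVPs can be resonant). Here it can be verified directly: the homogeneous solutions of $2\eta E''+\gamma E'=0$ are $1$ and $e^{-\gamma t/(2\eta)}$, and imposing $E(0)=0$ forces $E=c\bigl(e^{-\gamma t/(2\eta)}-1\bigr)$, for which $E'(T)+\tfrac{\alpha}{\eta}E(T)=c\bigl[\bigl(\tfrac{\alpha}{\eta}-\tfrac{\gamma}{2\eta}\bigr)e^{-\gamma T/(2\eta)}-\tfrac{\alpha}{\eta}\bigr]$ vanishes only for $c=0$ since $e^{-\gamma T/(2\eta)}<1$ and $\gamma>0$; so the shooting map is indeed nonsingular for all $T>0$, completing your second route.
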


\begin{proof}
\textbf{Step 1.} From Equation \eqref{eq:coupled_ode}, the $q^2$–coefficient yields the Riccati ODE
\[
h_2'(t)+\frac{1}{\eta}h_2^2(t)=0,\qquad h_2(T)=-\alpha,
\]
whose solution is global and explicit:
\[
h_2(t)= -\frac{\alpha\,\eta}{\eta+\alpha (T-t)}.
\]

\textbf{Step 2.} The optimal feedback $\hat \nu$ is affine,
\[
\hat\nu_t(q)=\frac{h_1(t)+2h_2(t)\,q}{2\eta}=:A(t)+B(t)\,q,\quad 
A(t):=\frac{h_1(t)}{2\eta},\quad B(t):=\frac{h_2(t)}{\eta}.
\]
By Equation \eqref{eq:coupled_ode} and the definition $E(t)=\int q\,m(t,dq)$, we have the identities
\begin{equation}\label{eq:mu_E_id}
\mu_t=\frac{h_1(t)}{2\eta}+\frac{h_2(t)}{\eta}E(t), 
\qquad
2\eta\,E'(t)=h_1(t)+2h_2(t)E(t).
\end{equation}
Matching the $q^1$–coefficient in Equation \eqref{eq:coupled_ode} gives
\begin{equation}\label{eq:h1_ode}
h_1'(t)+\frac{1}{\eta}h_1(t)h_2(t)+\gamma^0\nu^0(t,\bar Q^0_t)+\gamma\,\mu_t=0,\qquad h_1(T)=0.
\end{equation}
Differentiating the second identity in Equation \eqref{eq:mu_E_id} and using Equation \eqref{eq:h1_ode} together with $h_2'+\frac{1}{\eta}h_2^2=0$ yields the closed linear ODE for $E$,
\begin{equation}\label{eq:E_linear}
2\eta\,E''(t)+\gamma\,E'(t)=-\gamma^0\,\nu^0(t,\bar Q^0_t),\qquad t\in[0,T],
\end{equation}
and evaluating $2\eta E'(t)=h_1(t)+2h_2(t)E(t)$ at $t=T$ with $h_1(T)=0$, $h_2(T)=-\alpha$ gives the Robin boundary condition at terminal time:
\begin{equation}\label{eq:robin_T}
E'(T)+\frac{\alpha}{\eta}\,E(T)=0.
\end{equation}
The initial condition is $E(0)=E_0$ from $m_0$. Being linear with continuous coefficients, Equations \eqref{eq:E_linear}–\eqref{eq:robin_T} admits a unique solution $E\in C^2([0,T])$; in particular, uniqueness follows from standard ODE theory for two-point boundary value problems.

\textbf{Step 3.} 
Set $\mu(t):=E'(t)$. Then Equation \eqref{eq:mu_E_id} enforces consistency, and we recover
\[
h_1(t)=2\eta\,E'(t)-2h_2(t)\,E(t),\qquad 
h_0'(t)=-\frac{1}{4\eta}\,h_1(t)^2,\quad h_0(T)=0,
\]
so $h_0\in C^1([0,T])$ exists and is unique.

\textbf{Step 4.}
With $\hat\nu_t(q)=A(t)+B(t)q$ and $A,B\in \L^1([0,T])$, the continuity equation
\[
\partial_t m+\partial_q\big((A(t)+B(t)q)\,m\big)=0,\qquad m(0,\cdot)=m_0
\]
is solved by the method of characteristics. Consider the linear flow
\begin{equation*}
\dot q_t = A(t)+B(t)q_t,\; q_0=q,\;
\Phi(t):=\exp\!\left(\int_0^t B(u)\,du\right),\;
\psi(t):=\int_0^t \Phi(t)\Phi(s)^{-1}A(s)\,ds,
\end{equation*}
so that $q_t=\Phi(t)\,q+\psi(t)$ and $\partial q_t/\partial q=\Phi(t)>0$ for all $t\in[0,T]$. Then the unique measure solution is the pushforward
\begin{equation}\label{eq:pushforward}
m(t,\cdot)=(\Phi_{t,0})_\# m_0,\qquad \Phi_{t,0}(q):=\Phi(t)\,q+\psi(t).
\end{equation}
If $m_0$ admits a density, the density of $m(t,\cdot)$ is explicitly
\[
m(t,q)=\frac{1}{\Phi(t)}\,m_0\!\left(\frac{q-\psi(t)}{\Phi(t)}\right),
\]
which conserves mass and moments of all orders compatible with $m_0$. In particular,
\[
E(t)=\int q\,m(t,dq)=\Phi(t)\,\E[Q_0]+\psi(t)=\Phi(t)\,E_0+\psi(t),
\]
and a direct differentiation of this identity, using $\dot\Phi=B\Phi$, $\dot\psi=A+B\psi$
\[
E'(t)=A(t)+B(t)E(t)=\int \hat\nu_t(q)\,m(t,dq)=\mu_t,
\]
which verifies the consistency relation Equation \eqref{eq:coupled_ode}.\\

From Steps 1-4, the triple $(h,m,\mu)$ solves Equations \eqref{eq:coupled_ode}. Uniqueness holds because (i) $h_2$ is unique from its Riccati ODE, (ii) $E$ is unique from the linear BVP \eqref{eq:E_linear}–\eqref{eq:robin_T}, whence $(\mu,h_1,h_0)$ are unique, and (iii) for a given affine vector field $A(t)+B(t)q$ the continuity Equation \eqref{eq:coupled_ode} has the unique pushforward solution \eqref{eq:pushforward}.
\end{proof}

\begin{remark}[Extension to Stackelberg coupling with the major player]
In the present section, $\nu^0(t,\bar Q^{0}_t)$ enters parametrically. When the major player's problem is made, the full system becomes a Stackelberg MFG. The minor player equilibrium derived here remains the lower-level response map needed to close the upper-level optimization.
\end{remark}

\subsection{Major's problem}
Now we fix the mean field $\mu_t$, the major player wants to liquidate his huge inventory in lit and dark pools to maximize the cash received and minimize terminal penalty and inventory risk. The problem of the major player is then given by
$$
V^0(Q^0_0;\mu) = \sup_{\nu^0, \ell} \mathbb{E} \left[X_T + Q^0_T(S_T - \alpha^0 Q^0_T) - \phi \int_0^T (Q^0_t)^2 dt \right].
$$
To solve this problem, by dynamic programming, the corresponding HJB PDE is 
\begin{align*}
    &0 = \partial{_t}h^0(t,x,s,q^0) - \phi (q^0)^2 + \sup_{\nu^0}\Bigg\{ -( s + \eta^0\nu^0)\nu^0\partial_xh^0(t,x,s,q^0) \\
    &+ (\gamma^0\nu^0 + \gamma\mu )\partial_sh^0(t,x,s,q^0) + \nu\partial_qh^0(t,x,s,q^0) + \frac{1}{2}\sigma^2\partial{_{ss}}h^0(t,x,s,q^0) \Bigg\} \\
    &+ \sup_{\ell} \Bigg\{\sum_{i=1}^M \theta^{i} \mathbb{E}_{r} \Big[ \big(h^0\big(t,x+s\min(\ell^{i}, r^{i}),s,q^0-\min(\ell^{i}, r^{i})\big) - h^0(t,x,s,q^0) \big) \Big]  \Bigg\},
\end{align*}
with terminal condition $h^0(T,x,s,q^0)=x+q^0(s-\alpha^0 q^0)$.
Note that the optimizations in $\nu^0$ and $\ell$ are separable. 
Similarly to the minor player's optimization, by making the ansatz $h^0(t,x,s,q^0) = x+ sq^0 + h^0(t,q^0)$, we have
\begin{equation}\label{HJBmajorreduced}
    \begin{cases}
    &0= \partial{_t}h^0(t,q^0) + \sup_{\nu^0}\Bigg\{ -\phi (q^0)^2 -\eta^0(\nu^0)^2 + (\partial_q h^0(t,q^0) + \gamma^0 q^0)\nu^0 + \gamma\mu q^0 \Bigg\} \\
    &+ \sup_{\ell} \Bigg\{\sum_{i=1}^M \theta^{i} \mathbb{E}_{r} \Big[ \big(h^0\big(t, q^0-\min(\ell^{i}, r^{i})\big) - h^0(t,q^0) \big) \Big]  \Bigg\}.\\
    &h^0(T,q^0)=-\alpha^0 |q^0|^2.
    \end{cases}
\end{equation}

\begin{remark}\label{rem:concave}
    Note that the solution of \eqref{HJBmajorreduced} is concave with respect to $q^0$. As a sketch of the proof, we can write the corresponding Feynman-Kac representation to the solution $h^0(t,q^0)$ through BSDE with terminal condition $\xi:=-\alpha^0 (Q_T^0)^2$ and generator $f^{\nu^0}(z,q^0):=-\phi (q^0)^2 -\eta^0(\nu^0)^2 + (z+\gamma^0 q^0)\nu^0+\gamma\mu q^0$. Note that both $\xi$ and the generator $f^{\nu^0}$ are concave with respect to $q^0$. As a consequence of the comparison theorem \ref{lemma:comparison} together with the representation $Y_t^0=h^0(t,Q_t^0)$,we deduce that $h^0$ is concave in $q^0$.
\end{remark}
\begin{theorem}[Major player optimal strategy]
The optimal strategies of the major player are given below.
    \begin{enumerate}
        \item Lit market optimal trading strategy: $$
\hat \nu_t^{0} = \left(\frac{\partial_q h^0(t,Q^0_t) + \gamma^0 Q^0_t}{2\eta^0}\right)^-.
$$
    \item Dark pools optimal liquidation strategy. Define $$
b(t) = (\partial_q h^0(t,\cdot))^{-1}(0)
$$
the optimal dark pool trading strategy is given by
\begin{enumerate}
    \item If $M=1$, $\hat\ell_t = \min(Q^0_t, (Q^0_t - b(t))^+)$
    \item If $M>1$, for any $i,j\in\{1,\dots,M\}$ \begin{align*}
        \theta^i(1-F(\hat\ell^i))\partial_q h^0(t, q^0-\hat\ell^i) = \theta^j(1-F(\hat\ell^j))\partial_q h^0(t, q^0-\hat\ell^j),\, \forall t\in[0,T].
    \end{align*}
\end{enumerate}
\end{enumerate}
\end{theorem}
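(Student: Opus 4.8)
The plan is to exploit the fact, already noted after \eqref{HJBmajorreduced}, that the two suprema in the reduced HJB equation are separable, and to characterize each optimizer by a first-order condition, using the concavity of $q^0\mapsto h^0(t,q^0)$ established in Remark \ref{rem:concave} (so that $\partial_q h^0(t,\cdot)$ is non-increasing and $b(t)=(\partial_q h^0(t,\cdot))^{-1}(0)$ is the unique maximizer of $h^0(t,\cdot)$).

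First I would treat the lit-pool supremum. The map $\nu^0\mapsto -\phi(q^0)^2-\eta^0(\nu^0)^2+(\partial_q h^0(t,q^0)+\gamma^0 q^0)\nu^0+\gamma\mu q^0$ is a strictly concave quadratic in $\nu^0$ since $\eta^0>0$, so its unconstrained maximizer is $\frac{\partial_q h^0(t,q^0)+\gamma^0 q^0}{2\eta^0}$. Because the admissible set imposes $\nu^0\le 0$ (seller position), the constrained maximizer is the projection of this point onto $\R_-$, which is exactly $\big(\frac{\partial_q h^0(t,q^0)+\gamma^0 q^0}{2\eta^0}\big)^-$; this gives the stated feedback $\hat\nu^0_t$.

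Next I would handle the dark-pool supremum $\sup_\ell\sum_{i=1}^M\theta^i\,\E_r[h^0(t,q^0-\min(\ell^i,r^i))-h^0(t,q^0)]$ over $\ell^i\ge 0$ with $\sum_i\ell^i\le q^0$. For fixed $i$, set $g_i(\ell^i):=\E_{r^i}[h^0(t,q^0-\min(\ell^i,r^i))]$ and split the expectation over $\{r^i\ge\ell^i\}$ and $\{r^i<\ell^i\}$ using the law $F$ of $r^i$ (here $r^i=A^i$, since there are no fees in this section): $g_i(\ell^i)=h^0(t,q^0-\ell^i)(1-F(\ell^i))+\int_0^{\ell^i}h^0(t,q^0-a)\,dF(a)$. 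Differentiating in $\ell^i$, the density terms cancel and $g_i'(\ell^i)=-(1-F(\ell^i))\,\partial_q h^0(t,q^0-\ell^i)$. By concavity of $h^0(t,\cdot)$ and $\partial_q h^0(t,b(t))=0$, the sign of $g_i'(\ell^i)$ agrees with the sign of $b(t)-(q^0-\ell^i)$, so each $g_i$ is single-peaked with interior maximizer at $\ell^i=q^0-b(t)$. For $M=1$, clipping to the admissible interval $[0,Q^0_t]$ yields $\hat\ell_t=\min(Q^0_t,(Q^0_t-b(t))^+)$. For $M>1$ I would form the Lagrangian $L(\ell,\vartheta)=\sum_i\theta^i g_i(\ell^i)-\vartheta(\sum_i\ell^i-q^0)$ and write stationarity $\theta^i g_i'(\hat\ell^i)=\vartheta$ for all $i$; substituting the formula for $g_i'$ and cancelling the common sign gives $\theta^i(1-F(\hat\ell^i))\partial_q h^0(t,q^0-\hat\ell^i)=\theta^j(1-F(\hat\ell^j))\partial_q h^0(t,q^0-\hat\ell^j)$ for all $i,j$ (when the budget is slack, $\vartheta=0$ and both sides vanish, recovering $\hat\ell^i=q^0-b(t)$).

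Finally I would close with a verification step: plugging $(\hat\nu^0,\hat\ell)$ back into the generator shows the HJB supremum is attained, and an Itô/localization argument of the type used in the verification theorem above — relying on the assumed regularity and growth of $h^0$ and on the admissibility of these Markovian feedbacks (sign and budget constraints, integrability) — identifies $h^0(0,\cdot)$ with the major's value and $(\hat\nu^0,\hat\ell)$ as optimal. The main obstacle I anticipate is the dark-pool step for $M>1$: the objective $\sum_i\theta^i g_i(\ell^i)$ need not be jointly concave (only each $g_i$ is quasi-concave on the relevant region), so one must argue that a point satisfying the Lagrange condition is indeed the global maximizer over the polytope $\{\ell\ge 0,\ \sum_i\ell^i\le q^0\}$, together with mild regularity on $F$ to justify differentiation under the integral sign when $F$ has atoms.
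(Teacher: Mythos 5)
Your proposal is correct and follows essentially the same route as the paper's (sketch of) proof: the lit rate from the first-order condition of a strictly concave quadratic (projected onto $\R_-$), and the dark-pool allocation from the Lagrangian first-order conditions $\theta^i(1-F(\hat\ell^i))\partial_q h^0(t,q^0-\hat\ell^i)=\vartheta$, with the concavity of $h^0$ from Remark \ref{rem:concave} delivering uniqueness and the $M=1$ clipping formula; your added verification step and the caveat about lack of joint concavity for $M>1$ go slightly beyond the paper's sketch but do not change the argument. One harmless slip: the sign of $g_i'(\ell^i)=-(1-F(\ell^i))\partial_q h^0(t,q^0-\ell^i)$ is the \emph{opposite} of that of $b(t)-(q^0-\ell^i)$, though your conclusion that $g_i$ is single-peaked at $\ell^i=q^0-b(t)$ is still right.
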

\begin{proof}[Sketch of proof]
    The optimal lit market trading strategy is directly solved by first-order condition when the uniqueness of guaranteed by Remark \ref{rem:concave}. As for the dark pool strategy, the Lagrangian is given by\begin{align*}
    L(\ell,\vartheta) = \sum_{i=1}^M \mathbb{E}_{r} \Big[\theta^{i} \big(h\big(t, q^0-\min(\ell^{i}, r^{i})\big) - h(t,q^0) \big) \Big] + \vartheta \left( q^0- \sum_{i=1}^M \ell^i \right).
\end{align*}
First order conditions lead to \begin{align*}
    \nabla_\ell L(\ell, \vartheta) &= \left[
        \begin{array}{c}
       \theta^i(1-F(\ell^i))(-\partial_q h^0(t, q^0-\ell^i)) - \vartheta \\
        \vdots \\
        \theta^j(1-F(\ell^j))(-\partial_q h^0(t, q^0-\ell^j)) - \vartheta
        \end{array}\right] \\
        &= 0.
    \end{align*}
    This leads to the result of the theorem.
\end{proof}

\section{Numerical simulation}\label{sec:numeric}
We recall that Section \ref{sec:market} and Section \ref{sec:comp_market} introduce two market structures: a regulated market and a purely competitive market, respectively. In the regulated case, although we have already characterized the optimal incentive scheme $\xi$ and the optimal trading strategy of the large trader, explicit solutions for the value functions of both the trader and the exchange remain out of reach. An open question is how the exchange should design dynamic transaction fees in both the lit market and the dark pools to enhance market quality, \textit{i.e.}, to mitigate the permanent price impact. To address these problems, we present a deep learning-based numerical method designed to explicitly find the optimal transaction fees $c_t^l, c_t^{d,i}\, \forall i$, see \cite{baldacci2019market, lu2025multiagent, han2018solving, beck2020overview, raissi2024forward, ji2022deep,mastrolia2025optimal}, and a finite difference method to solve for the major-minor mean field equilibrium in a competitive market, see \cite{cui2024learning, chen2024periodic, firoozi2017execution}.

\subsection{Regulated market}
\subsubsection{Reduced market model}
For the sake of simplicity and to illustrate the result of Section \ref{sec:exchange}, we only consider two possible regulated markets: one lit market with one dark pool \textit{v.s.} one lit market with two dark pools monitored by an exchange. Furthermore, the utility function of the exchange is supposed to be linear $\Psi^E(x) = x$. The value function is this given by
\begin{align*}
    J_0^E &= \sup_{Y_0, Z, U, c^l, c^{d,i};\, (\hat\nu, \hat\ell)\in \mathcal{A} } \E^{\P^{\hat\nu,\hat\ell}} \Bigg[ PnL_T^E - Y^{Y_0,Z,U}_T \Bigg].
\end{align*}
The corresponding HJB-PDE obtained by dynamic programing is \eqref{eq:ex_HJB} with terminal condition $v(T,q,s,x,y,\iota)=\iota-y$.  We propose the ansatz that $v(t,q,s,x,y,\iota) = \tilde v(t,q,s,x,\iota) - y$, then we are able to simplify the original HJB-PDE as,
\[\begin{cases}
    &\partial_t \tilde v(t,q,s,x,\iota) + \sup_{c^l,c^{d,i},z,u} \widetilde{\mathcal F}^{c^l,c^d,z,u} [\tilde v](t,q,s,x,\iota) = 0\\
    & \tilde v(T,q,s,x,\iota)=\iota.
\end{cases}\]
where $\widetilde{\mathcal F}^{c^l,c^d,z,u}[\cdot]$ is an infinitesimal operator given by 
\begin{align*}
    &\widetilde{\mathcal F}^{c^l,c^d,z,u}[\tilde v](t,q,s,x,\iota): = \left( -c^l(\hat\nu_t + \lambda) + \kappa(\gamma \hat\nu_t + \epsilon \lambda) \right) \partial_\iota \tilde v(t,q,s,x,\iota) + \hat\nu_t \partial_q \tilde v(t,q,s,x,\iota)\\
    &+ (\gamma \hat\nu_t + \epsilon \lambda) \partial_s \tilde v(t,q,s,x,\iota)
    - \left( (s + \eta \hat\nu_t)\hat\nu_t - c^l \hat\nu_t \right) \partial_x \tilde v(t,q,s,x,I) + \frac{1}{2} \sigma^2 \partial_{ss} \tilde v(t,q,s,x,\iota) \\
    &- \Big( -\phi q^2 - \eta\hat\nu^2_t + c^l\hat\nu_t + q(\epsilon\lambda - 2\alpha\hat\nu_t) - \frac{1}{2}\rho(z + q\sigma)^2 - \frac{\gamma\hat\nu_t}{\sigma}z \\
&+ \sum_{i=1}^M \left( e^{-\rho u^i}\E \left[ e^{-\rho \alpha \min(\hat\ell^i_t, r^i) \left( 2q - \min(\hat\ell^i_t, r^i) \right)} \right] + \rho u^i - 1 \right) \frac{e^{k^\theta\nu}\theta^i}{\rho} \Big)\\
    &+ \sum_{i=1}^M e^{k^\theta \nu}\theta^{i} \mathbb{E} \left[ v(t, q - \min(r^{i}, \hat\ell^{i}_t), s, x + s \min(r^{i}, \hat\ell^{i}_t),  \iota + c^{d,i} \min(r^{i}, \hat\ell^{i}_t)) - v(t,q,s,x,\iota) \right].
\end{align*}
As we show in Section \ref{sec:trader_prob}, given the market condition, the optimal lit trading rate $\hat \nu_t$ can be solved explicitly when $k^\theta = 0$ and optimal allocation strategy $\hat\ell_t$ in dark pools can be solved by Newton-Raphson method. Hence, if we are able to find optimal transaction fees $(c_t^{l,*}, c_t^{d,i,*})$ and solution $(Y,Z,U)$ to BSDE \eqref{bsde:y}, the compensation $\xi$ can be solved. Consequently, we use two neural networks to approximate value function $\tilde v(t,q,s,x,\iota)$ and controls $(Z,U,c^l,c^{d,i})$ of the exchange respectively. The Universal Approximation Theorem \cite{hornik1989multilayer} guaranteed that a neural network with sufficient layers and appropriate activation functions can approximate any $C^k$ functions.

\subsubsection{Actor-Critic algorithm to solve HJB equation with jumps}
Following the approach in \cite{baldacci2019market}, we use a reinforcement learning algorithm named actor-critic method to solve the HJB equation. This approach consists of two parts, an actor (a neural network) represents the controls and a critic (another neural network) represents the value function. Then these two neural networks would be updated iteratively until convergence. 

We first discretize our original continuous problem with time step $\Delta t$. Then by first order approximation, the value function $\tilde v(t,\cdot,\cdot,\cdot)$ can be expressed as
$$
\tilde v(t,\cdot,\cdot,\cdot) \approx \tilde v(t+\Delta t,\cdot,\cdot,\cdot) - \partial_t \tilde v(t,\cdot,\cdot,\cdot) \Delta t.
$$
Let $\pi_t$ denote controls of the exchange $(c^l_t, c^{d,i}_t, Z_t, U^{i})$. At each time t, we use neural networks $\mathcal{N}_{\tilde v}(t, q, s, x,,\iota, \pi)$ (critic) to approximate $\tilde v$. Then we use $\mathcal{N}_{\pi}(t, q, s, x,\iota, \tilde v)$ (actor) to represent $\pi_t$. The first step is to minimize following loss function to find the best parameters of $\mathcal{N}_{\tilde v}(t, q, s, x,\iota, \pi)$,
\begin{align*}
    L_{\tilde v} &= \frac{1}{\mathcal{K}} \sum_{k=1}^\mathcal{K} \bigg(\mathcal{N}_{\tilde v}(t_k+\Delta t, q_k, s_k, x_k,\iota_k, \mathcal{N}_{\pi}) +  \widetilde{\mathcal F}^{\mathcal{N}_\pi} [\mathcal N_{\tilde v}](t_k,q_k,s_k,x_k,\iota_k) \Delta t - \mathcal{N}_{\tilde v}(t_k, q_k, s_k, x_k, \iota_k,\mathcal{N}_{\pi})\bigg)^2,
\end{align*}
where $\mathcal{K}$ is the batch size and $(t_k,q_k,s_k,x_k,\iota_k), k\in \{1,\cdots, \mathcal{K}\}$ are the states in the training dataset, which is uniformly sampled from the state space $\mathcal{S} = [0,T]\times[0,\bar{Q}]\times [0,\bar{S}]\times [0,\bar{X}]\times [0,\bar I]$. As we discussed in Section \ref{sec:exchange}, the exchange takes the optimal controls $\hat \pi_t$ to maximize $\widetilde{\mathcal F}^{\mathcal{N}_\pi} [\mathcal N_{\tilde v}](t_k,q_k,s_k,x_k,\iota_k)$, which is equivalent to find optimal controls to minimize following loss function of actor, 
\begin{align*}
    L_\pi &= -\frac{1}{\mathcal{K}}\sum_{k=1}^\mathcal{K} \widetilde{\mathcal F}^{\mathcal{N}_\pi} [\mathcal N_{\tilde v}](t_k,q_k,s_k,x_k,\iota_k). 
\end{align*}
The actor-critic algorithm and training procedure we used is described in Algorithm \ref{alg:ac}.
\begin{algorithm}[H]
\caption{Actor–Critic Update for $M$ Dark Pools}\label{alg:ac}
\begin{algorithmic}[1]
\STATE \textbf{Input:} Model parameters in Table \ref{tab:params}.
\STATE Initialize parameters of $\mathcal N_{\tilde v}, \mathcal{N}_\pi$.
\FOR{ epoch $i \gets 0,1,2,\dots$}
    \STATE Uniformly sample a batch of states $(t_k,q_k,s_k,x_k,\iota_k)$ of size $\mathcal{K}$.
    \STATE \textbf{Value–network step} (every $K_c$ epochs):
      \STATE \quad Freeze $\mathcal N_{\pi}$, unfreeze $\mathcal N_{\tilde v}$.
      \STATE \quad Given $(t_k,q_k,s_k,x_k,\iota_k)$, compute optimal trading strategy $\hat\nu_k, \hat\ell_k$ 
      \STATE \quad Compute targets $y_k = \mathcal N_{\tilde v}^{\text{tgt}}(t_k+\Delta t,q_k,s_k,x_k,\iota_k) + \widetilde{\mathcal F}^{\mathcal{N}_\pi} [\mathcal N_{\tilde v}](t_k,q_k,s_k,x_k,\iota_k)\,\Delta t$.
      \STATE \quad Minimize $\frac{1}{\mathcal{M}}\sum_k \bigl(\mathcal N_{\tilde v}(t_k,q_k,s_k,x_k,\iota_k)-y_k\bigr)^2$.
    \STATE \textbf{Policy–network step} (every $K_a$ epochs):
      \STATE \quad Freeze $\mathcal N_{\tilde v}$, unfreeze $\mathcal N_{\pi}$.
      \STATE \quad Given $(t_k,q_k,s_k,x_k,\iota_k)$, compute optimal trading strategy $\hat\nu_k, \hat\ell_k$ 
      \STATE \quad Minimize $-\frac{1}{\mathcal{M}}\sum_k \widetilde{\mathcal F}^{\mathcal{N}_\pi} [\mathcal N_{\tilde v}](t_k,q_k,s_k,x_k,\iota_k)$.
    \STATE Soft–update target network $\mathcal N_{\tilde v}^{\text{tgt}} \leftarrow (1-\tau)\,\mathcal N_{\tilde v}^{\text{tgt}} + \tau\,\mathcal N_{\tilde v}$ every $K_{\text{tg}}$ epochs.
\ENDFOR
\STATE \textbf{Output:} $\mathcal N_{\tilde v}^*, \mathcal{N}_\pi^*$
\end{algorithmic}
\end{algorithm}

Table~\ref{tab:params} lists key parameters used in our numerical simulation. Since in regulated market, we focus on the impact of regularizations (transaction fee), we take the trading rate of the small trader a small constant number, $\lambda_t = -0.01$. As mentioned above, for simplicity, we take $k^\theta = 0$, so that $\hat\nu_t$ can be solved explicitly.  

\begin{table}[h]
  \centering
  \begin{tabular}{lcl}
  \hline
  Parameter & Value & Description \\
  \hline
  $T$ & 1 & Scaled trading day horizon \\
  $Q_0$ & 1 & Scaled initial inventory \\
  $\Delta t$ & $10^{-3}$ & Time step in training loop \\
  $h$ & $10^{-3}$ & Finite-difference step for gradients \\
  $\mathcal{K}$ & 200 & Batch size per epoch \\
  $l_r$ & $10^{-3}$ & Learning rate (both nets) \\
  $\theta$ & $(30,20)$ &  Dark-pool Poisson intensities \\
  $a$ & $(100,150)$ & Size parameter of Exponential liquidity \\
  $k$ & $10^{2}$ & Fee exponent \\
  $\rho$ & 300 & Exponential utility parameter \\
  $\alpha$ & 0.04 & Terminal penalty \\
  $\gamma$ & 0.01 & Permanent impact\\ 
  $\epsilon$ &  0.01 & Permanent impact of small traders \\
  $\eta$ & 0.02 & Temporary impact \\
  $\sigma$ &0.02 & Annualized volatility \\
  $k^\theta$ & 0 & Arrival order penalty of fees \\
  $\lambda$ & -0.01 & Trading rate of the small trader \\
  \hline
  \end{tabular}
  \caption{regulated market model parameters}
  \label{tab:params}
\end{table}

To balance performance and computational cost of training neural networks, the critic network consists of 4 fully-connected layers with consistent hidden dimension: three hidden layers each containing 64 neurons, followed by a linear output layer producing a scalar state-value estimate. All hidden layers incorporate batch normalization and LeakyReLU activation. The actor network employs a shared base architecture with 2 hidden layers (64 neurons each), also utilizing batch normalization and LeakyReLU activation, which branches into four specialized output heads: 1) a sigmoid-activated head for $c_l$  scaled to $[0, 0.01]$ to make transaction fees economically reasonable, 2) a sigmoid-activated head for $c_d^i$ similarly scaled, 3) a tanh-activated head for $Z$, and 4) a softplus-activated head for $U$ with $\beta=0.1$ to ensure non-negative outputs. The activation functions are carefully selected to enforce domain-specific action constraints critical for policy feasibility. Weight initialization follows orthogonal initialization for hidden layers and uniform initialization in $[-3\times10^{-3}, 3\times10^{-3}]$ for output layers to prevent early saturation. We employ Adam optimization with automatic differentiation, using a learning rate chosen from \{0.001, 0.0005, 0.0001, 0.00005, 0.00001\} to balance training stability and convergence.

\subsection{Competitive market}
We now turn to the solution to the results of Section \ref{sec:comp_market}. We solve the MFG via a forward–backward sweep: a backward Hamilton–Jacobi–Bellman (HJB) PDE for the major trader coupled with a forward Fokker–Planck (FP) equation for the minor mean field. For the Mean Field Game solver, we implement an iterative finite-difference scheme operating on discretized time and inventory grids with resolutions $N_\text{time}$ and $N_q$ respectively. The algorithm alternates between backward Hamiltonian-Jacobi-Bellman (HJB) solutions and forward Fokker-Planck (FP) propagation. In each iteration $i$, we first solve the HJB equation backward in time (from $t_{N_{\text{time}}-1}$ to $t_0$) to compute the value function and extract optimal controls at each grid point $(t_n,q_j)$. These controls then drive the forward FP propagation to update the mean field distribution $\mu^{(i+1)}_n$ and expectation $E^{(i+1)}_n$ through the minor agents' inventory dynamics. A relaxation parameter $\omega \in (0,1)$ stabilizes convergence via damped updates $\mu^{(i+1)} \leftarrow (1-\omega)\mu^{(i)} + \omega\mu^{(i+1)}$. The loop terminates when the infinity norm $\lVert\mu^{(i+1)} - \mu^{(i)}\rVert_\infty$ falls below tolerance $\varepsilon$, indicating mean field equilibrium. This fixed-point iteration balances numerical precision through grid resolution while maintaining computational tractability via the relaxation scheme. The discrete algorithm is given in Algorithm~\ref{alg:mfg}.

\begin{algorithm}[H]
\caption{Major--Minor MFG Solver}\label{alg:mfg}
\begin{algorithmic}[1]
\STATE \textbf{Input:} Model parameters in Table \ref{tab:mfg_params}, relaxation weight $\omega$, tolerance~$\varepsilon$.
\STATE Discretize time $t_n = n\,\Delta t$ and inventory $q_j = j\,\Delta q$ grids.
\STATE Initialize mean field $\mu^{(0)}$ and expectation $E^{(0)}$.
\FOR{$i \gets 0,1,2,\dots$ \textbf{until} convergence}
  \STATE \textbf{(Backward HJB)} For $n = N_\text{time}-1$ down to~$0$:
    \STATE \quad Solve discretized HJB for value $H^{(i)}_{n,j}$ on grid $(t_n,q_j)$.
    \STATE \quad Extract optimal controls $\nu^{(i)}_{n,j}$, $\ell^{(i)}_{n,j}$, update $h^{(i)}_{n,j}$.
  \STATE \textbf{(Forward FP)} For $n=0$ up to $N_{time}$:
    \STATE \quad Update mean field $\mu^{(i+1)}_n$ and expectation $E^{(i+1)}_n$.
  \STATE \textbf{Relaxation:} $\mu^{(i+1)} \leftarrow (1-\omega)\,\mu^{(i)} + \omega\,\mu^{(i+1)}$.
  \STATE \textbf{Check} $\lVert\mu^{(i+1)} - \mu^{(i)}\rVert_\infty < \varepsilon$.
  \IF{converged} \textbf{break}\ENDIF
\ENDFOR
\STATE \textbf{Output:} Equilibrium controls $\nu^{\ast},\ell^{\ast}$ and paths $Q^0,\mu$.
\end{algorithmic}
\end{algorithm}

To ensure numerical stability and the result comparable, we use the same parameters in the regulated market above. Table~\ref{tab:mfg_params} lists the parameters. To make the market competitive, we initialize the expectation of the initial inventory of the minor with a small value $0.1$.

\begin{table}[h]
  \centering
  \begin{tabular}{lcl}
  \hline
  Symbol & Value & Comment \\
  \hline
  $T$ & 1.0 & Scaled trading day horizon \\
  $N_{\text{time}}$ & 1000 & Euler time steps ($\Delta t=10^{-3}$) \\
  $q_{\max}$ & 1.2 & Grid upper bound ($Q_0{=}1$) \\
  $N_{q}$ & 400 & Number of inventory grid ($\Delta q = 0.003$) \\
  $\theta$ & 30 & Dark-pool Poisson intensity \\
  $a$ & 200 & Size parameter of exponential liquidity\\
  $\alpha$ & 0.04 & Terminal penalty \\ 
  $\gamma_0,\gamma$ & 0.01 & Permanent impact \\
  $\eta_0,\eta$ & 0.02 & Temporary impact \\
  $\sigma$ & 0.02 & Annualized volatility \\
  $E_0$ & 0.1 & Expectation of minor's initial inventory \\
  \hline
  \end{tabular}
  \caption{Competitive market model parameters}
  \label{tab:mfg_params}
\end{table}

\subsection{Illustration and policy recommendation}\label{sec:illustration}
In the regulated market, where the exchange controls a time-dependent fee schedule, we distinguish between the case of a single dark pool ($M=1$) and that of two dark pools ($M=2$). By contrast, the competitive market is modeled as a zero--fee mean–field game involving a major trader and a continuum of minor traders. Once the models are trained, we conduct 10,000 simulations under the same parameter settings as above in both the regulated and competitive markets to evaluate their relative performance. To compute the compensation $\xi$ in the regulated markets, we require a fixed reservation utility $R_0$, specified via a benchmark strategy based on the Almgren–Chriss framework (see \cite{baldacci2019note}). The main finding is that endogenously determined fees yield smaller permanent price impact and faster inventory liquidation, with the addition of a second dark pool further amplifying these improvements.\\

\begin{figure}[htbp]
    \centering
    \subfloat[Lit market transaction fee]{\label{fig:m1_lit_fee}
        \includegraphics[width=0.4\textwidth]{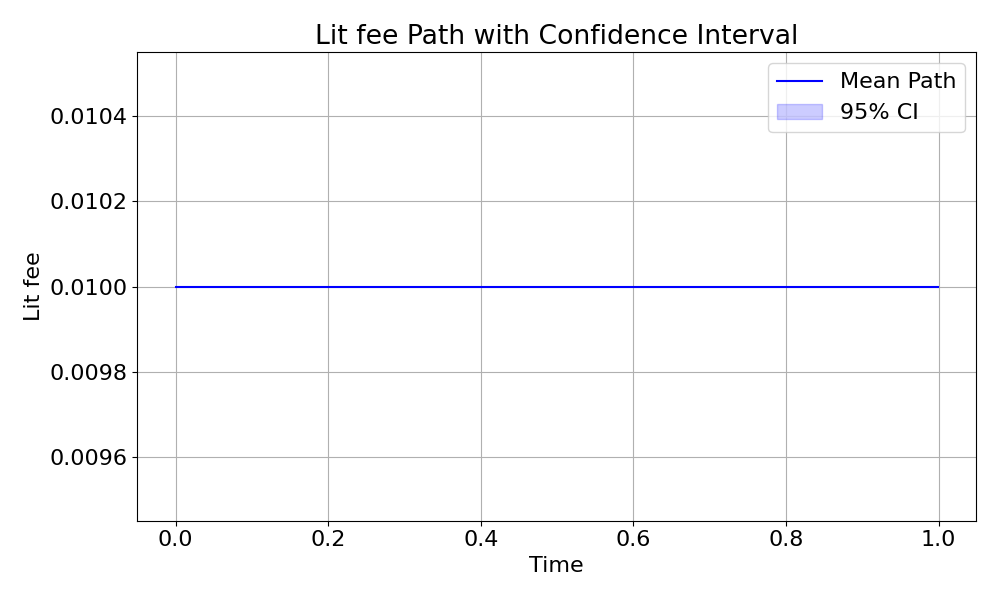}
    }
    \subfloat[Dark pool transaction fee]{\label{fig:m1_dp_fee}
        \includegraphics[width=0.4\textwidth]{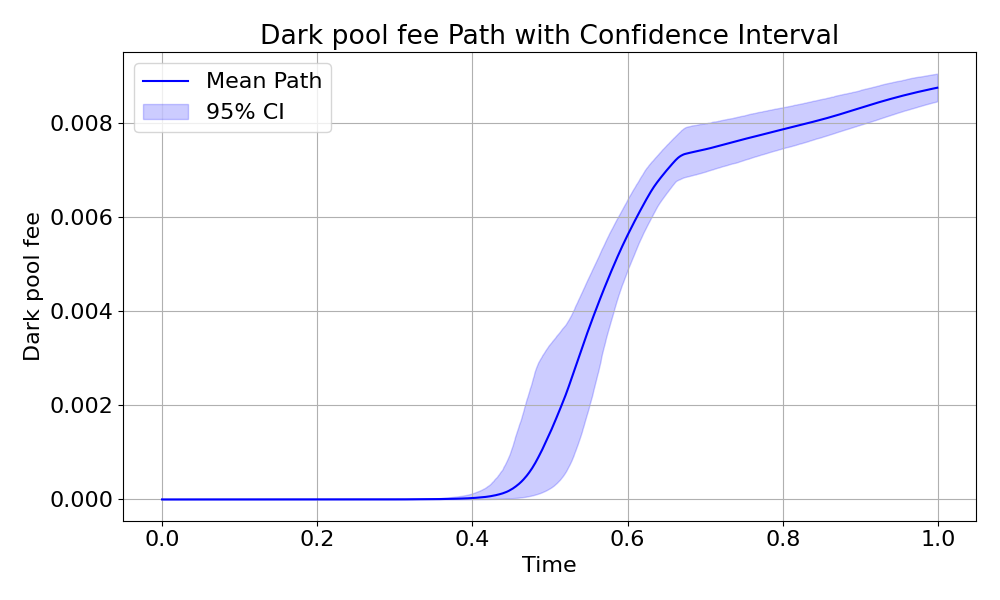}
    }  
    \caption{Transaction fees, $M=1$}
    \label{fig:m1_fees}
\end{figure}

We first focus on analyzing the key points of this paper, the time-dependent fee schedule. Figure \ref{fig:m1_lit_fee} displays the fee $c_t^l$ charged in the lit market.  Within estimation error it is flat, taking the constant value $0.01$, which is the upper bound of transaction fees, over the entire trading horizon $[0,1]$, which would minimize trading in the lit market as much as possible in order to mitigate market impact. Because all dark--pool trades are internalized at the mid price $S_t$, a positive constant lit fee is sufficient to keep opportunistic minor traders from routing too much flow to the displayed venue. Figure \ref{fig:m1_dp_fee} shows the fee $c_t^d$ in the dark pool. It is almost zero up to $t\approx 0.4$, then increases sharply and finally flattens out at roughly $8.7\!\times\!10^{-3}$. The S--shaped profile can be explained from a control perspective: (i)~\emph{Front--loading liquidity.}  Early in the execution window the inventory of the major trader is still large, reducing the dark--pool fee to zero maximizes the chance of crossing dark flow and removes inventory which helps to reduce the amount of inventory to be executed in the lit market.  (ii)~\emph{Back--loading discouragement.}  Once the remaining size is small, which means the potential permanent price impact is small enough, then the market quality is the secondary. At this stage raising the fee would lead to more profits, the fees collected from trades. In short, a dynamic fee in the dark venue and a static fee in the lit venue jointly implement the reduction of total market impact predicted by the optimization program.\\

\begin{figure}[htbp]
    \centering
    \subfloat[Lit market transaction fee]{\label{fig:m2_lit_fee}
        \includegraphics[width=0.4\textwidth]{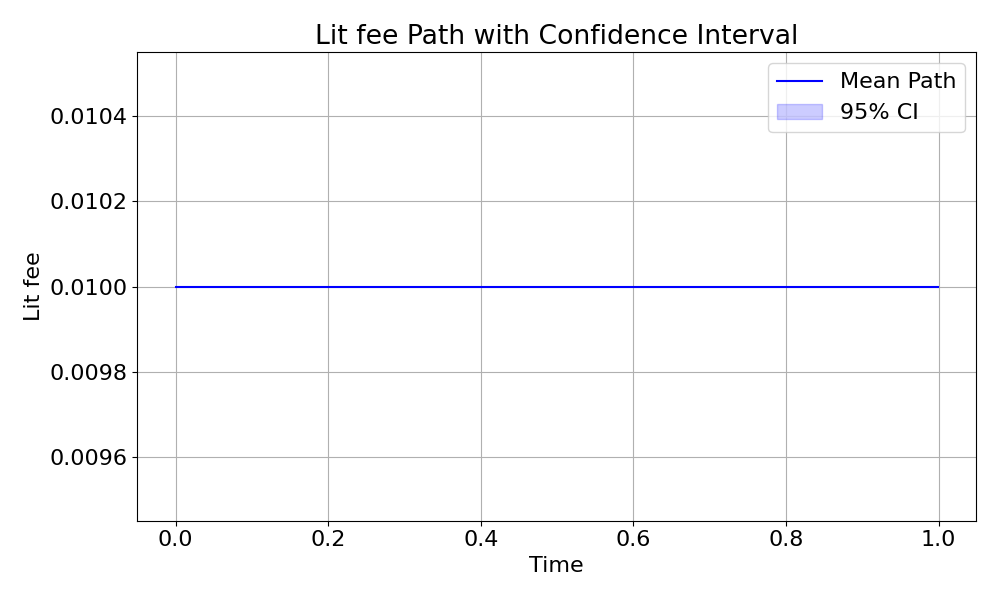}
    }
    \subfloat[Dark pool 1 transaction fee]{\label{fig:m2_dp1_fee}
        \includegraphics[width=0.4\textwidth]{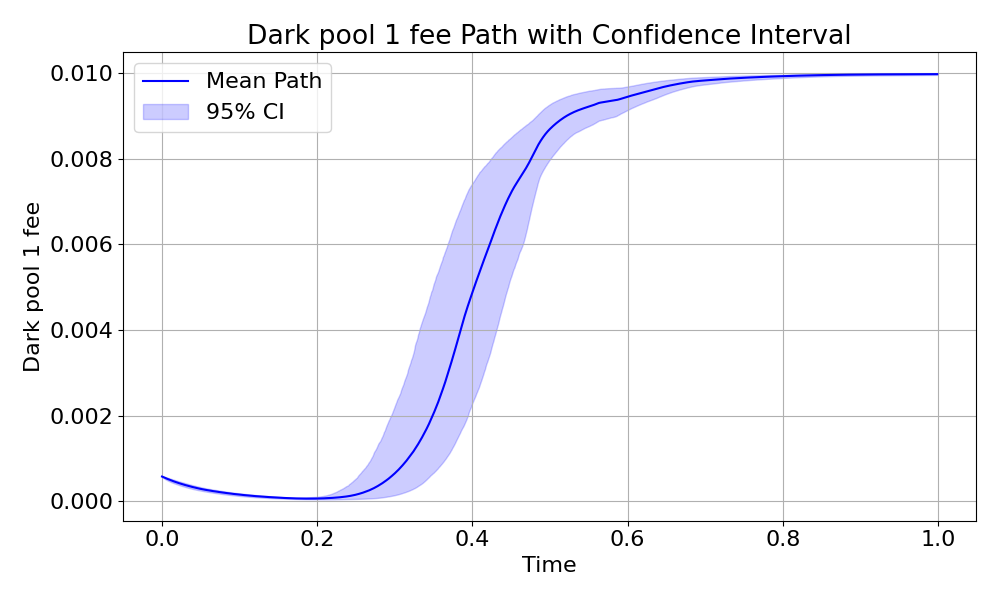}
    }  
    \hfill
    \subfloat[Dark pool 2 transaction fee]{\label{fig:m2_dp2_fee}
        \includegraphics[width=0.4\textwidth]{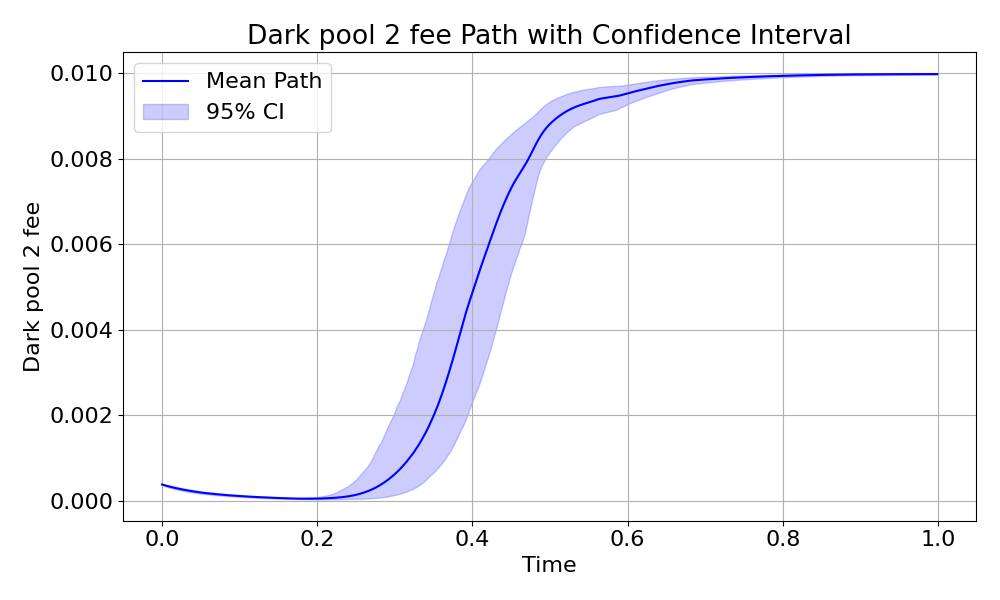}
    }  
    \subfloat[Percentage difference in dark pool fees]{\label{fig:m2_dp2_fee_diff}
        \includegraphics[width=0.4\textwidth]{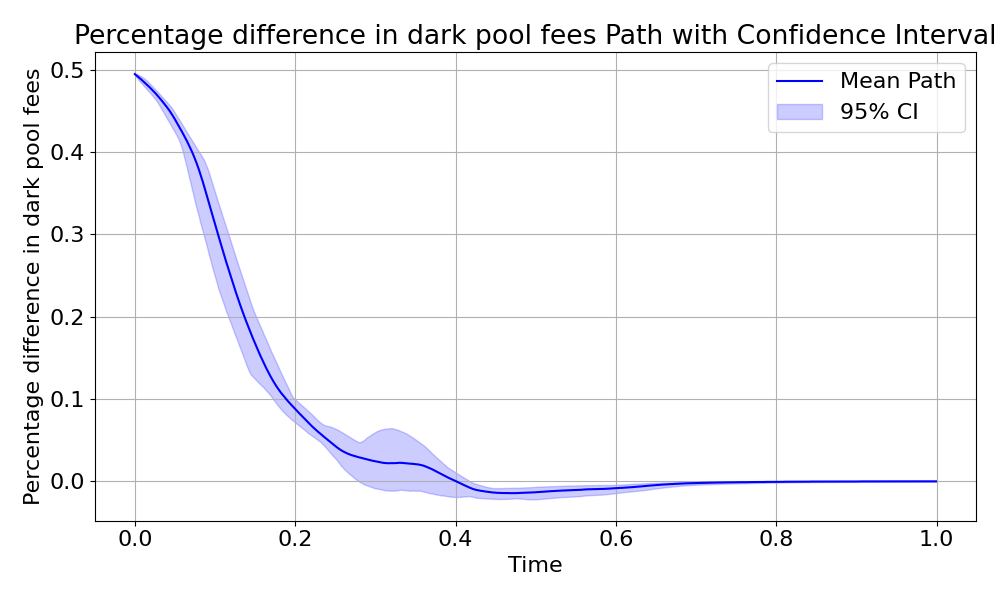}
    }
    \caption{Transaction fees, $M=2$}
    \label{fig:m2_fees}
\end{figure}

\begin{figure}[htbp]
    \centering
    \subfloat[Compensation in regulated market $M=1$]{\label{fig:m1_comp}
        \includegraphics[width=0.4\textwidth]{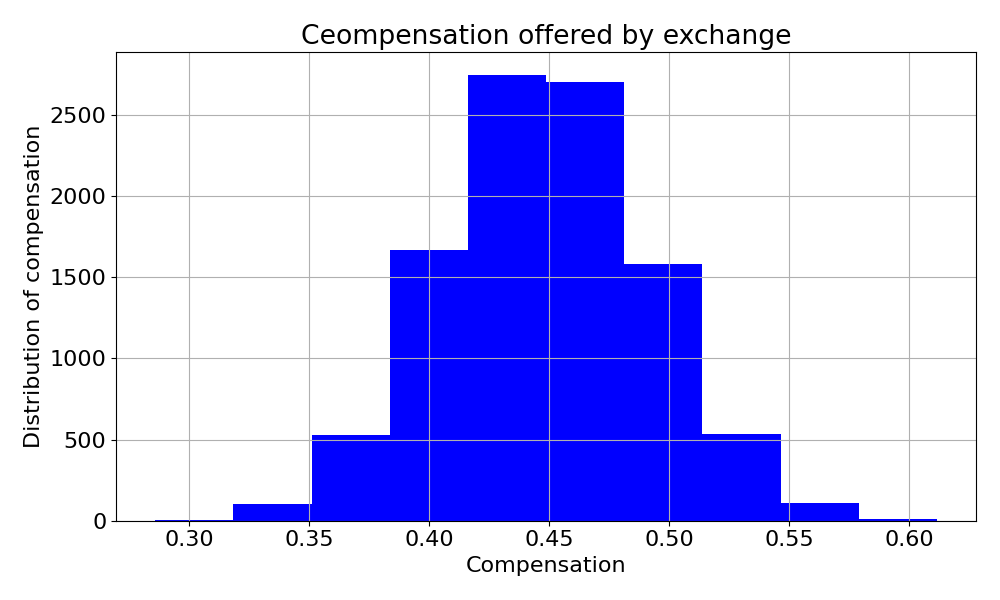}
    }
    \subfloat[Compensation in regulated market $M=2$]{\label{fig:m2_comp}
        \includegraphics[width=0.4\textwidth]{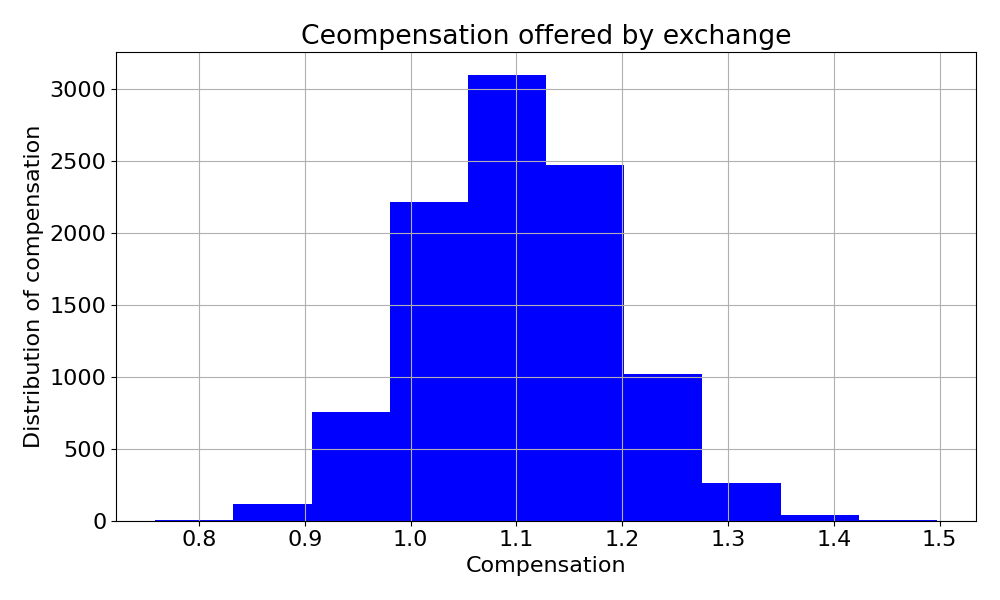}
    }  
    \caption{Compensation offered by the exchange}
    \label{fig:compensation}
\end{figure}

Now we turn to the case with two dark pools, in our experiments, the hidden liquidity in the dark pool 1 is higher than that in dark pool 2, that is, both the order arrival rate and the expectation of order size are larger. Figure \ref{fig:m2_lit_fee} confirms that the lit fee remains constant, as before.  Figure \ref{fig:m2_dp1_fee} shows that $c_t^{d,1}$ reaches the cap of $1.0\!\times\!10^{-2}$ already at $t\approx 0.6$, whereas $c_t^{d,2}$ in Figure \ref{fig:m2_dp2_fee} keeps charging almost nothing up to $t\approx 0.3$ and only thereafter moves its fee upward, converging to the same terminal level. The staggering avoids fragmenting liquidity too early: initially lower fees is allowed to attract more taking orders to reduce the inventory of the big trader, thus keeping the effective high dark volume per venue. Later, two dark pools increase the fees to increase the profits earned by the exchange since the inventory is lower enough, it is not necessary to take the inhibitory effect of fees on liquidity. We note that with tow dark pools, the inventory of big trader decreases quicker than that with one dark pool, hence, the dark pools increase fees earlier. Moreover, Figure \ref{fig:m2_dp2_fee_diff} describes the percentage difference of the transaction fees between these two dark pools, where initially the $c_t^{d,1}$ is twice as large as $c_t^{d,2}$ and then the gap converges to 0 as $t$ reaches 0.6 when both transaction fees reach the upper bound. Besides, we notice that during the time window $[0,T]$, the $c_t^{d,1}$ is always larger than $c_t^{d,2}$. Hence, we call such a transaction fee gap liquidity premium. Technically, the exchange trades off two convex cost terms, permanent market impact for the major trader and fees collected for itself to achieve a lower global optimum. Figures  \ref{fig:m1_comp} and \ref{fig:m2_comp} gives an histogram of the distribution of the optimal compensation given to the large trader for $M=1$ and $M=2$ dark pools respectively. We observe that this compensation is higher in the case of competitive dark pools. This confirms the intuitive idea that market fragmentation benefits the general market quality (see Figure \ref{fig:m1_impact} and \ref{fig:m2_impact} for a clear visualisation of this effect) and enables more efficient resource allocation. \\

Figure \ref{fig:m1_lit_rate} to \ref{fig:m3fg_lit_rate} depict the trading rate $\hat\nu_t$ in the lit market. For both regulated scenarios, $\hat\nu_t$ is strictly increasing, starting from approximately $-1.75$ at $t=0$ and ending near $-0.20$ at $t=1$.  In economic terms the big trader initially traders conducted rapid trading in the lit market due to the holding cost of inventory. In the initial stage, as inventory was quickly executed in the dark pool, the speed of lit trading slowed down. However, as the fees in the dark pool rapidly increased, liquidity decreased, and the rate of inventory reduction slowed down, the lit trading rate also slowed down. Eventually, this led to the lit rate taking on the shape of an increasing concave function. This behavior exactly mirrors the S--shaped fee path discussed above. Notice, however, that the absolute value of the curve is lower when $M=2$ (Figure \ref{fig:m2_lit_rate}) since the big trader offloads less inventory because more units are executed in the second dark pool. In the competitive benchmark (\ref{fig:m3fg_lit_rate}) the rate is significantly flatter without a fee schedule to penalize impatient lit trading. The finding is consistent with the classical mean–variance intuition: if trading costs are purely temporary, adding a quadratic inventory penalty is not enough to induce precise timing and a carefully designed fee provides the missing first--order incentive. \\

\begin{figure}[htbp]
    \centering
    \subfloat[Lit market trading rate in regulated market $M=1$]{\label{fig:m1_lit_rate}
        \includegraphics[width=0.4\textwidth]{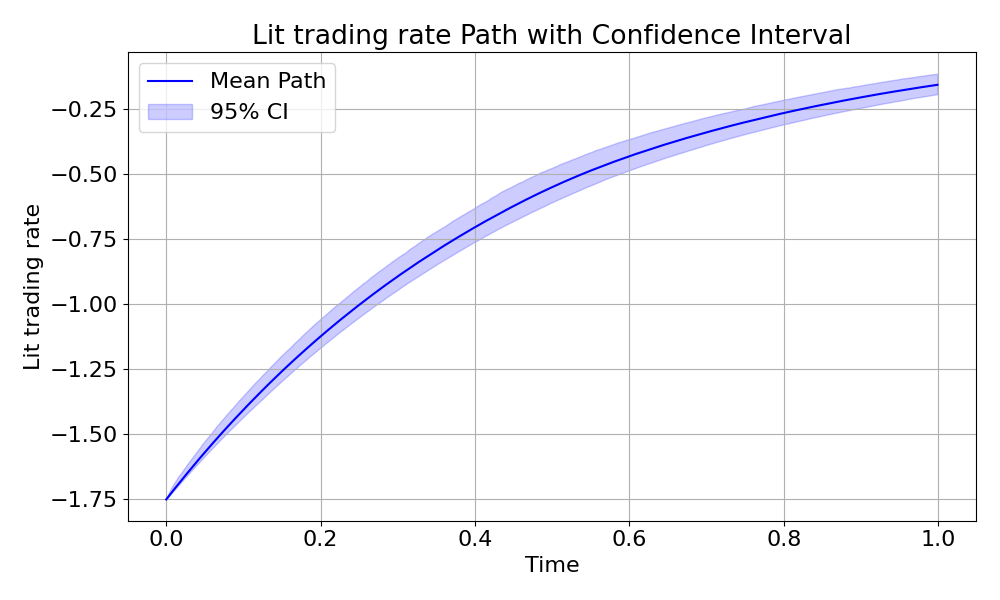}
    }
    \subfloat[Lit market trading rate in regulated market $M=2$]{\label{fig:m2_lit_rate}
        \includegraphics[width=0.4\textwidth]{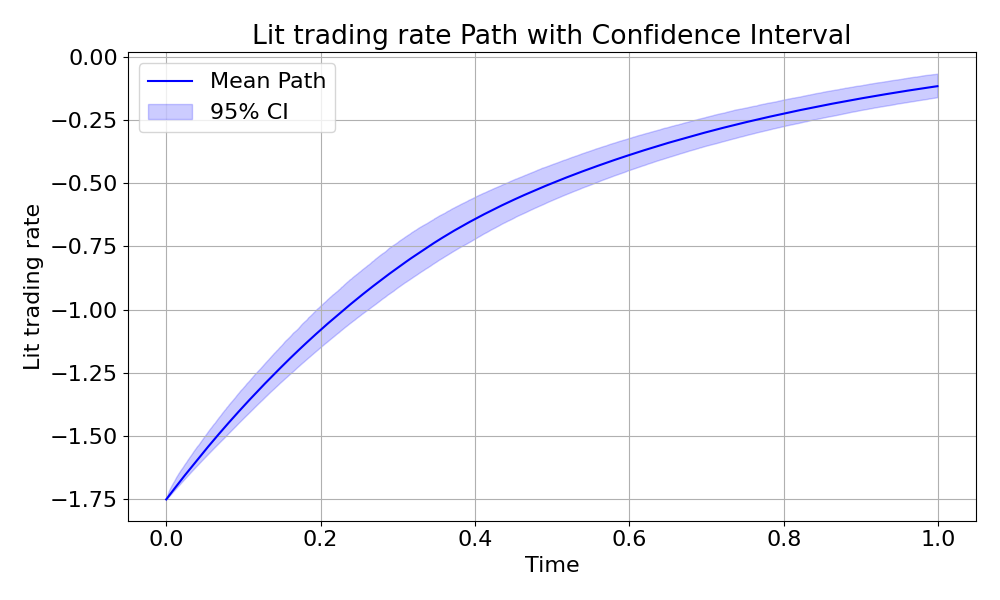}
    }  
    \hfill
    \subfloat[Lit market trading rate in competitive market]{\label{fig:m3fg_lit_rate}
        \includegraphics[width=0.4\textwidth]{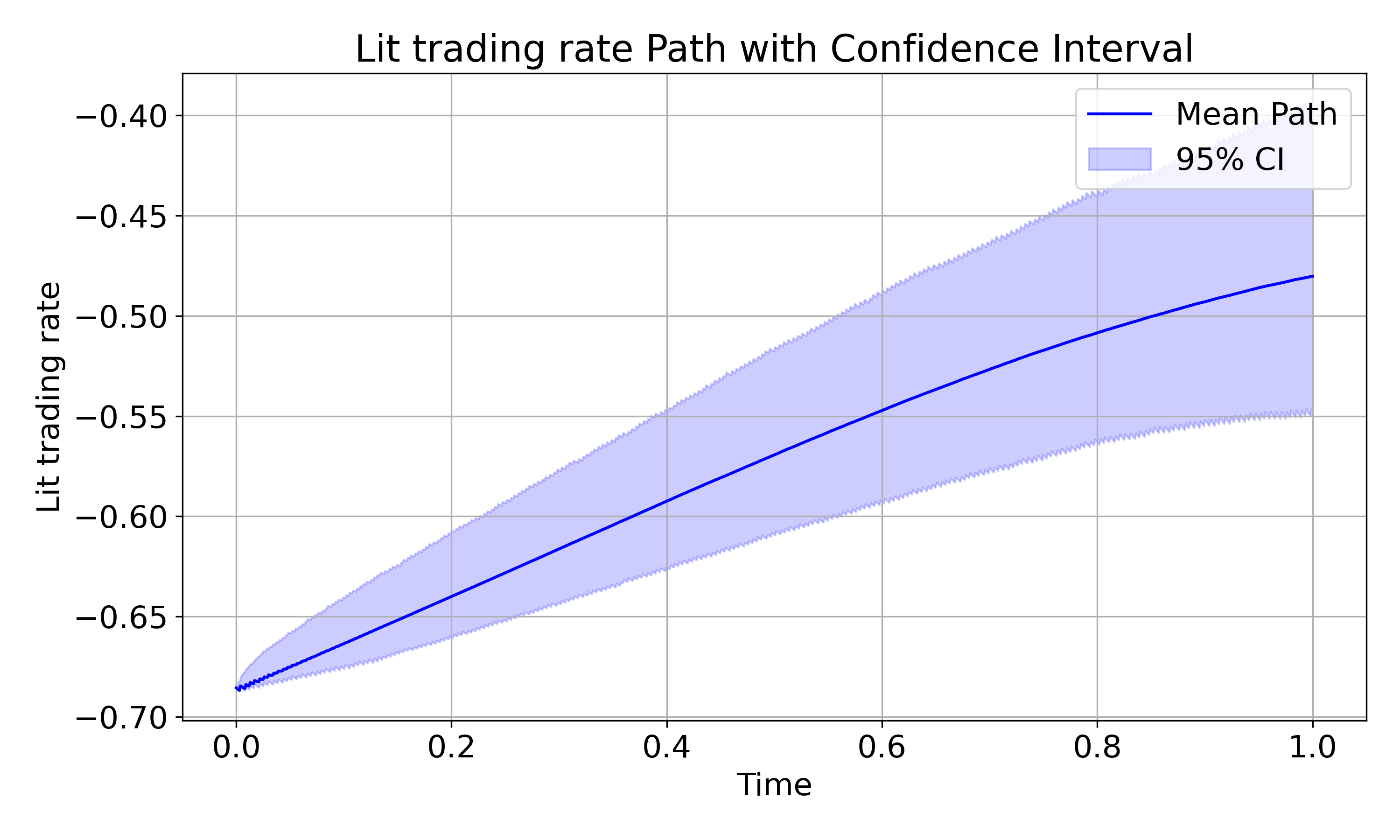}
    }  
    \caption{Lit market trading rate}
    \label{fig:lit_rate}
\end{figure}

We then discuss the most important metric we are concerned, the market impact. The histogram in Figure \ref{fig:m1_impact} has its mode around $-0.007$, whereas Figure \ref{fig:m2_impact}, for $M=2$, shifts the distribution to the right by roughly $7\%$ and we can observe that the distribution of impacts are more concentrated, which lead to lower standard deviation. Both shifts are economically large: the dollar value of a $7\%$ smaller impact on a \$1 million order. Figure \ref{fig:m3fg_impact} shows that the competitive market entails a much wider impact distribution centered at $-0.0074$. The fatter left tail stems from sample paths in which the major trader was forced to accelerate lit trading while the minor traders are competing with him. Collectively the three panels support the hypothesis that dynamic fees mitigate market impact and the presence of an additional dark pool enhances the buffering capacity.\\

\begin{figure}[htbp]
    \centering
    \subfloat[Price impact in regulated market $M=1$]{\label{fig:m1_impact}
        \includegraphics[width=0.4\textwidth]{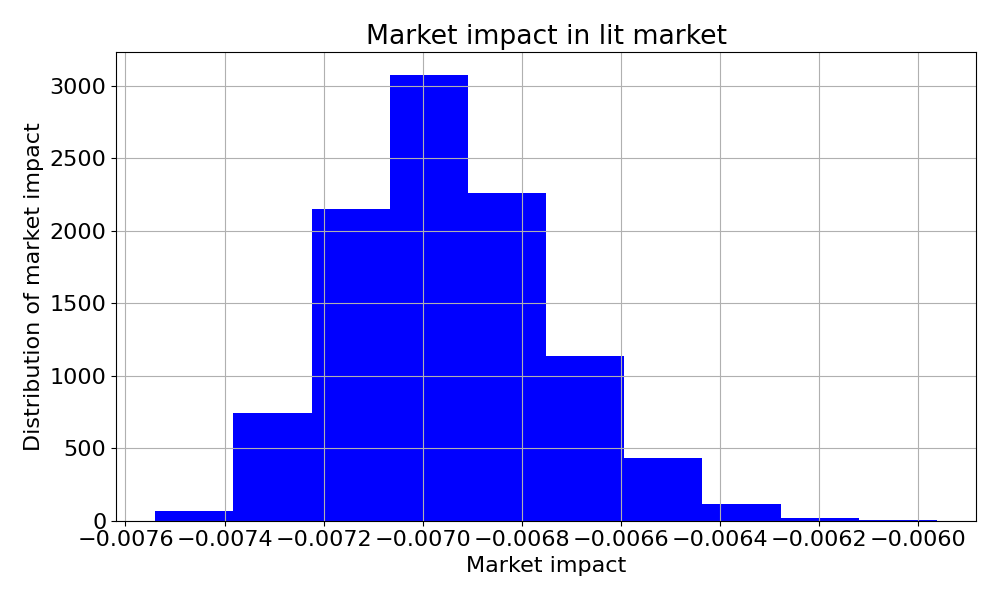}
    }
    \subfloat[Price impact in regulated market $M=2$]{\label{fig:m2_impact}
        \includegraphics[width=0.4\textwidth]{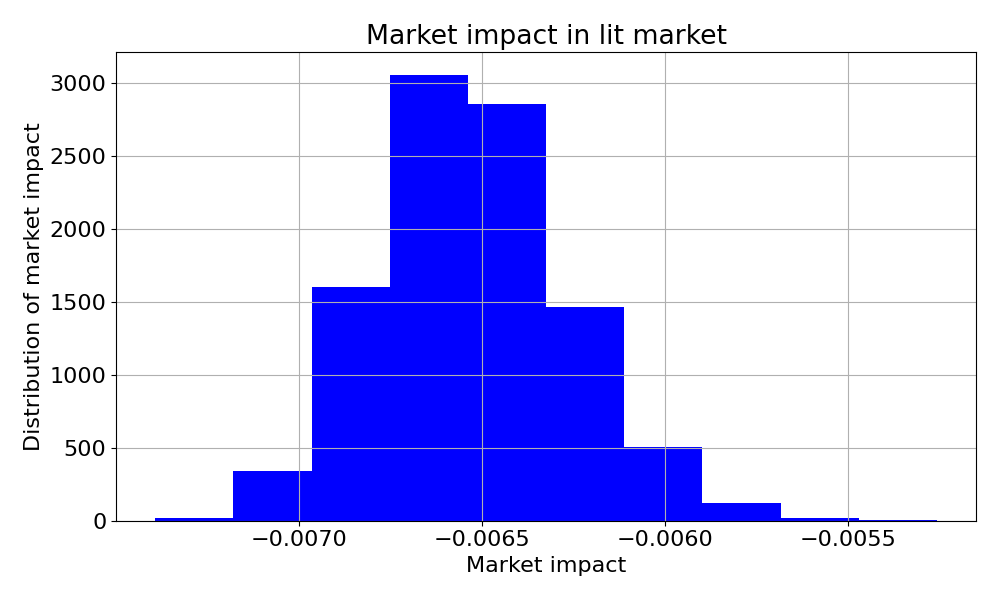}
    }  
    \hfill
    \subfloat[Price impact in competitive market]{\label{fig:m3fg_impact}
        \includegraphics[width=0.4\textwidth]{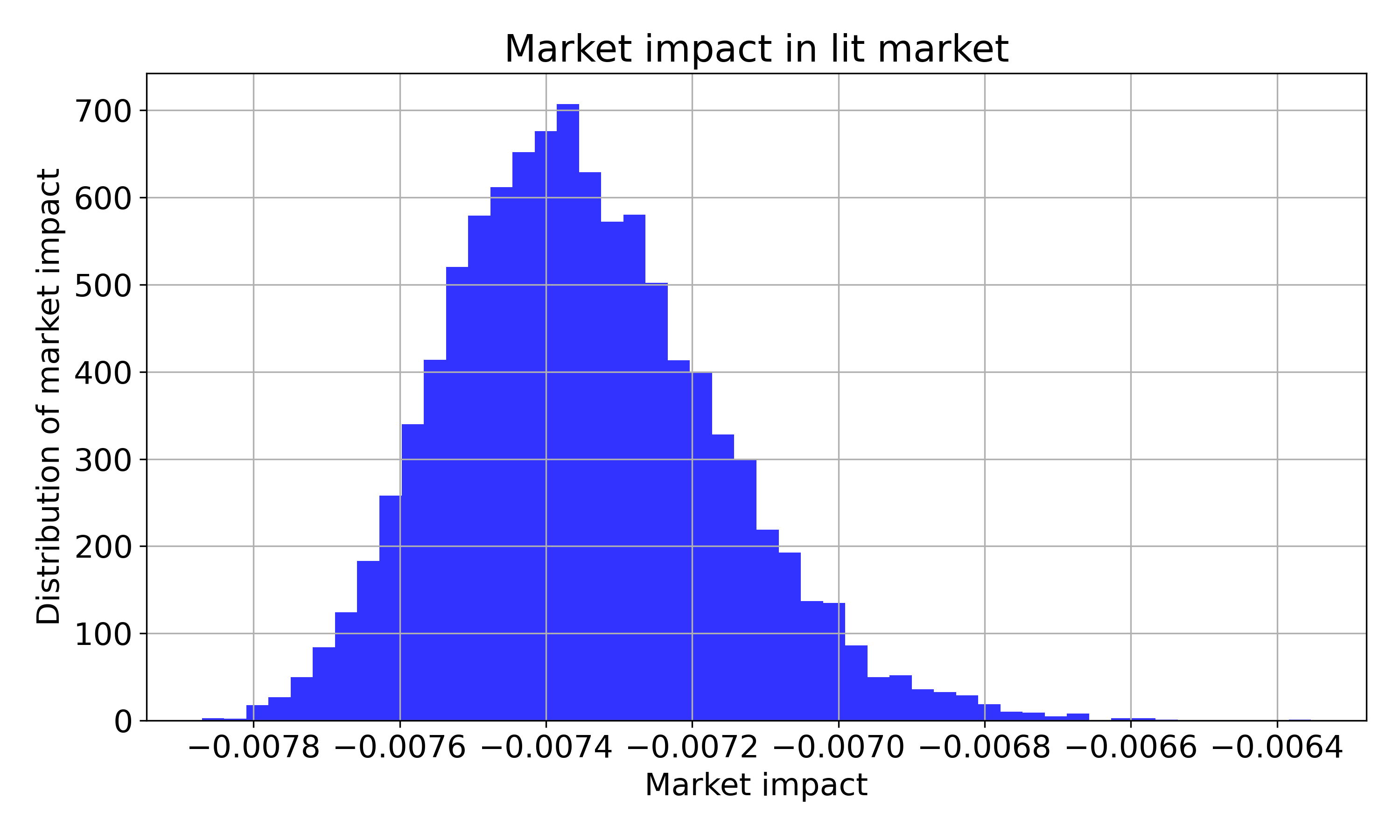}
    }  
    \caption{Permanent price impact}
    \label{fig:impact}
\end{figure}

Finally, we investigate the compensation $\xi$ that the exchange offered to the major trader to keep him in the regulated regime. When $M=1$ the distribution is roughly Gaussian with mean $0.45$, interquartile range $[0.4,0.5]$.  After a second dark pool is introduced the mean compensation jumps to $1.13$, and the distribution becomes slightly right--skewed. The interpretation is straightforward: the exchange increase the fee level in dark pools to earn more profits hence it must provide higher compensation to the big trader. Nevertheless, from the exchange’s perspective the increase in compensation value generated by the smaller permanent impact and higher collected fees dominate the added cash outflow.\\

Our analysis yields three key insights. First, a time-varying fee, particularly the S-shaped figure identified in Figure \ref{fig:m1_dp_fee}, \ref{fig:m2_dp1_fee} and \ref{fig:m2_dp2_fee}, aligns private and social incentives more effectively than any constant fee in the sensitivity analysis. Second, the model consistently prescribes a constant fee in the lit market, enabling regulators to support the scheme without sacrificing transparency in displayed markets. Third, introducing a second dark pool reduces both the mean and variance of permanent impact despite higher monetary compensation, with welfare gains outweighing costs, yielding a double dividend of improved execution quality and increased fee revenue for multi-dark-pool architectures.

\section{Conclusion} \label{sec:conclusion}
This study investigates optimal execution strategies in a hybrid market structure comprising one lit exchange and multiple dark pools. We demonstrate that utilizing this broader array of venues enables traders to reduce transaction costs while allowing exchanges to enhance overall market quality and diminish price impact. Employing a major-minor mean-field game framework, we build a competitive benchmark market without transaction fees. Our results robustly indicate that markets incorporating dynamic fees together with compensations for large market makers significantly outperform no-fee markets by effectively reducing market impact, even in a high competitive framework. These findings underscore that exchanges can strategically employ well-calibrated dynamic fee structures as a potent tool to regulate trading behavior and elevate market quality. Concurrently, traders operating in such hybrid environments must adapt their execution strategies to navigate the complexities introduced by these fee dynamics effectively.\vspace{0.5em}

For further studies, one possible way is to extend the current major-minor mean-field game model to incorporate a principal-agent structure would allow for the explicit modeling of interactions between a market regulator (principal) and strategic traders (major agents) competing with high-frequency traders (minor agents). The present model assumes that the major trader possesses perfect knowledge of the intensity of the Poisson process governing order arrivals and the distribution of liquidity across venues. A significant extension involves relaxing this assumption to reflect the realistic scenario where the major trader must estimate the latent liquidity arrival rate and posterior liquidity distribution from observable market data. This necessitates integrating forward inference with backward optimal control within a Forward-Backward Stochastic Differential Equation (FBSDE) framework. Such an approach would rigorously model the trader's simultaneous learning and strategic adaptation under uncertainty.

\newpage
\bibliographystyle{alpha}
\bibliography{dark_pools}

\clearpage
\appendix

\section{Additional numerical results}
\begin{figure}[htbp]
    \centering
    \subfloat[Inventory in regulated market $M=1$]{\label{fig:m1_inventory}
        \includegraphics[width=0.4\textwidth]{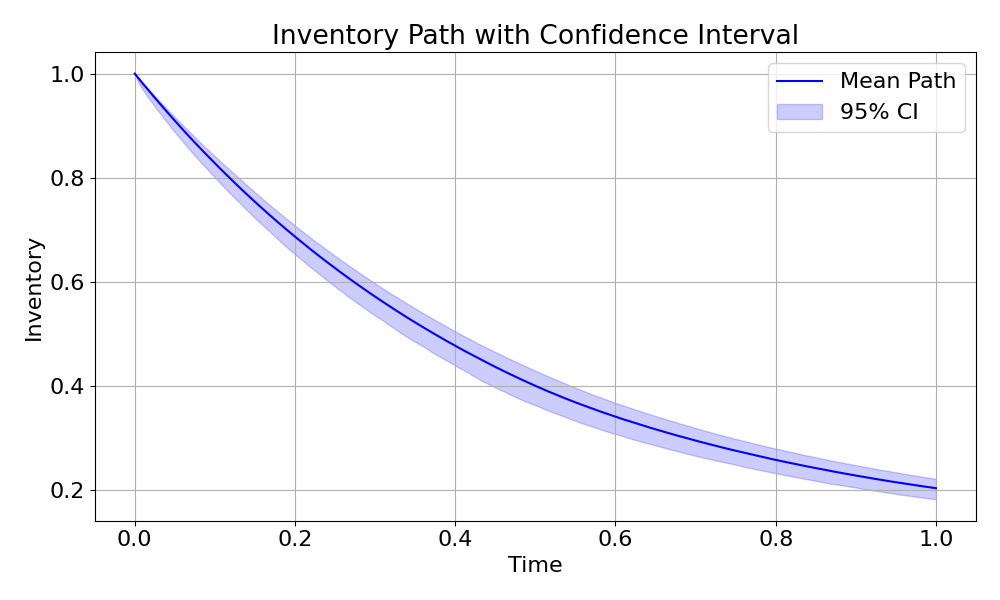}
    }
    \subfloat[Inventory in regulated market $M=2$]{\label{fig:m2_inventory}
        \includegraphics[width=0.4\textwidth]{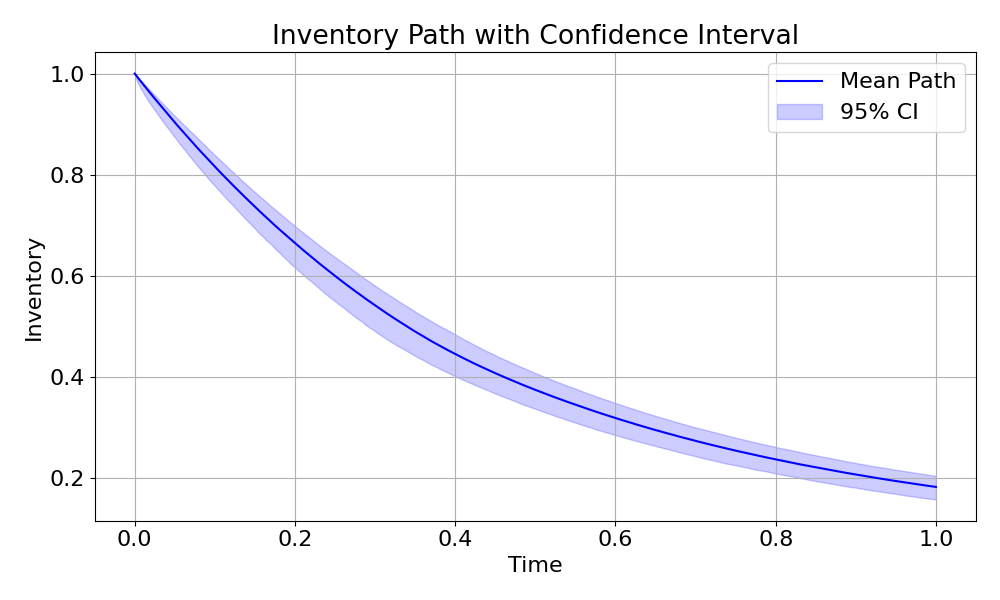}
    }  
    \hfill
    \subfloat[Inventory in competitive market]{\label{fig:m3fg_inventory}
        \includegraphics[width=0.4\textwidth]{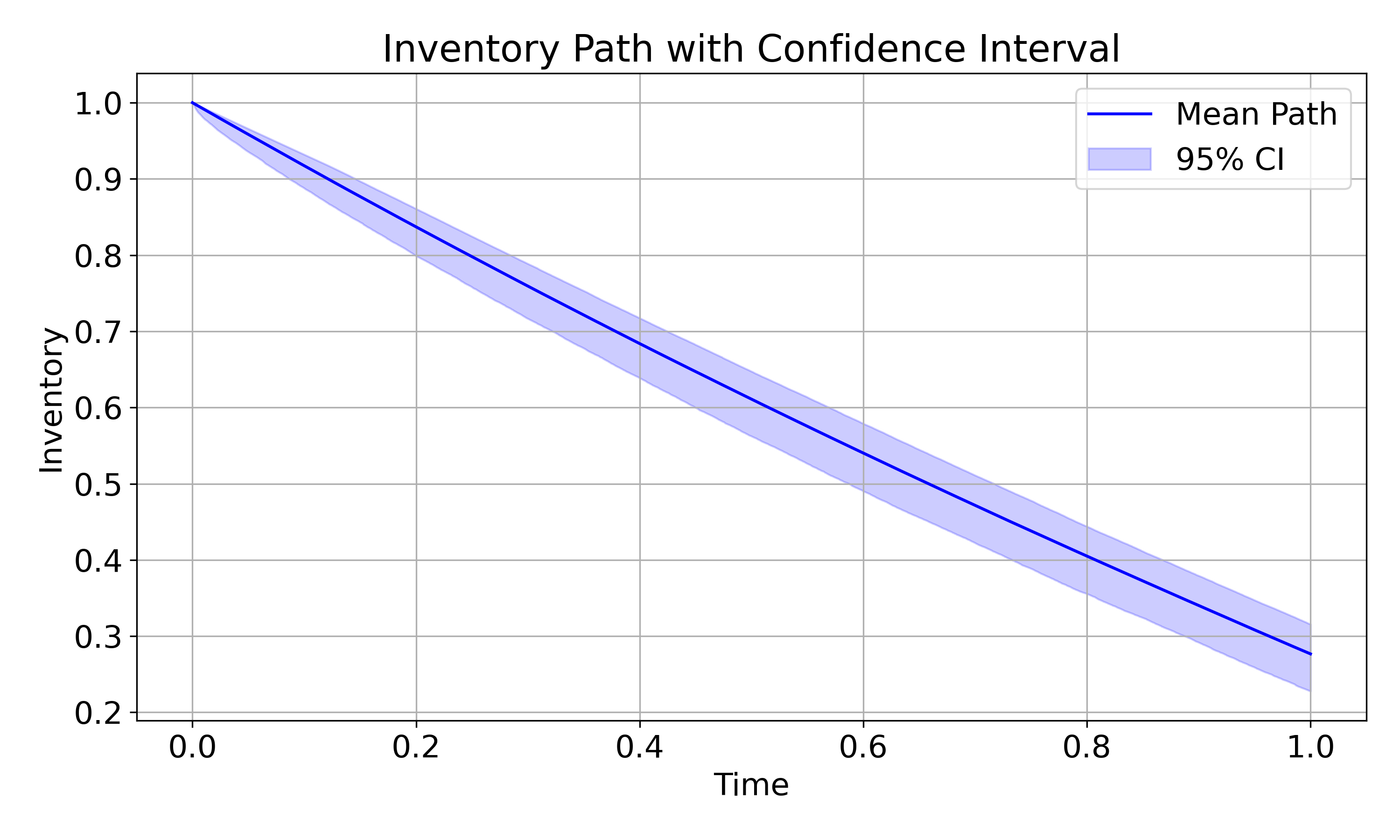}
    }  
    \caption{Inventory trajectory}
    \label{fig:inventory}
\end{figure}

\begin{figure}[htbp]
    \centering
    \subfloat[Price in regulated market $M=1$]{\label{fig:m1_price}
        \includegraphics[width=0.4\textwidth]{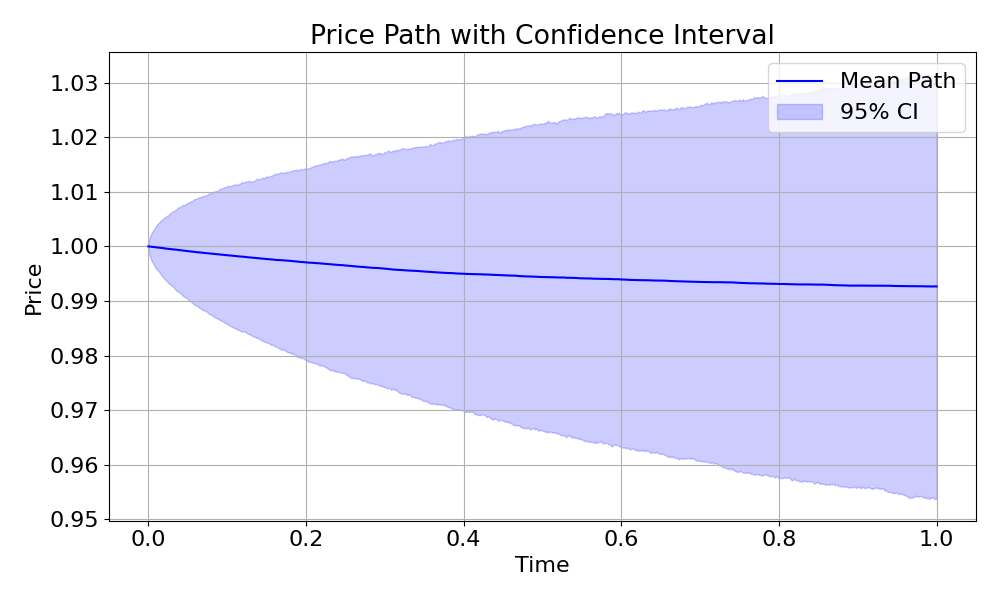}
    }
    \subfloat[Price in regulated market $M=2$]{\label{fig:m2_price}
        \includegraphics[width=0.4\textwidth]{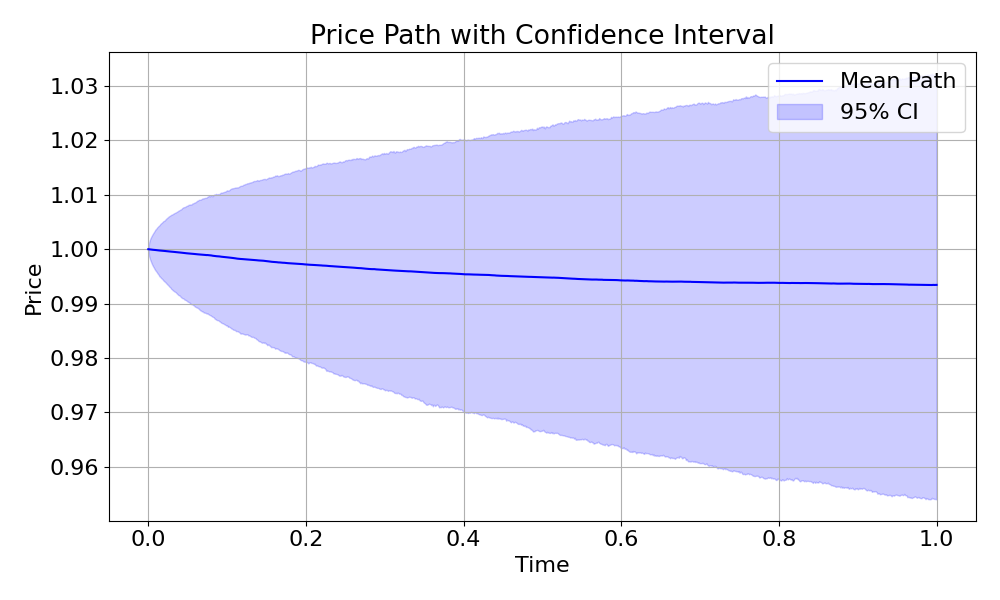}
    }  
    \hfill
    \subfloat[Price in competitive market]{\label{fig:m3fg_cash}
        \includegraphics[width=0.4\textwidth]{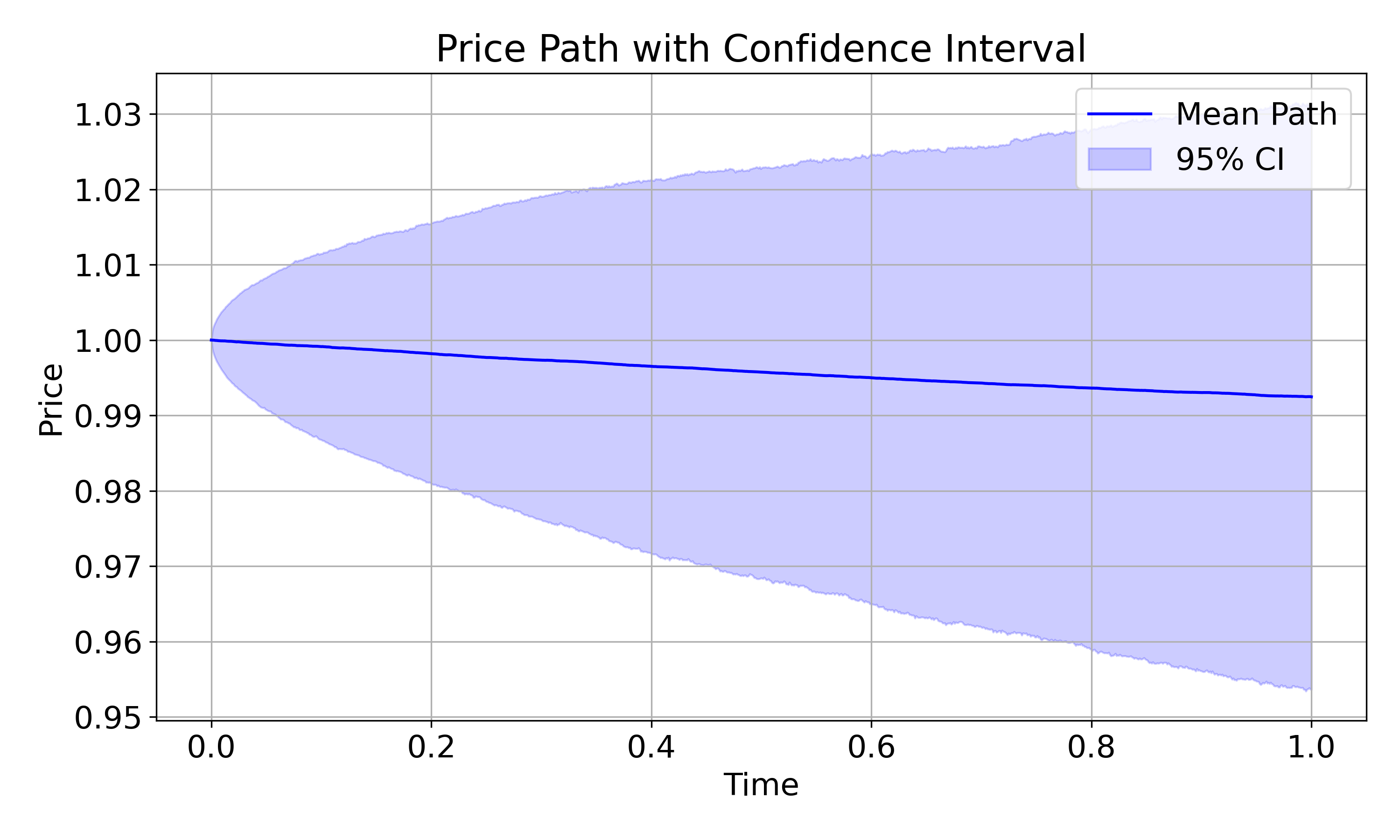}
    }  
    \caption{Price trajectory}
    \label{fig:price}
\end{figure}

\begin{figure}[htbp]
    \centering
    \subfloat[Executed volume in regulated market $M=1$]{\label{fig:m1_dp_vol}
        \includegraphics[width=0.4\textwidth]{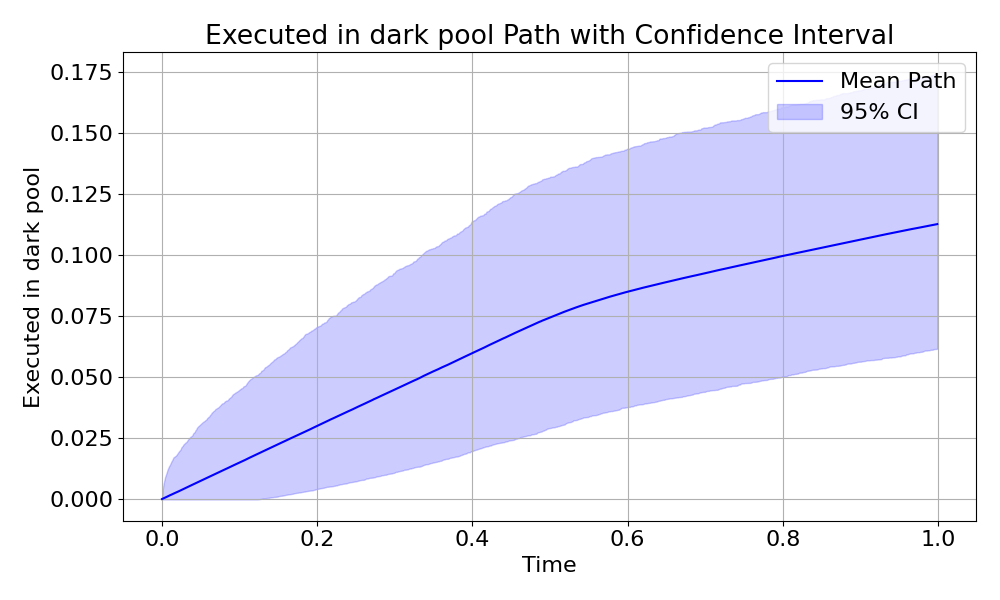}
    }
    \subfloat[Executed volume 1 in regulated market $M=2$]{\label{fig:m2_dp_vol1}
        \includegraphics[width=0.4\textwidth]{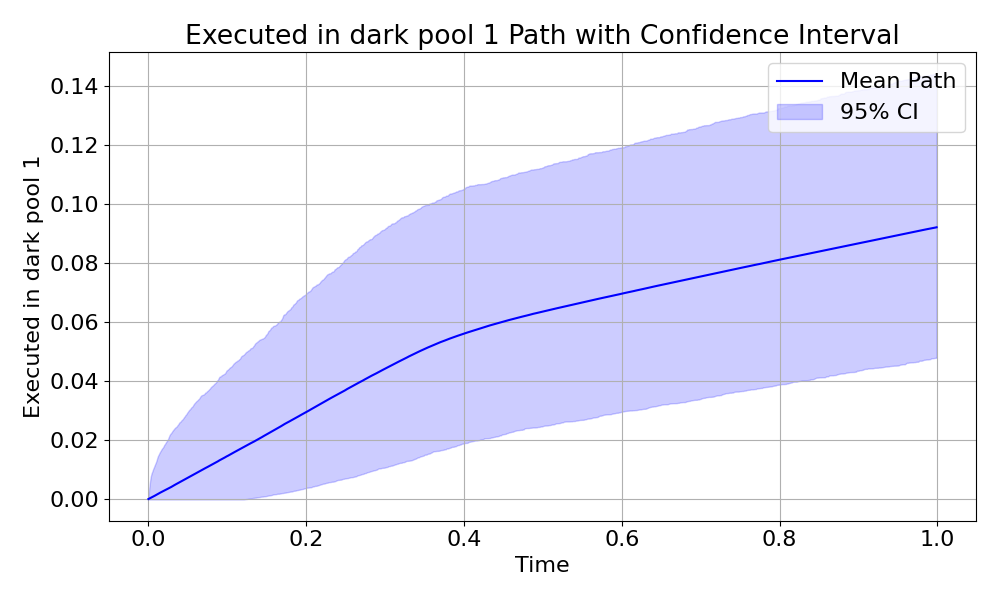}
    }  
    \hfill
    \subfloat[Executed volume 2 in regulated market $M=2$]{\label{fig:m2_dp_vol2}
        \includegraphics[width=0.4\textwidth]{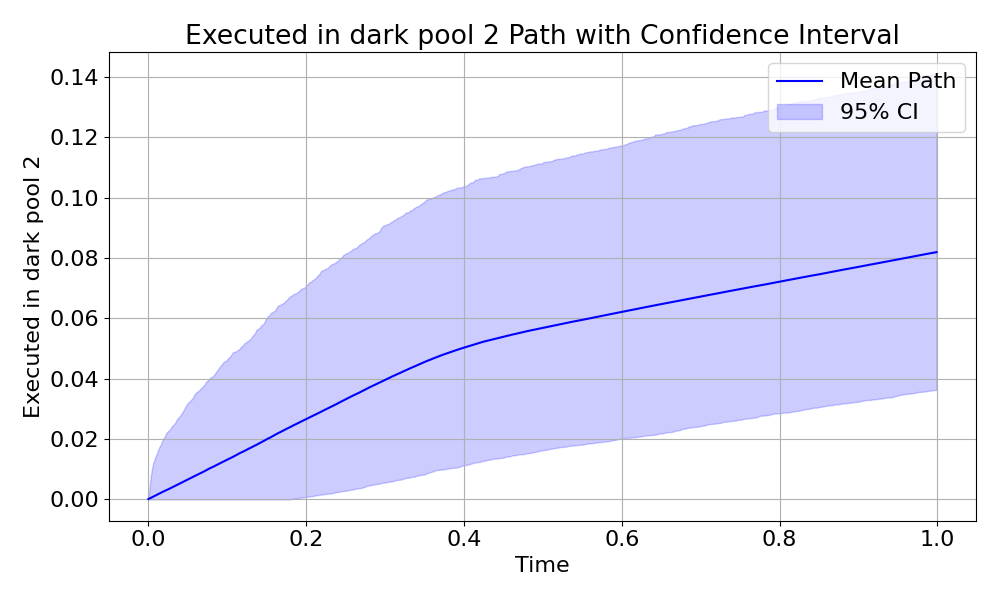}
    }
    \subfloat[Executed volume in competitive market]{\label{fig:m3fg_dp_vol}
        \includegraphics[width=0.4\textwidth]{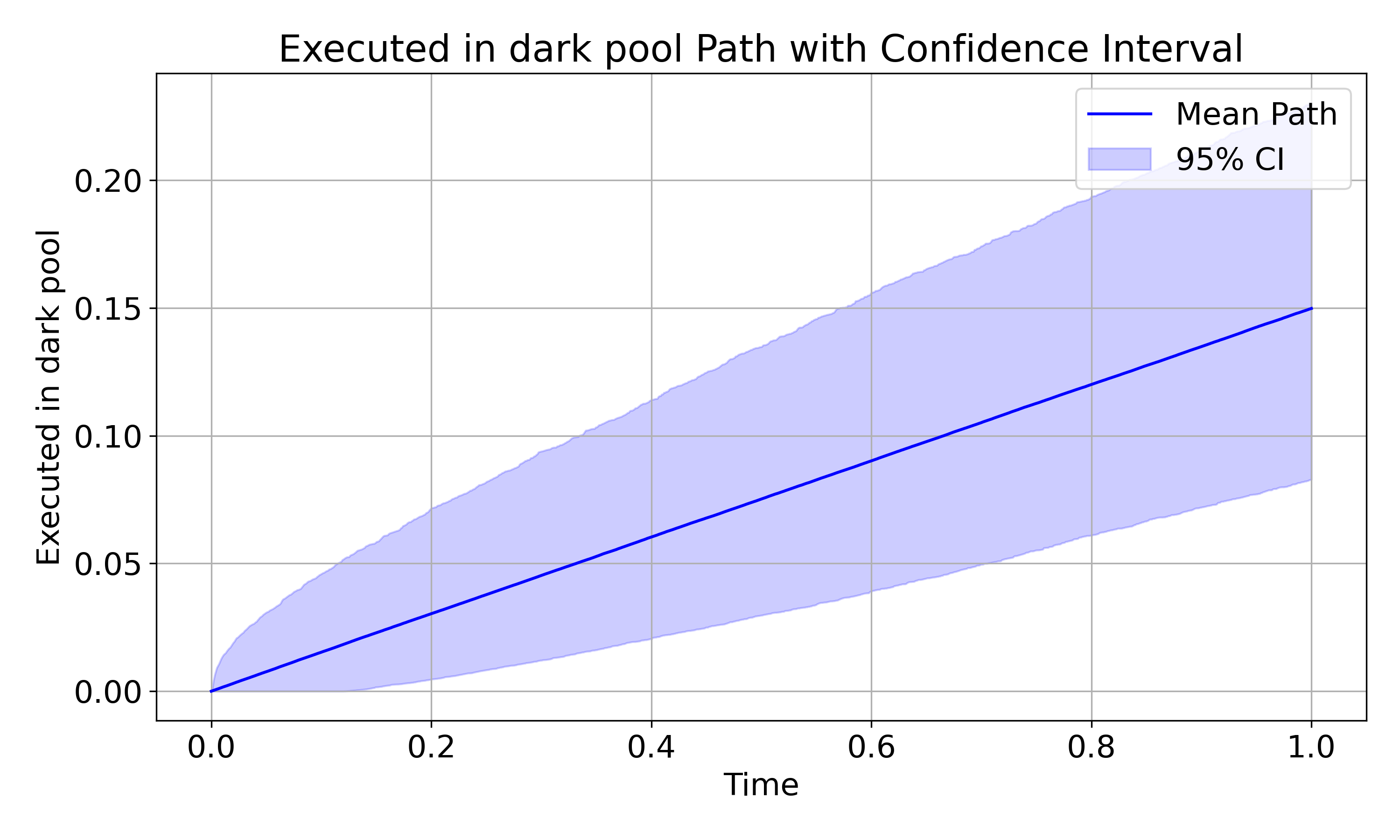}
    }  
    \caption{Executed volume in dark pools}
    \label{fig:volume_dp}
\end{figure}

\end{document}